\newtheorem{theorem}{Theorem}
\numberwithin{theorem}{section}
\newtheorem{corollary}[theorem]{Corollary}
\newtheorem{lemma}[theorem]{Lemma}
\newtheorem{proposition}[theorem]{Proposition}
\newtheorem{remark}[theorem]{Remark}
\newtheorem{definition}[theorem]{Definition}
\author[1]{Jakub Gajarsk\'y}
\author[2]{Maximilian Gorsky}
\author[2]{Stephan Kreutzer}
\affil[1]{University of Warsaw}
\affil[2]{Technical University Berlin}
\title{Differential games, locality and model checking for FO logic of graphs}
\date{}
\newenvironment{manualtheorem}[1]{%
  \manualtheoreminner
}{\endmanualtheoreminner}
\newdimen\arrowsize
\newlength{\arrowlength}
\newlength{\arrowangle}
\newlength{\arrowthickness}
\tikzstyle{vertex}=[circle,inner sep=1.5, outer sep=2, minimum size =5pt,semithick,fill=black, draw=black]
\tikzstyle{point}=[circle,inner sep=1,fill=black, draw=black]
 \tikzstyle{path2}=[-stealth,thin,decorate,%
\tikzstyle{brace}=[thin,decorate,decoration=brace]
\tikzstyle{ie}=[thin,dashed,gray]
\colorlet{fillA}{gray!50}
\colorlet{fillB}{gray!15}
\definecolor{dark-blue}{rgb}{0.05,0.25,0.85}
\newcommand{\N}{\mathbb{N}}
\def\cqedsymbol{\ifmmode$\lrcorner$\else{\unskip\nobreak\hfil
\penalty50\hskip1em\null\nobreak\hfil$\lrcorner$
\parfillskip=0pt\finalhyphendemerits=0\endgraf}\fi}
\newcommand{\CCC}{\mathcal{C}}
\newcommand{\DDD}{\mathcal{D}}
\newenvironment{cenv}{\begin{list}{}{%
      \setlength{\labelwidth}{1.5em}%
      \setlength{\leftmargin}{\labelwidth}%
      \addtolength{\leftmargin}{\labelsep}%
      \setlength{\listparindent}{0em}%
      \setlength{\topsep}{10pt}%
      \setlength{\itemsep}{5pt}%
      \setlength{\parsep}{0pt}%
    }
  }{
  \end{list}
}
\newcounter{claimcounter}
\newcounter{conditioncounter}
\renewenvironment{proof}[1][]
{\setcounter{claimcounter}{0}\ifthenelse{\equal{#1}{}}{\noindent\textit{Proof.
    }}{\noindent\textit{#1. }}}%
{\hspace*{1pt}\hfill$\Box$\par\bigskip}
\newenvironment{Definition.}{\begin{definition}}{\end{definition}}
\newenvironment{Theorem.}{\begin{theorem}}{\end{theorem}}
\newenvironment{Lemma.}{\begin{lemma}}{\end{lemma}}
\newenvironment{Notation.}{\begin{notation}}{\end{notation}}
\newenvironment{Proof.}{\begin{proof}}{\end{proof}}
\newenvironment{Corollary.}{\begin{corollary}}{\end{corollary}}
\let\svthefootnote\thefootnote
\begin{document}

\maketitle

\begin{abstract}
We introduce differential games for FO logic of graphs, a variant of Ehrenfeucht-Fra\"{i}ss\'e games in which the game is played on only one graph and the moves of both players restricted. We prove that, in a certain sense, these games are strong enough to capture essential information about graphs from graph classes which are interpretable in nowhere dense graph classes. This, together with the newly introduced notion of differential locality and the fact that the restriction of possible moves by the players makes it easy to decide the winner of the game in some cases, leads to a new approach to the FO model checking problem on interpretations of nowhere dense graph classes.

\end{abstract}

\section{Introduction}
\label{sec:intro}
\let\thefootnote\relax\footnotetext{Authors' e-mail addresses: gajarsky@mimuw.edu.pl, m.gorsky@tu-berlin.de,  kreutzer@tu-berlin.de}
\let\thefootnote\svthefootnote
The first-order (FO) model checking problem asks, given a graph $G$ and a sentence $\varphi$ as input, whether $G\models \varphi$. It is known that this problem is PSPACE-complete in general~\cite{stockmeyer1974complexity, vardi1982complexity}, but one can obtain efficient parameterised algorithms on many structurally restricted classes of graphs.

There has been a long line of research studying this problem on sparse graphs and  the existence of fpt algorithms was established for graphs of bounded degree~\cite{seese96}, graphs with locally bounded treewidth~\cite{frickg01}, graphs with a locally excluded minor~\cite{dawargk07}, bounded expansion graph classes~\cite{dkt13} and nowhere dense graph classes~\cite{gks17}. The positive results on non-sparse graphs fall into two categories. The first category are formed by somewhat isolated results such as~\cite{ghkost15, gajarskyetal15, FOgeom, mapgraphs} and the recent important and general result of~\cite{twinwidth}. The second category are positive results about graph classes which can be obtained from sparse graph classes by means of interpretations~\cite{bddeg,SBE} (although~\cite{mapgraphs} can also be put into this category).

One of the reasons why the research into the FO model checking has been so successful is that Gaifman's theorem~\cite{gaifman82} -- an important result which essentially states that FO logic is local -- is particularly useful in the case of sparse graphs.
Informally, Gaifman's theorem allows us to reduce the problem of determining whether a given FO formula $\varphi$ holds on a given graph $G$ to the problem of evaluating a formula $\psi(x)$ in the $r$-neighbourhood of each vertex of $G$. In case $G$ is a graph in which each vertex has a simple neighbourhood, one can evaluate $\varphi$ on $G$ efficiently. This idea leads to efficient algorithms for evaluating FO formulas on classes of graphs of bounded degree, planar graphs, and graphs with locally bounded treewidth.

One shortcoming of using Gaifman's theorem for evaluating FO formulas is that if a graph has an (almost) universal vertex, then for $r\ge 2$ the $r$-neighbourhood of any vertex is (almost) the whole graph, and therefore evaluating formulas locally on the $r$-neighbourhoods is essentially the same as evaluating them on the whole graph. Even worse, on complements of bounded degree graphs, it holds for every vertex $v$ that almost the whole graph is in the 1-neighbourhood of $v$. In such cases, one cannot use the locality-based approach directly, but has to complement the input graph $G$ to get the graph $\Bar{G}$ first, use locality on $\bar{G}$, and then translate the results back to $G$. In many cases when dealing with non-sparse graphs, there seems to be no good way how to use Gaifman's theorem at all, and either one uses a notion of locality tailor-made to the given situation (such as in~\cite{gajarskyetal15} or~\cite{twinwidth}) or does not use locality at all (for example the dynamic programming algorithm for FO (and even MSO) logic on graph classes of bounded treewidth).

In this paper we initiate a relativised approach to FO model checking, which is aimed to work on graph classes interpretable in nowhere dense graph classes and which avoids some of the issues mentioned above. Instead of focusing on the absolute notion of neighbourhood of a vertex $v$, we focus on the symmetric difference $D(u,v)$ of neighbourhoods of two vertices $u$ and $v$. Note that for the example of complements of graphs of degree at most $d$, the set $D(u,v)$ has size at most $2d+2$ while $N(v)$ is large. Thus, rather than trying to determine whether a given formula holds in the $r$-neighbourhood of any given vertex $v$ of a graph $G$, we reduce the FO model checking to the problem of evaluating formulas $\xi(x,y)$ on the \emph{differential $r$-neighbourhood} $DN_r(u,v)$ of any given pair $u,v$ of vertices of $G$. We first provide a naive definition of this notion and will give a revised definition later: $DN_1(u,v)$ is just $D(u,v)$ and for any $i>1$, the differential $i$-neighbourhood $DN_i(u,v)$ is $DN_{i-1}$ together with the union of all $D(a,b)$, where $a,b \in DN_{i-1}(u,v)$.
The number $r$ for which we will want to consider the differential neighbourhood $DN_r(u,v)$ depends on the input sentence $\varphi$. The formula $\xi_r(x,y)$ which we then want to evaluate on $DN_r(u,v)$ says "Duplicator wins the $r$-round differential game on $G$ starting from vertices $u$ and $v$", where a \emph{differential game} is newly defined version of  Ehrenfeucht-Fra\"{i}ss\'e game which is played between two vertices $u,v$ of a graph $G$ and in which the moves of the players are guaranteed to take place in $DN_r(u,v)$.

Our contributions can be briefly summarised as follows: 
\begin{enumerate}
    \item We show that the FO model checking can be reduced to deciding whether two vertices $u,v$ of a graph $G$ have the same $q$-type, i.e. whether $u \equiv_q v$. Note that this is not trivial -- even if we have access to the $q$-equivalence relation on $V(G)$, it is not clear which equivalence class corresponds to which $q$-type.
    \item  We introduce \emph{differential games} which are aimed at distinguishing vertices of different $q$-types and which are played on differential neighbourhoods. We prove that for every $q$ there exists $r$ such that the relation $u\cong^D_r v$ defined by "Duplicator wins the $r$-round differential game between $u$ and $v$" suitably approximates $\equiv_q$ on graph classes interpretable in nowhere dense graph classes. This leads to the following theorem:
\begin{manualtheorem}{6.9}
Let $\CCC$ be a class of labelled graphs interpretable in a nowhere dense class of graphs such that we can decide the winner of the $r$-round differential game in fpt runtime with respect to the parameter $r$. Then the FO model checking problem is solvable in fpt runtime on $\CCC$.
\end{manualtheorem}
    \item We then focus on graph classes on which it is possible to decide the winner of the $r$-round differential game efficiently. We make the definition of $DN_r(u,v)$ more useful by adjusting it using colours and then we define \emph{differentially simple} graph classes. These are graph classes in which each graph can be coloured with few colours in such a way that $DN_r(u,v)\cup \{ u,v \}$ comes from a class of graphs with an efficient model checking algorithm. We then show that classes of graphs interpretable in graph classes of locally bounded treewidth are differentially simple.
\end{enumerate}

\section{Overview of our approach}
\label{sec:overview}

As mentioned in the introduction, our relativised approach to FO model checking is based on determining whether two vertices $u,v$ of $G$ \emph{differ} from each other -- i.e.\ whether $u \not\equiv_q v$, which is the case whenever there is a formula $\psi(x)$ of quantifier rank $q$ such that $G\models \psi(u)$ but $G \not\models \psi(v)$. In Section~\ref{sec:mc} we show that if we can solve this problem efficiently, then we can solve the FO model checking problem efficiently as well. To be more precise, we show that if we can efficiently compute a relation $\sim_q$ on $V(G)$ such that the transitive closure of $\sim_q$ refines $\equiv_q$ and does not have too many classes, then we can construct an \emph{evaluation tree} of  size bounded in terms of $q$. This evaluation tree then allows us to determine whether $G \models \varphi$ for every sentence $\varphi$ in prenex normal form with $q$ quantifiers.
Our approach to efficiently determining whether two vertices differ, relies on the intuition that if they differ, then they will have to differ in their differential $r$-neighbourhoods.

To capture this intuition formally, we introduce \emph{semi-differential games} in Section~\ref{sec:semidiffgame}, which are a variant of the well-known Ehrenfeucht-Fra\"{i}ss\'e (EF) games. The semi-differential game is played in one graph only and if we are given two vertices $a_0,b_0$ as a starting position, we play the game with the Spoiler's moves being restricted in the following way: In his first move the Spoiler plays a vertex in $D(a_0,b_0)$ and declares the picked vertex to be either $a_1$ or $b_1$. The Duplicator picks her reply anywhere in $G$ and the picked vertex becomes $b_1$ or $a_1$ -- the `opposite' of the Spoiler's choice. More generally, in the $i$-th move, after the vertices $a_1,\ldots,a_{i-1}$ and $b_1,\ldots,b_{i-1}$ have been played, the Spoiler picks $j \in {0,\ldots, i-1}$ and a vertex in $D(a_j,b_j)$ and calls it $a_i$ or $b_i$. Again, the Duplicator replies by picking a vertex anywhere in $G$. The winner of the game is decided as in the usual EF game, by comparing the graphs induced by $(a_0, a_1, \ldots, a_m)$ and $(b_0, b_1, \ldots, b_m)$.
 
We show that there exists a function $l:\mathbb{N} \to \mathbb{N}$ such that if the Spoiler wins the standard $m$-round Ehrenfeucht-Fra\"{i}ss\'e game on a graph $G$ starting from $a_0$ and $b_0$, then he wins the $l(m)$-round semi-differential game starting from the same position. While semi-differential games have nice properties, the fact that the Duplicator's moves are not restricted in any way makes it hard to use them algorithmically.

With this in mind we introduce \emph{differential games} in Section~\ref{sec:diffgames} in which the Duplicator's moves are restricted to $D(a_j, b_j)$ as well. The connection to the notion of differential neighbourhoods now becomes clear -- the entire $r$-round differential game between two vertices $u$ and $v$ will take place in $DN_r(u,v)$. This observation seems to indicate that differential games might be more useful algorithmically.
We now briefly focus on the properties of differential games, to demonstrate that this intuition is correct (for certain graph classes). For every graph $G$ and every $r$ we can define the relation on $V(G)$ by setting $u \cong_r^D v$ if and only if Duplicator wins the $r$-round differential game starting from $u$ and $v$. Let us also denote by $u \equiv_q v$ whenever it is true that  for each formula $\psi(x)$ of  quantifier rank $q$ it holds that $G\models \psi(u)$ iff $G \models \psi(v)$. 
As mentioned before, to obtain an efficient FO model checking algorithm it is enough to be able to decide whether $u \equiv_q v$ efficiently. 
Our approach is based on the fact that for every $q$ there exists $r$ such that the transitive closure of $u \cong_r^D v$ is a refinement of $\cong_q$. (The need to use the transitive closure stems from the fact that unlike $\equiv_q$ the relation $\cong_r^D$ is not transitive.) Moreover, for any graph class $\CCC$ interpretable in a nowhere dense graph class the number of classes in the transitive closure of $\cong_r^D$ on any $G \in \CCC$ is bounded by a number depending only on $\CCC$ and $r$. 
 
 The above considerations tell us that all we need to do to obtain an efficient FO model checking algorithm on any class $\CCC$ of graphs interpretable in a nowhere dense graph class is to guarantee that we can decide the winner of the differential game efficiently on $G[DN_r[u,v]]$ for any $G \in \mathcal{C}$ and any $u,v \in V(G)$, where $G[DN_r[u,v]]$ is the \emph{closed} differential $r$-neighbourhood of $u$ and $v$. If $G[DN_r[u,v]]$ comes from a class of graphs with an efficient model checking algorithm, then it suffices to evaluate the formula $\xi_r(x,y)$, mentioned above, on $G[DN_r[u,v]]$. Thus it is enough to focus on classes of graphs such that for every $G$ and every $u,v$ their closed differential $r$-neighbourhood comes from a class of graphs with efficient FO model checking.
 It is easy to see that classes of graphs of locally bounded treewidth have this property (for any $u,v$ it holds that $DN_r[u,v] \subseteq (N_r[u] \cup N_r[v])$) and the same is true for any class of graphs $\CCC = \{\bar{G}~|~G \in \DDD\}$, where $\DDD$ is a class of graphs of locally bounded treewidth, and $\bar{G}$ denotes the complement of $G$.
 
 Aside from these simple examples, we want to be able to use our approach on richer classes of graphs, in particular on those interpretable in graph classes of locally bounded treewidth. However, it is easy to construct an example of such graph class in which there are graphs with arbitrary large treewidth (and even clique-width) and which contain many pairs of vertices $u,v$ with $DN_1[u,v] = D[u,v] = V(G)$. Fortunately, this can still be salvaged by colouring such graphs appropriately and extending the definition of differential $r$-neighbourhoods slightly, as we will show in Section~\ref{sec:diffsimple}. The idea is to colour the vertices of a graph with a bounded number of colours in such a way that, if $D(u,v)$ is too complicated, then $u$ and $v$ get different colours. This means that if the Duplicator replied to the Spoiler's move $u$ by playing $v$ then the game would be already lost for her at this point and $D(u,v)$ is irrelevant. Thus, the more useful definition of $DN_r(u,v)$ on coloured graph is as follows: $DN_1(u,v)$ is defined only for vertices of the same colour and is equal to $D(u,v)$. For $i >1$, $DN_i(u,v)$ is also defined only for vertices of the same colour and is equal $DN_{i-1}(u,v)$ together with the union of all $D(a,b)$ with $a,b \in DN_{i-1}(u,v)$ such that \emph{$a$ and $b$ have the same colour}. Again, one can see that for a graph with colours (modelled as unary relations) the differential $r$-neighbourhood defined this way contains all vertices relevant for deciding the winner of the $r$-round differential game on $G$.
 
 To illustrate that this idea has meaningful applications, we prove that for any class $\CCC$ of graphs interpretable in a graph class of locally bounded treewidth there exists $m$ such that for any $G \in \CCC$ it is possible to colour the vertices of $G$ by at most $m$ colours in such a way that $G[DN_r(u,v)]$ has bounded clique-width. However, we were unable to find a polynomial-time algorithm which would compute such colourings. This is similar to the results of~\cite{SBE} and~\cite{twinwidth}, in which the existence of a model checking algorithm is proven, provided that a suitable decomposition of the input graph is given.
 

\section{Preliminaries}
\label{sec:prelim}
We use standard notation from graph theory. All graphs in this paper are finite, undirected, simple, and without loops. The depth of a rooted tree $T$ is the largest number of edges on any leaf-to-root path in $T$ and we say that a node $p$ is at depth $i$ in $T$ if the  distance of $p$ from the root of $T$ is $i$.

By $A \Delta B$ we denote the symmetric difference of two sets $A$ and $B$ defined by $A \Delta B = (A \setminus B) \cup (B \setminus A)$.

\subsection{Logic}
\label{subsec:logic}
We assume familiarity with FO logic.
We refer to~\cite{EbbingFlum} or any standard logic textbook for precise definitions. Since in the paper we only work with finite, simple, undirected graphs, to simplify the exposition we define the notions from logic and model theory for the vocabulary $\sigma = \{E, \{L_a\}_{a \in Lab}\}$ of labelled graphs. Here $E$ is a binary relation symbol, $Lab$ is a finite set of labels and each $L_a$ is a unary predicate symbol.




We say that two graphs $G$ and $H$ are $m$-equivalent, denoted by $G \equiv_m H$, if they satisfy the same FO sentences of quantifier rank $m$. For every $m$ the relation $\equiv_m$ is an equivalence with finitely many classes.

The FO $q$-type of a tuple of vertices $\Bar{a} = (a_1,\ldots, a_k) \in V(G)^k$, for a given (labelled) graph $G$, is defined as the set of formulas $\mathrm{tp}_q^G(\Bar{a}) \coloneqq \{ \psi(x_1,\ldots,x_k) \in \mathrm{FO}[\sigma] \ | \ G \models \psi(a_1,\ldots,a_k) \text{ and } \psi \text{ has quantifier rank } q \}$, where $\sigma = \{ E \}$, or $\sigma = \{E, \{L_a\}_{a \in Lab}\}$, if $G$ is labelled with elements of $Lab$.

Using the notion of $q$-types, we can more generally define for the tuples $\Bar{v} \coloneqq (v_1, \ldots, v_k)$ and $\Bar{u} \coloneqq (u_1, \ldots, u_k)$, consisting of vertices from $G$, and respectively from $H$, that $(G, \Bar{v}) \equiv^k_q (H, \Bar{u})$ if and only if $\mathrm{tp}_q^G(\Bar{v}) = \mathrm{tp}_q^H(\Bar{u})$. We will mostly be interested in the case when $G = H$; whenever we write $\Bar{v} \equiv^k_q \Bar{u}$, it is understood that $\Bar{v}$ and $\Bar{u}$ come from the same graph $G$ which is clear from the context and should we want to refer to the relation itself and need to note the graph it is based upon, we will add the graph as an index, as in $\equiv_q^{k,G}$.
Note that there exist only a finite number of formulas with a given quantifier rank and number of free variables. Therefore there also only exist a finite number of $q$-types for any given number of free variables. Thus the graph of the relation $\equiv_q^{k,G}$ has a number of components (cliques) bounded by a number depending only on $q$ and $k$.

For a graph $G$ and a tuple $\bar{v} = (v_1,\ldots, v_k)$ of vertices of $G$, we define the relation $\equiv_q^{\bar{v}}$ on $V(G)$ by setting $u \equiv_q^{\bar{v}} w$ if and only if $(v_1,\ldots, v_k, u) \equiv_q (v_1,\ldots, v_k, w)$.


\subsection{Games}
\label{subsec:games}
 Let $G$ and $H$ be two graphs, and $m \in \mathbb{N}$. The $m$-round \emph{Ehrenfeucht-Fra\"{i}ss\'e game}~\cite{fraisse1950nouvelle, fraisse1955quelques, ehrenfeucht1961application} (or EF game for short), denoted by
$\mathcal{G}_m(G,H)$, is played by two players called 
the \emph{Spoiler} and the \emph{Duplicator}.
Each player has to make $m$ moves in the course of play, the
players take turns and the Spoiler goes first in each round. In his $i$-th move the Spoiler first selects a graph, $G$ or $H$, and a vertex in this graph. If the Spoiler chooses $v_i$ in $G$ then
the Duplicator in her $i$-th move must choose an element $u_i$ in $H$. If the Spoiler chooses $u_i$ in $H$ then Duplicator in her $i$-th move must choose an element $v_i$ in $G$. The Duplicator wins if $\iota(v_i)=u_i$ is a label preserving isomorphism from $G[\{v_1, \ldots, v_m\}]$
to $H[\{u_1, \ldots, u_m\}]$. Otherwise the Spoiler wins. We say that a player has a \emph{winning strategy},
or in short that he \emph{wins} $\mathcal{G}_m(G,H)$, if it is possible for him to win
each play whatever choices are made by his opponent. We denote the fact that the Duplicator wins the $m$-round EF game between graphs $G$ and $H$ by $G \cong_m H$. The relation $\cong_m$ is an equivalence with finitely many classes for every $m$.
EF games and $m$-equivalence are connected by the following theorem.

\begin{theorem}[Corollary 2.2.9 in~\cite{EbbingFlum}]
\label{thm:equiv}
Let $G$ and $H$ be graphs and $m \in \mathbb{N}$. Then
$G \equiv_m H$ if and only if $G \cong_m H$.
\end{theorem}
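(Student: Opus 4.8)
The plan is to prove the statement in the slightly stronger, parameterised form that is amenable to induction: for all $k,m\in\mathbb{N}$, all graphs $G,H$ and all tuples $\bar a=(a_1,\dots,a_k)\in V(G)^k$, $\bar b=(b_1,\dots,b_k)\in V(H)^k$, the Duplicator wins the $m$-round EF game played from the starting position $(\bar a,\bar b)$ (i.e.\ where $a_i,b_i$ count as already chosen in round $i$ and the players then make $m$ further moves) if and only if $(G,\bar a)\equiv^k_m(H,\bar b)$, that is $\mathrm{tp}^G_m(\bar a)=\mathrm{tp}^H_m(\bar b)$. Taking $k=0$ recovers the theorem. The proof goes by induction on $m$, and the engine for the nontrivial direction is the finite axiomatisability of $m$-types: each $m$-type in $k$ free variables over the fixed finite vocabulary $\sigma=\{E,\{L_a\}_{a\in Lab}\}$ is defined by a single formula of quantifier rank $m$, the Hintikka formula $\chi^m_{G,\bar a}(\bar x)$, set up by a parallel induction. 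Here $\chi^0_{G,\bar a}$ is the conjunction of all atomic and negated atomic formulas over $\bar x$ that hold of $\bar a$ (a finite set, since $\sigma$ is finite), and $\chi^{m+1}_{G,\bar a}(\bar x):=\bigwedge_{a\in V(G)}\exists y\,\chi^m_{G,\bar a a}(\bar x,y)\,\wedge\,\forall y\bigl(\bigvee_{a\in V(G)}\chi^m_{G,\bar a a}(\bar x,y)\bigr)$, which up to logical equivalence is a genuine finite conjunction/disjunction because there are only finitely many distinct $\chi^m$'s (equivalently, $m$-types) in $k+1$ variables.

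For the base case $m=0$: by definition the Duplicator wins the $0$-round game from $(\bar a,\bar b)$ iff $a_i\mapsto b_i$ is a label-preserving isomorphism $G[\{a_1,\dots,a_k\}]\to H[\{b_1,\dots,b_k\}]$, which holds iff $\bar a$ and $\bar b$ satisfy exactly the same atomic formulas, i.e.\ iff $\mathrm{tp}^G_0(\bar a)=\mathrm{tp}^H_0(\bar b)$. For the inductive step, assume the claim for $m$. For the direction $(\Leftarrow)$, suppose $(G,\bar a)\equiv^k_{m+1}(H,\bar b)$ and let the Spoiler move in round $k+1$, say by playing $a\in V(G)$ (a move in $H$ is symmetric). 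The type $\mathrm{tp}^G_m(\bar a a)$ is defined by $\chi^m_{G,\bar a a}(\bar x,y)$, so $G\models\exists y\,\chi^m_{G,\bar a a}(\bar a,y)$; this formula has quantifier rank $m+1$, hence $(G,\bar a)\equiv^k_{m+1}(H,\bar b)$ gives $H\models\exists y\,\chi^m_{G,\bar a a}(\bar b,y)$, and any witness $b$ satisfies $(G,\bar a a)\equiv^{k+1}_m(H,\bar b b)$. The Duplicator replies with $b$, and by the induction hypothesis she wins the remaining $m$-round game. For $(\Rightarrow)$, suppose the Duplicator wins the $(m+1)$-round game from $(\bar a,\bar b)$; we show the two tuples satisfy the same formulas of quantifier rank $\le m+1$ by induction on formula structure. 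The atomic case is covered by the base-case reasoning, Boolean connectives are immediate, and $\forall$ reduces to $\exists$ via negation, so it suffices to handle $\varphi(\bar x)=\exists y\,\vartheta(\bar x,y)$ with $\vartheta$ of quantifier rank $\le m$: if $G\models\varphi(\bar a)$, pick a witness $a$, let the Spoiler play $a$, and let $b$ be the Duplicator's prescribed response; she still wins the remaining $m$-round game, so by the induction hypothesis $(G,\bar a a)\equiv^{k+1}_m(H,\bar b b)$, hence $H\models\vartheta(\bar b,b)$ and $H\models\varphi(\bar b)$; symmetry closes the case.

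The main obstacle is the bookkeeping around finite axiomatisability of $m$-types: one must verify that over the fixed finite vocabulary $\sigma$ there are, up to logical equivalence, only finitely many formulas of quantifier rank $\le m$ in any given set of free variables, so that the Hintikka formulas $\chi^m_{G,\bar a}$ really are honest (finite) $\FO$ formulas and every type is captured by one of them; and one must carry out the routine reduction of an arbitrary quantifier-rank-$(m{+}1)$ formula to the $\exists$-case treated above (or, equivalently, a prenex/normal-form step). Everything else is a direct unwinding of the definitions of the game and of $\equiv_m$.
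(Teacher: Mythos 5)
Your proof is correct: the parameterised induction on the number of rounds, with Hintikka formulas supplying the finite axiomatisability of $m$-types for the direction from type-equality to a Duplicator strategy, and structural induction on formulas for the converse, is the standard argument. The paper itself offers no proof of this statement --- it is quoted as Corollary 2.2.9 of the cited textbook of Ebbinghaus and Flum --- and your argument is essentially the one given there, so there is nothing to reconcile beyond the bookkeeping you already flag (finitely many rank-$\le m$ formulas up to equivalence over the finite vocabulary, and the reduction of an arbitrary rank-$(m{+}1)$ formula to the existential case).
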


A \emph{position} in $\mathcal{G}_m(G,H)$ is $((v_1, \ldots, v_k),(u_1, \ldots, u_{k'}))$, where each $v_i$ is from  $V(G)$, each $u_i$ is from  $V(H)$, and it holds that $k,k' \le m$ and $|k-k'| \le 1$. If $|k-k'|=0$ then it is the Spoiler's move, otherwise it is the Duplicator's move.

Let $G$ and $H$ be  graphs, $(v_1, \ldots, v_k)$ a tuple of vertices of $G$ and $(u_1, \ldots, u_k)$ a tuple of vertices of $H$. For every $m$ we can play the $m$-round EF game between $(v_1, \ldots, v_k)$ and $(u_1, \ldots, u_k)$, denoted as $\mathcal{G}_m((G, v_1, \ldots, v_k),(H, u_1, \ldots, u_k))$, by considering the $(k+m)$-round EF game between $G$ and $H$ in which the position $((v_1, \ldots, v_k),(u_1, \ldots, u_k))$ has been reached and starting the play from this position. If the Duplicator wins the $m$-round game between $(v_1, \ldots, v_k)$ and $(u_1, \ldots, u_k)$, we denote this by $(G, v_1, \ldots, v_k) \cong^k_m (H, u_1, \ldots, u_k)$. The following more general version of Theorem~\ref{thm:equiv} connects relations $\equiv^k_m$ and $\cong^k_m$.

\begin{theorem}[Theorem 2.2.8 in~\cite{EbbingFlum}]
\label{thm:games_types}
Let $G$ and $H$ be  graphs, $(v_1, \ldots, v_k)$ a tuple of vertices of $G$, $(u_1, \ldots, u_k)$ a tuple of vertices of $H$, and $m$ a non-negative integer.
Then $(G, v_1, \ldots, v_k) \equiv^k_m (H, u_1, \ldots, u_k)$ if and only if $(G, v_1, \ldots, v_k) \cong^k_m (H, u_1, \ldots, u_k)$.
\end{theorem}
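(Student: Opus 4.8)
The plan is to prove both implications simultaneously by induction on $m$, which is the classical argument behind Theorem 2.2.8 of~\cite{EbbingFlum}; I reproduce it here for completeness. Write $\bar v = (v_1,\ldots,v_k)$ and $\bar u = (u_1,\ldots,u_k)$. For the base case $m=0$, the $0$-round game (started from the position $(\bar v,\bar u)$) is won by the Duplicator precisely when the correspondence $v_i \mapsto u_i$ is a label-preserving partial isomorphism between $G$ and $H$; on the logic side, $(G,\bar v)\equiv^k_0 (H,\bar u)$ says that $\bar v$ and $\bar u$ satisfy the same quantifier-free formulas over the finite relational vocabulary $\sigma$, which is equivalent to the very same statement. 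So the two sides agree when $m=0$. One could alternatively deduce the whole theorem from Theorem~\ref{thm:equiv} by treating the already-played vertices as new constants and passing to the expanded structures, but the direct induction is self-contained, so I would take that route.

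For the inductive step, assume the equivalence holds for $m-1$ and all tuple lengths. First I would establish the direction $\cong^k_m\Rightarrow\equiv^k_m$ by a subsidiary induction on the construction of a formula $\psi(x_1,\ldots,x_k)$ of quantifier rank at most $m$. The atomic case follows from the base case applied inside the won game, and the Boolean cases are immediate. For $\psi = \exists x\,\theta(\bar x,x)$ with $\theta$ of quantifier rank at most $m-1$: if $G\models\psi(\bar v)$, fix a witness $a\in V(G)$, let the Spoiler play $a$ in $G$, and let $b\in V(H)$ be the Duplicator's winning reply, so that the Duplicator wins the remaining $(m-1)$-round game from $((\bar v,a),(\bar u,b))$, i.e.\ $(G,\bar v,a)\cong^{k+1}_{m-1}(H,\bar u,b)$; the subsidiary inductive hypothesis then gives $H\models\theta(\bar u,b)$, hence $H\models\psi(\bar u)$. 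The converse (witness in $H$) is symmetric, using that a Duplicator win also covers Spoiler moves in $H$.

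For the reverse direction $\equiv^k_m\Rightarrow\cong^k_m$, I would invoke the standard fact that for every rank $r$ and every tuple length $\ell$ there are, up to logical equivalence, only finitely many FO formulas of quantifier rank $r$ with free variables among $x_1,\ldots,x_\ell$, and that the $r$-type of any tuple is axiomatised by a single such formula (its Hintikka formula). Granting this, suppose $(G,\bar v)\equiv^k_m(H,\bar u)$ and that the Spoiler plays some $a\in V(G)$ (a move in $H$ being symmetric). Let $\chi(\bar x,y)$ be a Hintikka formula of quantifier rank $m-1$ for the tuple $(\bar v,a)$, so $G\models\chi(\bar v,a)$ and hence $G\models\exists y\,\chi(\bar v,y)$, a sentence-with-parameters of quantifier rank $m$; by hypothesis $H\models\exists y\,\chi(\bar u,y)$, so pick a witness $b\in V(H)$. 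Since $\chi$ axiomatises the $(m-1)$-type, $(G,\bar v,a)\equiv^{k+1}_{m-1}(H,\bar u,b)$, and the outer inductive hypothesis hands the Duplicator a winning strategy for the remaining $(m-1)$ rounds from $((\bar v,a),(\bar u,b))$. Thus $b$ is a winning reply; iterating over all $m$ rounds shows $(G,\bar v)\cong^k_m(H,\bar u)$.

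The step that needs genuine care is the auxiliary lemma used in the last paragraph: the finiteness, up to equivalence, of formulas of bounded quantifier rank and the existence of Hintikka formulas axiomatising types. This is proved by its own induction on the rank — at rank $0$ there are finitely many atoms over the finite vocabulary $\sigma$, and at rank $r$ one forms Boolean combinations of rank-$0$ atoms together with formulas $\exists x\,\chi$ where $\chi$ ranges over the finitely many rank-$(r-1)$ formulas in one extra variable — and it is the single place where finiteness of $\sigma$ is essential; everything else is routine back-and-forth bookkeeping. I expect this auxiliary lemma to be the main (though entirely standard) obstacle. If one wishes to avoid naming Hintikka formulas, the reverse direction can instead be argued by contradiction: if no reply $b$ worked, then for each $b\in V(H)$ there is a rank-$(m-1)$ formula separating $(\bar v,a)$ from $(\bar u,b)$, and the existential quantification of a suitable finite conjunction of these is a rank-$m$ formula true at $\bar v$ but false at $\bar u$ — but this still rests on the same finiteness fact.
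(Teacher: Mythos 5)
The paper does not prove this statement at all: it is imported verbatim as Theorem 2.2.8 of Ebbinghaus--Flum, so there is no in-paper argument to compare against. Your reconstruction is the standard Ehrenfeucht--Fra\"{i}ss\'e back-and-forth proof (base case via partial isomorphisms, inductive step via Hintikka formulas and the finiteness, up to equivalence, of formulas of bounded quantifier rank over a finite vocabulary --- a fact the paper itself also records in its preliminaries), and it is correct.
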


Again we will be mostly interested in the case when $G=H$; whenever we write $(v_1, \ldots, v_k) \cong^k_m (u_1, \ldots, u_k)$ it is understood that $(v_1, \ldots, v_k)$ and $(u_1, \ldots, u_k)$ come from the same graph $G$ which is clear from the context.
When comparing two concrete tuples, we will write $\cong_m$ instead of $\cong_m^k$, since $k$ can be inferred from from the context and we will apply the same rationale to $\equiv_m^k$ as well.


\subsection{Interpretations}
\label{subsec:inter}
Let $\psi(x,y)$ be an FO formula with two free variables over the language
of (possibly labelled) graphs such that for any graph and any $u,v$ it holds
that $G \models \psi(u,v) \Leftrightarrow G \models \psi(v,u)$ and
$G \not\models \psi(u,u)$, i.e. the relation on $V(G)$ defined by the
formula is symmetric and irreflexive. From now on we will assume that
formulas with two free variables are symmetric and irreflexive
(which can easily be enforced). 
Given a graph $G$, the formula
$\psi(x,y)$ maps $G$ to a graph $H = I_{\psi}(G)$ defined by $V(H) = V(G)$ and
$E(H) = \{\{u,v\}~|~G \models \psi(u,v) \}$.  We then say that the graph $H$ is
\emph{interpreted} in $G$. Notice that even though the graph $G$ can be
labelled, our graph $H$ is not. 
This is to simplify our notation -- nevertheless, 
one may easily inherit labels from $G$ to $H$ if needed.

The notion of interpretation can be extended to graph classes as well. To a graph
class $\mathcal{C}$ the formula $\psi(x,y)$ assigns the graph class
$\mathcal{D} = I_{\psi}(\mathcal{C}) = \{H\>|~H=I_{\psi}(G),\, G \in
\mathcal{C}\}$.
We say that a graph class $\mathcal{D}$ is \emph{interpretable} in a graph class
$\mathcal{C}$ if there exists formula $\psi(x,y)$ such that
$\mathcal{D} \subseteq I_{\psi}(\mathcal{C})$.  
Note that
we do not require $\mathcal{D} = I_{\psi}(\mathcal{C})$, as we just want every
graph from $\mathcal{D}$ to have a preimage in $\mathcal{C}$.

\subsection{Gaifman's theorem}
\label{subsec:gaifman}
An FO formula $\phi(x_1, \ldots, x_l)$ is $r$-\emph{local}, sometimes denoted by
$\phi^{(r)}(x_1, \ldots, x_l)$, if for every graph $G$ and all $v_1, \ldots,
v_l \in V(G)$ it holds
$G \models \phi(v_1, \ldots, v_l) \Longleftrightarrow \bigcup_{1 \le i \le l}
N_r^{G}(v_i) \models \phi(v_1, \ldots, v_l)$, where $N_r^{G}(v)$ is the subgraph
of $G$ induced by $v$ and all vertices of distance at most $r$ from $v$.

\begin{theorem}[Gaifman's theorem, \cite{gaifman82}]\label{thm:Gaifman}
Every first-order formula with free variables $x_1, \ldots, x_l$ is equivalent to a
Boolean combination of the following
\begin{itemize}
\item Local formulas $\phi^{(r)}(x_1, \ldots, x_l)$ around $x_1, \ldots, x_l$,
and
\item Basic local sentences, i.e. sentences of the form
\end{itemize}
$$ 
	\exists x_1 \ldots \exists x_k	\left(\bigwedge_{1 \le i < j \le k} 
		dist(x_i,x_j) > 2r 
	\land \bigwedge_{1 \le i \le k} \phi^{(r)}(x_i) \right)
.$$
\end{theorem}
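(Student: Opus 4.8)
The statement to be established is the classical Gaifman normal form theorem, and the plan is to prove its quantitative refinement by induction on the quantifier rank $q$: \emph{every formula $\varphi(x_1,\dots,x_l)$ of quantifier rank $q$ is equivalent to a Boolean combination of $r$-local formulas around $x_1,\dots,x_l$ and basic local sentences, where $r$ and all parameters (radii, numbers of witnesses) of the basic local sentences are bounded by a function of $q$; one can take $r\le 7^q$.} Boolean connectives are free, since a Boolean combination of building blocks of the allowed shape is again of that shape, so the entire argument concerns the existential step $\varphi(\bar x)=\exists y\,\psi(\bar x,y)$ with $\psi$ of quantifier rank $q-1$.

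First I would apply the induction hypothesis to $\psi$, put the resulting Boolean combination into disjunctive normal form, and move the basic local sentences — which mention neither $\bar x$ nor $y$ — out in front of $\exists y$. This reduces the problem to rewriting $\exists y\,\gamma(\bar x,y)$ where $\gamma$ is a conjunction of $s$-local formulas and their negations around $\bar x y$ (with $s\le 7^{q-1}$ the radius from the induction hypothesis), hence itself $s$-local around $\bar x y$. I then split according to whether $y$ lies close to $\bar x$: with $\rho(\bar x,y)$ saying $\operatorname{dist}(x_i,y)\le 2s+1$ for some $i$, we have $\exists y\,\gamma\equiv\exists y(\rho\wedge\gamma)\vee\exists y(\neg\rho\wedge\gamma)$. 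In the first disjunct all relevant vertices and all distances witnessing $\rho$ are confined to $N_{3s+1}(\bar x)$, so (after the routine check that passing to an induced subgraph does not shorten the relevant distances) $\exists y(\rho\wedge\gamma)$ is a $(3s+1)$-local formula around $\bar x$. In the second disjunct $\neg\rho$ forces $N_s(\bar x)$ and $N_s(y)$ to be vertex-disjoint with no edges between them, so $N_s(\bar x y)=N_s(\bar x)\sqcup N_s(y)$; a standard composition argument for disjoint unions (Duplicator wins the EF game on a disjoint union by combining her strategies on the parts) shows that, under $\neg\rho$, $\gamma(\bar x,y)$ is equivalent to a disjunction $\bigvee_j\bigl(\alpha_j(\bar x)\wedge\beta_j(y)\bigr)$, where $\alpha_j$ is the $s$-local formula around $\bar x$ fixing the $(q-1)$-type of $\bar x$ in $N_s(\bar x)$ and $\beta_j(y)$ is the analogous $s$-local formula in the single variable $y$.

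Distributing the quantifier, it remains to rewrite $\exists y\bigl(\bigwedge_i\operatorname{dist}(x_i,y)>2s+1\wedge\beta(y)\bigr)$ for an $s$-local, hence $(2s+1)$-local, formula $\beta$. Here I would argue by a scattered-set pigeonhole: if $G$ contains $l+1$ vertices satisfying $\beta$ that are pairwise at distance more than $2(2s+1)$ — which is exactly a basic local sentence with radius $2s+1$ — then each $x_i$ is within distance $2s+1$ of at most one of them, so at least one of these $l+1$ vertices is at distance more than $2s+1$ from every $x_i$ and the formula holds. Conversely, when $G$ has no such large scattered family, a maximal scattered family of $\beta$-vertices has bounded size, every $\beta$-vertex lies in one of boundedly many radius-$2(2s+1)$ balls around its members, and one finishes by a further case analysis that counts $\beta$-vertices inside bounded-radius balls; this last piece is again expressible by local formulas around $\bar x$ and basic local sentences and is handled by recursing on the quantifier rank. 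Collecting the $(3s+1)$-local formula from the close case, the $\alpha_j$, and the formulas and basic local sentences from the far case, and taking the disjunction over $j$ and over the disjunctive normal form, yields the required Boolean combination; the locality radius grows by a bounded factor per quantifier, giving $r\le 7^q$.

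The main obstacle is precisely this last step: genuine basic local sentences only provide the scattered-existence quantifier, not plain counting $\exists^{\ge k}y\,\beta(y)$, so the regime in which there are only a few scattered witnesses of $\beta$ must be unfolded carefully — one has to account for several $\beta$-vertices clustering together without being able to name their cluster centres, which is what forces the recursion and is responsible for the exponential growth of the locality radius. The disjoint-union composition lemma and the locality check in the close case, by contrast, are routine.
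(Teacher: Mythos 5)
The paper does not prove this statement: Gaifman's theorem is quoted from the literature with a citation, so there is no in-paper argument to compare against, and your proposal has to be judged as a free-standing proof of the classical result. Its skeleton is the standard one (induction on quantifier rank, the existential step as the only nontrivial case, the near/far split on $\mathrm{dist}(\bar x,y)$, locality of the near disjunct, and a Feferman--Vaught-style composition over the disjoint union $N_s(\bar x)\sqcup N_s(y)$ in the far disjunct), and all of those parts are correct and routine, as you say.

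The genuine gap is the one you half-acknowledge at the end: the elimination of $\exists y\bigl(\bigwedge_i \mathrm{dist}(x_i,y)>2s+1\wedge\beta(y)\bigr)$ in the regime where no $(l+1)$-element $2(2s+1)$-scattered set of $\beta$-points exists. This is not a loose end of the proof --- it \emph{is} the proof; everything before it is bookkeeping. Your plan for it ("a further case analysis that counts $\beta$-vertices inside bounded-radius balls \ldots handled by recursing on the quantifier rank") does not work as stated: at this point $\beta$ is a fixed $s$-local formula, the quantifier being eliminated is a single $\exists y$, and no quantity of quantifier rank decreases, so there is nothing to recurse on; moreover plain counting $\exists^{\ge k}y\,\beta(y)$ is not a basic local sentence, which is exactly the difficulty you name. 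The standard resolutions (Gaifman's original argument, Ebbinghaus--Flum, Libkin) instead case-split on the exact maximum size $e\le l$ of a $2(2s+1)$-scattered set of $\beta$-points --- each case being a Boolean combination of the basic local sentences $\theta_e$ and $\neg\theta_{e+1}$ --- and then run an auxiliary induction on $l$ (the number of free variables the witness must avoid), showing that in the case "$e$ clusters" the existence of a far witness is equivalent to a Boolean combination of local formulas around $\bar x$ of a larger radius and further basic local sentences; this is also where the radius blow-up to roughly $7^q$ actually comes from. Until that induction is set up and carried out, the proposal is a correct map of the proof rather than a proof.
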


We will need the following simple corollary of Gaifman's theorem, in which we denote by $tp_q^r(v)$ the $r$-local $q$-type of $v$, i.e. the set of all $r$-local formulas $\psi(x)$ of quantifier rank $q$ such that $G \models \psi(v)$.
\begin{corollary}
\label{cor:gaifman}
  For every formula $\psi(x,y)$ there exist numbers $r$ and $q$ such that for every graph $G$ the following holds: If $u$ and $v$ are two vertices of $G$ such that the distance between them is more than $2r$, then whether $G \models \psi(u,w)$ depends only on $tp_q^r(v)$ and $tp_q^r(v)$.
\end{corollary}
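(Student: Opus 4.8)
The plan is to derive this as a direct consequence of Gaifman's theorem applied to the formula $\psi(x,y)$, treating the two free variables essentially as two disjoint "local centres" once $u$ and $w$ are far apart. First I would invoke Theorem~\ref{thm:Gaifman} to write $\psi(x,y)$ as a Boolean combination $B$ of $r_0$-local formulas $\phi^{(r_0)}(x,y)$ around the free variables and basic local sentences with locality radius $r_0$, for some $r_0$ depending only on $\psi$. Each basic local sentence is itself a sentence (no free variables), so its truth value in $G$ is determined by the isomorphism type of $G$ and in particular is the same regardless of $u$ and $w$; it contributes nothing to the dependence we are analysing, so the only terms that matter are the local formulas $\phi^{(r_0)}(x,y)$.

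The key step is then to show that, when $\mathrm{dist}_G(u,w) > 2r_0$, the truth of an $r_0$-local formula $\phi^{(r_0)}(u,w)$ depends only on the pair of $r_0$-balls $N_{r_0}^G(u)$ and $N_{r_0}^G(w)$ together with the marked centres, and that these balls are disjoint. Disjointness is immediate: if they shared a vertex then $u$ and $w$ would be at distance at most $2r_0$. By $r_0$-locality, $G \models \phi^{(r_0)}(u,w)$ iff $\big(N_{r_0}^G(u) \cup N_{r_0}^G(w)\big) \models \phi^{(r_0)}(u,w)$, and since the two balls are disjoint this induced subgraph is the disjoint union of $N_{r_0}^G(u)$ (with $u$ marked) and $N_{r_0}^G(w)$ (with $w$ marked). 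Hence, by the standard composition lemma for FO types over disjoint unions (an easy consequence of Theorem~\ref{thm:games_types}: Duplicator can play the two components independently), whether this disjoint union satisfies $\phi^{(r_0)}(u,w)$ is determined by the pair $\big(\mathrm{tp}_q^{r_0}(u), \mathrm{tp}_q^{r_0}(w)\big)$, where $q$ is the quantifier rank of $\psi$ — indeed the $r_0$-local $q$-type of a vertex records exactly which $r_0$-local formulas of quantifier rank at most $q$ hold at it, and $\phi^{(r_0)}$ has quantifier rank at most $q$. Taking $r \coloneqq r_0$ and this $q$ yields the statement: the truth value of each surviving term of $B$, and hence of $\psi(u,w)$ itself, is a function of $\mathrm{tp}_q^r(u)$ and $\mathrm{tp}_q^r(w)$ alone.

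The main obstacle — more bookkeeping than genuine difficulty — is making precise that the $r_0$-local $q$-type of a single vertex really does determine the satisfaction of a two-variable $r_0$-local formula on the disjoint union of two balls. The clean way is the composition/Feferman–Vaught style argument for disjoint unions phrased via EF games: given winning strategies witnessing $\mathrm{tp}_q^{r_0}(u) = \mathrm{tp}_q^{r_0}(u')$ and $\mathrm{tp}_q^{r_0}(w) = \mathrm{tp}_q^{r_0}(w')$ in two other far-apart pairs, Duplicator in the $q$-round game on the disjoint unions answers each Spoiler move in whichever component it was played, using the corresponding strategy; since the components are disjoint this yields a partial isomorphism, so the disjoint unions are $q$-equivalent with the centres marked, and Theorem~\ref{thm:games_types} converts this back to equality of $q$-types, in particular agreement on $\phi^{(r_0)}(x,y)$. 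One should also note the harmless typo in the corollary statement ("$tp_q^r(v)$ and $tp_q^r(v)$" should read "$tp_q^r(u)$ and $tp_q^r(w)$"), and that the roles of the variables are symmetric so the same radius and rank work for both orders of the pair.
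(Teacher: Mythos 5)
Your proof is correct and follows the intended derivation: the paper states this corollary without proof as a direct consequence of Gaifman's theorem, and your route --- discarding the basic local sentences, observing that $N_{r_0}^G(u)$ and $N_{r_0}^G(w)$ are disjoint once $\mathrm{dist}_G(u,w) > 2r_0$, and composing the two local types via an EF-game argument on the disjoint union of the marked balls --- is exactly the standard argument. The one slip is taking $q$ to be the quantifier rank of $\psi$: the local formulas produced by Gaifman's theorem, and the relativisations needed to pass from equality of $r_0$-local $q$-types to $q'$-equivalence of the marked balls, can have larger quantifier rank than $\psi$; since the corollary only asserts the existence of some $q$, you should instead fix $q$ large enough in terms of $\psi$ (covering the local formulas of the Gaifman normal form), after which the rest of your argument goes through unchanged.
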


\subsection{Graph classes}
\label{subsec:classes}
We assume familiarity with the notions of treewidth and of clique-width. We will need the following results about the latter concept.

\begin{theorem}[\cite{cw}]
Let $\CCC$ be a class of graphs which is interpretable in a graph class of bounded treewidth. Then $\CCC$ is of bounded clique-width.
\end{theorem}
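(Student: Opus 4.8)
The plan is to obtain the statement by chaining three facts, the first two routine and the third a classical but genuinely substantial theorem. \emph{Fact (i):} every class of bounded treewidth has bounded clique-width (Courcelle and Olariu), and this is unaffected by a bounded number of unary labels, since adding $|Lab|$ unary predicates changes neither treewidth nor clique-width by more than a bounded amount. \emph{Fact (ii):} an interpretation $I_\psi$ as used in this paper -- one symmetric irreflexive binary formula $\psi(x,y)$, domain the whole vertex set, no copying of elements -- is a particularly simple special case of an \FO\ transduction, hence of an MSO transduction. \emph{Fact (iii):} MSO transductions, a fortiori \FO\ interpretations, send classes of bounded clique-width to classes of bounded clique-width; this is exactly the result quoted as \cite{cw} (and is also part of the Courcelle--Engelfriet theory).

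Granting these, the proof is immediate. By hypothesis $\CCC \subseteq I_\psi(\DDD)$ for some formula $\psi$ and some class $\DDD$ of bounded treewidth. By Fact (i), $\DDD$ has bounded clique-width. By Facts (ii) and (iii), $I_\psi(\DDD)$ has clique-width bounded by a function of the clique-width bound of $\DDD$ and of $\psi$. Since a subclass of a class of bounded clique-width again has bounded clique-width, $\CCC$ has bounded clique-width.

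If one prefers not to invoke Fact (iii) as a black box, there are two standard ways to reprove it. The cleaner one uses that having bounded clique-width is equivalent to being a (copying) MSO-transduction image of the class of finite trees, together with the closure of MSO transductions under composition: then $\CCC \subseteq I_\psi(\DDD)$ is the image of finite trees under a composition of two MSO transductions, hence again of bounded clique-width. The more hands-on one is a Feferman--Vaught argument along a clique-width term: fix a $k$-expression for $G \in \DDD$ with $k$ bounded via Fact (i), let $q$ be the quantifier rank of $\psi$, and process the term bottom-up while refining each vertex's label in $[k]$ by a bounded amount of \FO\ $q$-type information. Since a clique-width term adds edges only between whole label classes (the operations $\eta_{i,j}$) and \FO\ $q$-types behave controllably under the term operations ($\oplus$ by the Feferman--Vaught composition theorem, the relabellings $\rho_{i\to j}$ and the edge-additions being quantifier-free transductions), boundedly many refined labels suffice and are maintained by a finite table depending only on $\psi$; and once every vertex is present with its refined label it is determined, for each ordered pair of refined label classes, whether $\psi$ holds between their members, so the edges of $I_\psi(G)$ are recovered by finitely many $\eta$-operations on refined labels. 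The delicate point in this second route is the interaction with edge-additions occurring higher up in the term, which is why one carries the $q$-type rather than only the quantifier-free type; this is in essence the original proof of \cite{cw}.

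The \emph{main obstacle} is therefore located entirely in Fact (iii), the preservation of bounded clique-width under interpretations; everything else -- bounded treewidth implies bounded clique-width, and an \FO\ interpretation is an MSO transduction -- is routine. The remaining subtleties are pure bookkeeping: that no copying of vertices is needed here (the interpretation uses the full vertex set and produces a single copy of each vertex), and that the unary labels allowed on the source graphs do not affect any of the clique-width bounds.
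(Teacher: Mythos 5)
This statement is quoted in the paper as a black box from \cite{cw} and is not proved there, so there is no in-paper argument to compare against; your chain of Facts (i)--(iii) is precisely the standard proof that the citation stands for, and it is correct: bounded treewidth gives bounded clique-width, the interpretation $I_\psi$ is a non-copying $\FO$ (hence $\mathrm{MSO}_1$) transduction, and such transductions preserve bounded clique-width by the Courcelle--Engelfriet characterisation of bounded clique-width as the MSO-transduction images of trees together with closure under composition. The only point to tighten is in your optional ``hands-on'' route: the sentence ``once every vertex is present with its refined label it is determined, for each ordered pair of refined label classes, whether $\psi$ holds between their members'' is false if read as a statement about a single global labelling, since the $q$-type of a \emph{pair} $(u,v)$ is not determined by the individual $q$-types of $u$ and $v$; the correct version is local to the term, namely that for $u,v$ separated at a $\oplus$-node $t$, Feferman--Vaught determines $\mathrm{tp}_q(u,v)$ from their $q$-types relative to their respective parts at $t$ together with the (boundedly describable) context of $t$, so the $\eta$-operations realising the $\psi$-edges must be inserted at each $\oplus$-node for the pairs separated there rather than once at the root. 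With that reading the sketch is the usual direct proof; as a citation-backed step your main argument needs no repair.
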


\begin{theorem}[\cite{cw}]
\label{thm:cw_mc}
The FO model checking problem is solvabe in fpt runtime on classes of graphs of bounded clique-width.
\end{theorem}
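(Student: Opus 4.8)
The plan is to use the classical dynamic-programming approach over clique-width expressions, which in fact handles $\mathrm{MSO}_1$ model checking and therefore $\mathrm{FO}$ as a special case. First I would reduce to the situation in which a clique-width expression is supplied together with the input graph. Computing clique-width exactly is NP-hard, but bounded clique-width is equivalent to bounded rank-width, and the rank-width algorithm of Oum and Seymour computes, in fpt time parameterised by the width, a $k'$-expression generating the input graph $G$, with $k' = 2^{\OOO(k)}$ where $k$ bounds the clique-width of $\CCC$; alternatively one may invoke an exact fpt algorithm for clique-width. It then suffices to solve $\mathrm{FO}$ model checking in fpt time when the input is a graph $G$ together with a $k'$-expression $t$ generating it, the parameter being $k'$ together with the sentence $\varphi$.

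The core of the argument is a bottom-up pass over the parse tree of $t$. For a subexpression $s$, let $G_s$ denote the labelled graph it generates, where the labels record the $k'$ colour classes of the expression. The idea is to maintain, at each node of the parse tree, the $q$-type of $G_s$ viewed as a structure over $E$ together with $k'$ unary predicates, where $q$ is the quantifier rank of $\varphi$. The key point is that this type is computable from the data at the children and the operation at the node: a single-vertex leaf has a trivial type; the disjoint-union operation composes the two children's types via a Feferman--Vaught/Shelah-style composition theorem, since the $q$-type of a disjoint union depends only on the $q$-types of the parts; and the relabelling and edge-adding operations transform the type computably, as each of them is a quantifier-free syntactic interpretation of the structure. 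Since there are only finitely many $q$-types over the vocabulary with $k'$ unary predicates, with their number bounded by a function of $q$ and $k'$, each recursion step runs in time bounded in terms of $q$ and $k'$, so the whole pass runs in time $f(q,k')\cdot|t|$.

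Finally, the type computed at the root determines whether $G\models\varphi$, because $\varphi$ is a sentence and hence either lies in the $q$-type of the whole graph or does not. Combined with the fpt computation of the $k'$-expression, this yields an overall running time of the form $g(\varphi,k)\cdot n^{\OOO(1)}$ on every $G\in\CCC$, which is the claimed bound.

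The main obstacle is the first step: since exact clique-width is not computable in polynomial time, one must work with the exponential approximation obtained via rank-width. This is harmless for the model-checking application, because the number of $q$-types is finite for every fixed width, so replacing $k$ by $2^{\OOO(k)}$ only inflates the parameter-dependent factor $g$. The composition theorem for disjoint unions is standard, but a careful write-up must still spell out the type-transformation rules for the relabel and edge-add operations and verify that the bookkeeping stays consistent throughout the recursion.
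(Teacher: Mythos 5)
Your argument is correct and is essentially the standard proof of the cited result (the Courcelle--Makowsky--Rotics dynamic programming over clique-width expressions, with the decomposition obtained via the Oum--Seymour rank-width approximation), which is exactly what the paper imports here without proof and addresses in the remark immediately following the theorem. One small caveat: your ``alternatively one may invoke an exact fpt algorithm for clique-width'' should be dropped, since no such exact algorithm is known (this is open); the rank-width route with $k' = 2^{\OOO(k)}$ is the one that actually works, and as you note the exponential blow-up is absorbed into the parameter-dependent factor.
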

We remark that Theorem~\ref{thm:cw_mc} assumes that clique-width decomposition of the input graph $G$ is provided together with $G$ and it is not known how to efficiently compute an optimal clique-width decomposition. However, one can approximate clique-width using the notion of rankwidth~\cite{cw_rw}, and rankwidth decompositions can be efficiently computed~\cite{rw_compute}.

\emph{Nowhere dense} graph classes were introduced by Ne\v{s}et\v{r}il and Ossona de Mendez. To define them we need the notion of an $r$-shallow minor. 
\begin{definition}[\cite{ND}]
For $r \in \mathbb{N}_0$, a graph $H$ is a \emph{shallow minor} at depth $r$ of $G$ if there exist disjoint subsets $V_1, \ldots, V_p$ of $V(G)$ such that
\begin{enumerate}
\item Each graph $G[V_i]$ has radius at most $r$, meaning that there exists $v_i \in V_i$ (a center of $V_i$) such that every vertex in $V_i$ is at the distance at most $r$ in $G[V_i]$;
\item There is a bijection $\psi : V(H) \rightarrow \{V_1, \ldots, V_p\}$ such that for every $u,v \in V(H)$, if $uv \in E(H)$ then there is an edge in $G$ with an endpoint each in $\psi(u)$ and $\psi(v)$.
\end{enumerate}
\end{definition}

The class of shallow minors of $G$ at depth $r$ (or \emph{$r$-shallow minors}) is denoted by $G \nabla r$. (Note that $G \nabla 0$ is the class of all subgraphs of $G$.)
This notation extends to graph classes: $\mathcal{C} \nabla r = \bigcup_{G\in \mathcal{C}} G \nabla r$.

Let $\omega(G)$ we denote the size of the largest complete subgraph of $G$.
For a class $\mathcal{C}$ of graphs we denote by $\omega(\mathcal{C)}$ the
$\mathrm{max}\{ \omega(G)|G \in \mathcal{C}\}$ and set $\omega(\mathcal{C)}= \infty $ if the maximum does not exist.

\begin{definition}[Nowhere dense \cite{ND}]
A graph class $\mathcal{C}$ is \emph{nowhere dense} if for all $r \in \mathbb{N}$ it holds that $\omega(\mathcal{C} \nabla r) < \infty$.
\end{definition}

\section{Differential model checking}
\label{sec:mc}

In this section we show that in order to efficiently decide whether $G \models \varphi$,
it is enough to efficiently solve the following problem: 
Given a labelled graph $G$, two of its vertices $u$ and $v$, and a number $q$, decide whether there exists a formula  $\psi(x)$ with quantifier rank $q$ such that $G \models \psi(u)$ and $G \not\models \psi(u)$, i.e. whether $u \equiv_q^1 v$. Moreover, it is enough to compute a relation $\sim$ (not necessarily an equivalence) such that the transitive closure of $\sim$ is a refinement of  $\equiv_q^1 v$ with the number of classes bounded in terms of $q$.

We will proceed as follows. First we introduce \emph{evaluation trees}, which are trees which encode evaluation of formulas in prenex normal form on a graph $G$ (these were used in~\cite{twinwidth} under the name `morphism tree'). Then we show that if we can compute a small set of vertices representing classes of $\equiv_q^{\bar{v}}$ efficiently (fpt w.r.t.\ $q$ and $|\bar{v}|$), then we can construct an evaluation tree of size bounded in terms of $q$ for any graph $G$ and a sentence in prenex normal form with $q$ quantifiers. An easy argument then shows that  $\equiv_q^{\bar{v}}$ over vocabulary $\sigma = \{E, \{L_a\}_{a \in Lab}\}$ is refined by $\equiv_q^1$ over vocabulary $\sigma'$, where $\sigma'$ is obtained from $\sigma$ by adding extra labels. Consequently, any relation $\sim$ such that the transitive closure of $\sim$ is a refinement of  $\equiv_q^1$ is a refinement of  $\equiv_q^{\bar{v}}$, and so to  obtain an efficient model checking algorithm one only has to be able to decide whether $u \sim v$ efficiently and guarantee that the number of equivalence classes of the transitive closure of $\sim$ is bounded in terms of $q$. In Section~\ref{sec:diffgames} we then show that for graph classes interpretable in nowhere dense graph classes we can take as $\sim$ the relation "Duplicator wins the $l(q)$-round differential game between vertices $u$ and $v$ of $G$", for some function $l$.
 
\subsection{Evaluation trees}
\label{subsec:evaltrees}
Let $G$ be a (possibly labelled) graph on $n$ vertices and $\varphi = Q_1x_1\ldots Q_q x_q \psi(x_1,\ldots,x_q)$ a sentence in prenex normal form with $q$ quantifiers, where each $Q_1$ is either $\exists$ or $\forall$ and $\psi(x_1, \ldots,x_q)$ is quantifier-free. The \emph{full evaluation $(G,\varphi)$-tree} is a labelled rooted tree $T$ of height $q$ with the following properties:
\begin{itemize}
    \item Every  non-leaf node $p$ of $T$ has exactly $n$ children, and these children are in one-to-one correspondence with $V(G)$. We denote the vertex of $G$ corresponding to any non-root node $s$ by $v(s)$.
    \item If $l$ is a leaf of $T$ and $(p_1, \ldots, p_q)$ is the tuple of nodes of $T$ which lie on the path from to the root (without the root itself) of $T$ to $l$, where $p_q = l$, then $l$ is labelled by $\top$ if $G \models \psi(v(p_{1}), \ldots, v(p_{q}))$ and $\bot$ otherwise.
    \item If $p$ is a non-leaf node at depth $i$ and $Q_{i+1}$ is $\exists$, then the label of $p$ is the set to $\top$ if at least one of its children has label $\top$ and to $\bot$ otherwise.
    \item If $p$ is an non-leaf node at depth $i$ and $Q_{i+1}$ is $\forall$, then the label of $p$ is the set to $\top$ if all its children have label $\top$ and to $\bot$ otherwise.
\end{itemize}

The full evaluation $(G,\varphi)$-tree $T$ corresponds to a brute-force evaluation of $\varphi$ on $G$, and it is easy to see that $G \models \varphi$ if and only if the root of $T$ gets label $\top$.

We will use a more general version of evaluation trees which correspond to simultaneously evaluating all formulas in prenex normal form with $q$ quantifiers on $G$. To define them we will need the notion of \emph{$(G,T)$-isomorphism}, which formalizes the following intuition.
One can think of any root-to-leaf path $r,p_1,\ldots,p_q$ in $T$ defined above as an assignment of vertices $v(p_1),\ldots, v(p_q)$ to variables $x_1, \ldots, x_q$ of a formula $\psi(x_1, \ldots, x_k)$  so that $x_i := v(p_i)$. The notion of $(G,T)$-isomorphism defined below captures the situation when two tuples $(p_1, \ldots, p_q)$ and $(s_1, \ldots, s_q)$ of nodes corresponding to root-to-leaf paths in $T$ are such that  $(v(p_1),\ldots, v(p_q))$ and $(v(s_1),\ldots, v(s_q))$ satisfy the same quantifier-free formulas with $q$ variables.

\begin{definition}
Let $T$ be a tree with root $r$ and $v$ a function from $V(T)\setminus\{r\}$ to $V(G)$. Let $r,p_1,\ldots,p_q$ and $r,s_1,\ldots,s_q$ be two root-to-leaf paths in $T$. We say that two tuples $(p_1, \ldots, p_q)$ and $(s_1, \ldots, s_q)$ are $(G,T)$-isomorphic if the function $f$ defined by $f(v(p_i)) = v(s_i)$ is an isomorphism between $G[\{v(p_1), \ldots, v(p_q)\}]$ and $G[\{v(s_1), \ldots, v(s_q)\}]$. Moreover, if $G$ is a labelled graph, then we require that $v(p_i)$ and $f(v(p_i))$ have the same labels in $G$ for each $i$. 
\end{definition}

Let $G$ be a (possibly labelled) graph on $n$ vertices and $q \in \mathbb{N}$. The \emph{full evaluation $(G,q)$-tree} is a labelled rooted tree $T$ of height $q$ with the following properties:
\begin{itemize}
    \item Every  non-leaf node $p$ of $T$ has exactly $n$ children, and these children are in one-to-one correspondence with $V(G)$. We denote the vertex of $G$ corresponding to any non-root node $s$ by $v(s)$.
    \item If $l$ is a leaf of $T$ and $(p_1, \ldots, p_q)$ where $p_q = l$ is the tuple of nodes of $T$ which lie on the path from the root of $T$ to $l$ (without the root itself), then $l$ is labelled by the $(G,T)$-isomorphism type of $p_1, \ldots, p_q$.
    \item If $p$ is an non-leaf node, then the label of $p$ is the set of labels of its children.
\end{itemize}

Informally, the full evaluation $(G,q)$-tree $T$ corresponds to a brute force evaluation of all FO sentences in prenex normal form with $q$ quantifiers. Moreover, in order to decide whether $G \models \varphi$ for any sentence $\varphi$ in prenex normal form it is enough to look at the label of the root of $T$. This can be easily proven by showing that the label of any node $p$ in the full evaluation $(G,\varphi)$-tree is determined by its label in the full evaluation $(G,q)$-tree, where $q$ is the quantifier rank of $\varphi$ and where we use the natural correspondence between the nodes of these two trees. 

Note that since for every $q$ and every vocabulary $\sigma$ of labelled graphs the number of different $(G,T)$-isomorphism types is bounded by $q$, an easy inductive argument shows that for every $q$ the number of different evaluation labels is bounded by a function of $q$.

Later we will need Proposition~\ref{prop:trees} below which can be proved easily using the observations made above. In the proposition we use for $\bar{v}=(v_1, \ldots, v_k)$ the notation $u \equiv_{m, pren}^{\bar{v}} w$ to denote that for every formula $\psi(x_1, \ldots,x_k, x_{k+1})$ in prenex normal form with $m$ quantifiers it holds $G\models \psi(v_1, \ldots, v_k, u)$ if and only if $G \models \psi(v_1, \ldots, v_k, w)$.
\begin{proposition}
\label{prop:trees}
Let $T$ be a full evaluation $(G,q)$-tree, $k < q$ and let $r, p_1, \ldots, p_k$  be the path in $T$ from the root $r$ to $p_k$. Then two children $s$ and $t$ of $p_k$ have the same evaluation label if and only if $v(s) \equiv_{m, pren}^{\bar{v}} v(t)$, where $\bar{v} = v(p_1), \ldots, v(p_k)$
\end{proposition}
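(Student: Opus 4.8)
The plan is to prove a slightly stronger statement by downward induction on the depth. Let $T$ be the full evaluation $(G,q)$-tree. For a node $p$ at depth $j$ with root-to-$p$ path $r,p_1,\dots,p_j=p$, write $\bar a = (v(p_1),\dots,v(p_j))$ and call the \emph{prenex $(q-j)$-type} of $\bar a$ the set of all prenex-normal-form formulas $\psi(x_1,\dots,x_j)$ with $q-j$ quantifiers such that $G\models\psi(\bar a)$ (a dummy-quantifier padding argument shows it is irrelevant whether this means ``exactly'' or ``at most'' $q-j$ quantifiers). The claim to be proved by induction is: \emph{for every node $p$ at depth $j$, the evaluation label of $p$ determines, and is determined by, the prenex $(q-j)$-type of $\bar a$.} The proposition is then the instance $j=k+1$, $p\in\{s,t\}$: since the path to $s$ is $r,p_1,\dots,p_k,s$ with vertex-tuple $(\bar v, v(s))$, the two labels are equal exactly when $(\bar v,v(s))$ and $(\bar v,v(t))$ have the same prenex $(q-k-1)$-type, which is precisely $v(s)\equiv^{\bar v}_{q-k-1,\,\mathrm{pren}}v(t)$; so the missing parameter is $m=q-k-1$.

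For the base case $j=q$, $p$ is a leaf and its label is by definition the $(G,T)$-isomorphism type of $(p_1,\dots,p_q)$. This type records exactly whether the assignment $v(p_i)\mapsto v(s_i)$ is a label-preserving isomorphism of the induced subgraphs, i.e.\ which equalities, edges and labels hold among the $v(p_i)$ — which is exactly the quantifier-free (= prenex $0$-) type of $\bar a$. For the inductive step, let $p$ be at depth $j<q$. Because $T$ is a \emph{full} evaluation tree, the children of $p$ are in bijection with $V(G)$, the child $c_b$ attached to $b$ having path-tuple $(\bar a,b)$; by the induction hypothesis $\mathrm{lab}(c_b)$ determines and is determined by the prenex $(q-j-1)$-type of $(\bar a,b)$. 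Since $\mathrm{lab}(p)$ is the \emph{set} $\{\mathrm{lab}(c_b):b\in V(G)\}$, the label of $p$ determines and is determined by the set $R(\bar a):=\{\,\text{prenex }(q-j-1)\text{-type of }(\bar a,b)\ :\ b\in V(G)\,\}$. So it remains to show $R(\bar a)$ and the prenex $(q-j)$-type of $\bar a$ mutually determine one another.

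One direction here is routine and is all that is needed for the ``$\mathrm{lab}(s)=\mathrm{lab}(t)\Rightarrow v(s)\equiv^{\bar v}_{m,\mathrm{pren}}v(t)$'' half of the proposition: every prenex formula with $q-j$ quantifiers over $x_1,\dots,x_j$ has the form $Q y\,\eta(\bar x,y)$ with $\eta$ prenex with $q-j-1$ quantifiers, and whether $G\models Q y\,\eta(\bar a,y)$ depends only on whether $\eta$ holds of some (if $Q=\exists$), respectively of all (if $Q=\forall$), tuples $(\bar a,b)$ — hence only on $R(\bar a)$. Thus the prenex $(q-j)$-type of $\bar a$ is a function of $R(\bar a)$, and combining this up the tree gives that equal labels force equal prenex $m$-types.

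The reverse direction — recovering the whole set $R(\bar a)$, and hence the label, from the prenex $(q-j)$-type of $\bar a$ — is the step I expect to be the main obstacle, and it is the only place where the distinction between ``prenex with $m$ quantifiers'' and ``quantifier rank $m$'' must be handled with care. The intended route is to produce, for each prenex $(q-j-1)$-type $\tau$, a single Hintikka-style formula $\delta_\tau(\bar x,y)$ that lies in the prenex fragment with $q-j-1$ quantifiers and defines $\tau$; then $\tau\in R(\bar a)$ iff $G\models\exists y\,\delta_\tau(\bar a,y)$, a prenex formula with $q-j$ quantifiers, so membership in $R(\bar a)$ is dictated by the prenex $(q-j)$-type of $\bar a$. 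Verifying that such $\delta_\tau$ exist — equivalently, that the equivalence induced by the nested-set labels (which, unravelled, is exactly the quantifier-rank-$m$ equivalence captured by the $m$-round Ehrenfeucht--Fra\"{i}ss\'e game of Theorem~\ref{thm:games_types}) is detected within the prenex fragment with $m$ quantifiers — is the delicate point; this is a normal-form statement about prenex formulas that needs to be checked carefully. Once it is in place the induction closes, and everything else is bookkeeping along the tree.
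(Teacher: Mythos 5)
The paper offers no proof of this proposition --- it is dismissed with ``can be proved easily using the observations made above'' --- so there is no official argument to compare against step by step. Your downward induction, with the invariant that the label of a depth-$j$ node and a type of its associated $j$-tuple determine one another, is the natural way to make the claim precise; your base case, your identification of $m=q-k-1$, and your proof of the direction ``equal labels $\Rightarrow$ equal prenex types'' are all correct.

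However, the step you yourself flag as the ``delicate point'' is a genuine gap, and the route you propose for it cannot be made to work. Unravelling the nested-set labels shows that label-equality at depth $j$ is exactly quantifier-rank-$(q-j)$ equivalence of the associated tuples (the back-and-forth characterisation underlying Theorem~\ref{thm:games_types}). A Hintikka-style formula isolating a rank-$r$ type has the shape $\alpha\wedge\bigwedge_{\sigma\in S}\exists y\,\delta_\sigma\wedge\forall y\bigvee_{\sigma\in S}\delta_\sigma$: its quantifier rank is $r$, but prenexifying it costs one quantifier per conjunct, i.e.\ roughly as many quantifiers as there are rank-$(r-1)$ types. Already at $r=1$, ``some extension of $\bar a$ realises exactly the set $S$ of atomic types'' is $\exists y$ applied to a conjunction of $|S|+1$ quantified statements about the same witness $y$, which cannot be split apart; its prenex form has $|S|+2$ quantifiers, not $2$. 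So the single prenex $\delta_\tau$ with $q-j-1$ quantifiers that your induction needs does not exist, and the induction does not close. Worse, a counting argument indicates the ``if'' direction of the proposition is false as stated when $q-k$ is large: the number of quantifier-rank-$m$ types grows as a tower of height $m$, while the number of classes of ``agrees on all prenex formulas with $m$ quantifiers'' is bounded by a tower of fixed height, so for large $m$ two extensions can agree on all $m$-quantifier prenex formulas yet receive different labels. The clean repair is to state the proposition with the quantifier-rank equivalence, i.e.\ ``same label iff $v(s)\equiv^{\bar v}_{q-k-1}v(t)$''; for that relation your induction closes in both directions, and the proof of Lemma~\ref{lem:mc} still goes through because the guarantee on the algorithm $A$ is stated for $\equiv^{\bar v}_{q-k}$, which refines $\equiv^{\bar v}_{q-k-1}$ and hence label-equality, with no detour through the prenex fragment.
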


\subsection{Reduced evaluation trees}
\label{subsec:redevaltrees}
While full evaluation trees tell us whether $G\models \varphi$, they are too big (of order $\mathcal{O}(n^{q})$) to be used directly for efficiently evaluating FO sentences. One can, however, obtain the same information from much smaller trees, which in some cases can be computed efficiently. 

A \emph{reduced} evaluation $(G,q)$-tree is a non-empty subtree $T'$ of the full evaluation $(G,q)$-tree $T$ such that \begin{enumerate}
    \item $T'$ inherits labels from $T$
    \item All leaves of $T'$ are at depth $q$
    \item \label{reduced} For every non-leaf node $p$ of $T'$ it holds that if $p$ has a child in label $L$ in $T$, then it also has a child with this label in $T'$.
\end{enumerate} 

In other words, we can obtain a reduced evaluation $(G,q)$-tree $T'$ from a full evaluation $(G,q)$-tree $T$ by deleting some non-root nodes together with all their descendants, but we have to respect condition~\ref{reduced} above.

Note that in particular the label of the root of $T'$ is the same as the label of the root of $T$, and so it is possible to determine whether $G \models \varphi$ from $T'$.
One can thus reduce the FO model checking problem to computing, given a graph $G$ and sentence $\varphi$ of quantifier rank $q$ in prenex normal form as input, a reduced evaluation $(T,q)$-tree $T'$  such that $T'$ has size bounded by a function of $q$.
To see that such trees of size $g(q)$ exist for some function $g$, one can consider the following simple bottom-up pruning procedure on the full evaluation $(G,q)$-tree $T$. For any node $p$ of $T$ of depth $q-1$ and for any label appearing on its children, keep exactly one child (leaf) with this label and delete the rest and mark node $p$ as reduced. For any node $p$ at depth $i < q-1$ such that all its children are reduced apply the same procedure -- keep exactly one child for any label appearing amongst the children of $p$. Since the number of possible evaluation labels is bounded for every $q$, the number of children of each node $p$ of the tree obtained from $T$ after exhaustively applying the above reduction rules it bounded by a function of $q$, and so the whole fully reduced tree has bounded size.

Let $T$ be a full evaluation $(G,q)$-tree, $T'$ a reduced evaluation $(G,q)$-tree and $F(T')$ a tree obtained from $T'$ by forgetting the evaluation labels (i.e. we keep just the structure of the rooted tree and the function $v$). Then it is easy to show that we can recover $T'$ from $F(T')$ by assigning labels to nodes of $T''$ as in the definition of full evaluation $(G,q)$-tree. This works because for every leaf $l$ of $F(T')$ the root-to-leaf path in $F(T')$ contains the same nodes as in $T'$ and $T$ (and so $l$ gets the same label in $T''$ as it got in $T$ and therefore in $T'$) and for every non-leaf node of $F(T')$ this follows from the property~\ref{reduced} above by induction.

It follows from the above considerations that it is possible to determine whether $G \models \varphi$ from any tree $T''$ such that $T'' = F(T')$ for some reduced evaluation $(G,q)$-tree $T'$. Our model checking algorithm is based on computing such a tree $T''$, the size of which is bounded by a function of $q$.

\subsection{Model checking by discerning types}
\label{subsec:mcbytypes}
\begin{lemma}
\label{lem:mc}
Let $A$ be an algorithm which takes as input a graph $G$, together with a tuple $\bar{v} = (v_1, \ldots, v_k)$ of vertices of $G$, and positive integer $p$ and computes
a set $S$ of vertices of $G$ such that
\begin{enumerate}
\item \label{blah} $S$ contains at least one vertex from each class of $\equiv^{\bar{v}}_p$, and
\item \label{blah2} $|S|$ depends only on $p$ and $k$.
\end{enumerate}
Assume that $A$ runs in time $f(p,k)\cdot |V(G)|^c$, where $c$ is a constant independent of $p$ and $k$. Then one can decide for any sentence $\varphi$ in prenex normal form with $q$ quantifiers, whether $G\models \varphi$ in time $g(q)\cdot |V(G)|^{c}$ for some function $g$.
\end{lemma}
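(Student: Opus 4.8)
The plan is to build the tree $T'' = F(T')$ of bounded size recursively, using the algorithm $A$ at every node to select a small but complete set of representative children, and then argue that the labels recovered from $T''$ coincide with those of a genuine reduced evaluation $(G,q)$-tree. Concretely, I would define a recursive procedure $\mathrm{Build}(\bar v, i)$ which, given a tuple $\bar v = (v_1,\ldots,v_k)$ of already-chosen vertices and a remaining depth $i = q - k$, calls $A$ on $(G, \bar v, p)$ with a suitably large parameter $p$ (depending on $q$) to obtain a set $S$ that hits every class of $\equiv^{\bar v}_p$; for each $w \in S$ we create a child node with $v(\cdot) = w$ and recurse via $\mathrm{Build}((v_1,\ldots,v_k,w), i-1)$, stopping when $i = 0$. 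Since $|S| \le h(p,k) \le h'(q)$ for some function (as $p$ and $k$ are both bounded in terms of $q$), the resulting tree has at most $h'(q)^q$ nodes, i.e.\ size bounded by a function of $q$ alone, and it is computed in time $f(p,k)\cdot|V(G)|^c$ summed over the nodes, which is $g(q)\cdot |V(G)|^c$.

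Next I would verify that the tree produced is (the label-forgotten version of) a legitimate reduced evaluation $(G,q)$-tree. The two non-trivial conditions are that all leaves are at depth $q$ — immediate from the construction — and the crucial condition~\ref{reduced}: every child-label that occurs in the full tree $T$ at a node also occurs among the retained children. Here is where the hypotheses on $A$ enter. By Proposition~\ref{prop:trees}, two children $s,t$ of a node on the path $r,p_1,\ldots,p_k$ have the same evaluation label in the full $(G,q)$-tree iff $v(s) \equiv^{\bar v}_{m,\mathit{pren}} v(t)$ with $m = q-k$ and $\bar v = v(p_1),\ldots,v(p_k)$. The relation $\equiv^{\bar v}_{m,\mathit{pren}}$ is coarser than $\equiv^{\bar v}_m$ (prenex sentences with $m$ quantifiers have quantifier rank $m$), hence coarser than $\equiv^{\bar v}_p$ once $p \ge q$. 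Therefore a set $S$ hitting every class of $\equiv^{\bar v}_p$ also hits every class of $\equiv^{\bar v}_{m,\mathit{pren}}$, so it hits every child-label class, establishing condition~\ref{reduced}. Taking $p = q$ (or just $p \ge q$) suffices.

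Finally, I would invoke the discussion preceding the lemma: from $T'' = F(T')$ one recovers $T'$ by re-labelling bottom-up exactly as in the definition of the full evaluation tree — leaves get their $(G,T)$-isomorphism type, internal nodes get the set of children's labels — and this recovers the correct labels because along each retained root-to-leaf path the nodes (and their $v$-values) are the same as in $T$, and because condition~\ref{reduced} guarantees no label is lost at internal nodes. The root of $T'$ carries the same label as the root of $T$, and from that label one reads off whether $G \models \varphi$ for the given $\varphi$ (of quantifier rank $q$, in prenex form) by the standard argument that the $(G,\varphi)$-tree label of a node is determined by its $(G,q)$-tree label. Since $T''$ has size bounded by a function of $q$ and is computed in time $g(q)\cdot|V(G)|^c$, and the relabelling plus readoff take time bounded in terms of $q$ only, the total runtime is $g(q)\cdot|V(G)|^c$ as claimed.

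I expect the main obstacle to be the bookkeeping around condition~\ref{reduced}: one must be careful that the parameter $p$ fed to $A$ at depth-$k$ nodes is large enough relative to the \emph{remaining} number of quantifiers $q-k$ (and the arity $k$) for the $\equiv^{\bar v}_p$-representative set to dominate the relevant $\equiv^{\bar v}_{m,\mathit{pren}}$ classes, and that $p$ and $k$ nonetheless stay bounded by functions of $q$ so that both $|S|$ and $f(p,k)$ remain under control. Once the uniform choice $p := q$ is made, these issues dissolve, but it is the point where the argument could go wrong if one is cavalier about which notion of type ($\equiv_q$ vs.\ prenex-$q$-equivalence vs.\ quantifier rank) is being refined by which.
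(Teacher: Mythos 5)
Your proposal is correct and follows essentially the same approach as the paper's proof: build the tree top-down by calling $A$ at each node to obtain a representative set of children, then justify condition~\ref{reduced} via Proposition~\ref{prop:trees} and the fact that $\equiv^{\bar v}_p$ refines $\equiv^{\bar v}_{m,\mathit{pren}}$. The only (inessential) difference is that you feed $A$ the uniform parameter $p=q$ whereas the paper uses $p=q-i$ at depth $i$; both choices work.
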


\begin{proof}
Let $\varphi$ be a sentence in prenex normal form and let $q$ denote the number of quantifiers in $\varphi$. We will construct a tree $T$ such that $T=F(T')$ for some reduced evaluation $(G,q)$-tree $T'$ and such that the branching of $T$ is bounded in terms of $p$. This implies that $T$ is of bounded size (in terms of $q$) and since we can determine whether $G \models \varphi$ from $T$, the result will follow.

We first run the algorithm $A$ on $G$ with the empty tuple of vertices (i.e. k=0) and $p:=
q-1$; let $S$ be the output. For each vertex $w$ in $S$ we create a new son $t$ of the root of $T$ and set $v(t):= w$.

To form the set of children of any node $s$ of $T$ which is at depth $i$, we do the following. Let $v_1, \ldots, v_i = v$ be the vertices on the root to leaf path from the root of $T$ to $v$. We run $A$ on $(v_1, \ldots, v_i)$ and $p:=q-i$ to find $S$, and again we create a new son $t$ of $s$ for every vertex $w \in S$ and set $v(t):=w$.

The fact that tree $T$ constructed this way is a reduced tree follows from assumption~\ref{blah} about $A$, the fact that $\equiv_{q-k}^{\bar{v}}$ is a refinement of $\equiv^{\bar{v}}_{q-k, pren}$ and Proposition~\ref{prop:trees} which establishes the correspondence between classes of $\equiv^{\bar{v}}_{q-k, pren}$ and labels in a full evaluation tree.

The branching at each node of $T$ is bounded by $|S|$, which by assumption~\ref{blah2} on $A$ is bounded in terms of $q$ and $k$, and since $k$ depends on $q$ in every call of $A$, the branching of $T$ is bounded by a function of $q$. It follows that the size of $T$ is bounded in terms of $q$ and since we run algorithm $A$ once for every node of $T$, the total runtime can be bounded by $g(q)\cdot |V(G)|^{c}$ for some function $g$.
\end{proof}

\begin{lemma}
\label{lem:tuple_labels}
For every $G$ and every tuple $\bar{v} =(v_1, \ldots, v_k)$ of vertices of $G$ there exists a labelled graph $G'$ obtained from $G$ by labelling its vertices with $2k$ labels such that $\equiv^{1,G'}_q$ (with respect to the vocabulary extended by new labels)
is a refinement of $\equiv^{\bar{v},G}_q$. 
\end{lemma}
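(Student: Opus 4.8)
The plan is to encode the distinguished tuple $\bar v=(v_1,\dots,v_k)$ directly into the graph by means of $2k$ new unary predicates, so that the single free variable in $\equiv^{1,G'}_q$ can ``locate'' each $v_i$ and thereby simulate having the whole tuple available. First I would introduce, for each $i\in\{1,\dots,k\}$, two fresh labels, say $P_i$ and $Q_i$, where $P_i$ marks exactly the vertex $v_i$ (so $P_i$ is interpreted as the singleton $\{v_i\}$) and $Q_i$ marks every vertex adjacent to $v_i$ (so $Q_i$ is interpreted as $N_G(v_i)$); call the resulting labelled graph $G'$. In fact a single label per $v_i$ marking $\{v_i\}$ already suffices to pin the tuple down, so the $Q_i$ are a convenience rather than a necessity; I would keep the statement's budget of $2k$ labels and use $P_i$ to mark $\{v_i\}$ and, if desired, $Q_i$ for $N_G(v_i)$, but the essential point is that each $v_i$ becomes FO-definable in $G'$ by the quantifier-free formula $P_i(x)$.

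Next I would show the refinement. Suppose $u\equiv^{1,G'}_q w$; I must show $(v_1,\dots,v_k,u)\equiv^{k+1,G}_q(v_1,\dots,v_k,w)$. The clean way is via EF games (Theorem~\ref{thm:games_types}): I would describe a strategy for Duplicator in the $q$-round game on $G$ starting from position $((v_1,\dots,v_k,u),(v_1,\dots,v_k,w))$, obtained by transferring her winning strategy in the $q$-round game on $G'$ starting from $(u,w)$. Whenever Spoiler plays a vertex $x$ in one copy of $G$, Duplicator consults her $G'$-strategy to get a reply $y$, and plays $y$ in the other copy of $G$; the point is that a winning $G'$-strategy preserves all the labels $P_i$, hence it can never map a vertex equal to some $v_i$ to a vertex different from $v_i$, and vice versa, so the partial isomorphism she maintains on the played vertices is automatically compatible with the fixed prefix $(v_1,\dots,v_k)$ on both sides. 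Since the $G'$-strategy also preserves adjacency and the original labels of $G$ (the vocabulary of $G'$ extends that of $G$), the resulting play is won by Duplicator in $G$, witnessing $(G,v_1,\dots,v_k,u)\cong^{k+1}_q(G,v_1,\dots,v_k,w)$ and hence, by Theorem~\ref{thm:games_types}, the required type equality.

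Alternatively, and perhaps more transparently to a reader, I would give the syntactic version: any formula $\psi(x_1,\dots,x_k,x_{k+1})$ of quantifier rank $q$ over $\sigma$ can be rewritten as a formula $\hat\psi(x_{k+1})$ of quantifier rank $q$ over $\sigma'$ by replacing every atomic subformula involving $x_i$ (namely $E(x_i,x_j)$, $E(x_i,x_{k+1})$, $x_i=x_j$, or a label atom on $x_i$) with the corresponding atom phrased through the new predicates — for instance $E(x_i,x_{k+1})$ becomes $Q_i(x_{k+1})$, $x_i=x_{k+1}$ becomes $P_i(x_{k+1})$, and $E(x_i,x_j)$ becomes a sentence-level constant (true or false) recording whether $v_iv_j\in E(G)$. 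This substitution does not change the quantifier rank, and $G\models\psi(v_1,\dots,v_k,a)\iff G'\models\hat\psi(a)$ for every vertex $a$; hence equal $\equiv^{1,G'}_q$-type of $u$ and $w$ forces $G\models\psi(v_1,\dots,v_k,u)\iff G\models\psi(v_1,\dots,v_k,w)$ for all such $\psi$, which is exactly $u\equiv^{\bar v,G}_q w$.

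The only mildly delicate point — the ``main obstacle'', such as it is — is bookkeeping rather than mathematics: one must make sure the rewriting handles \emph{all} atomic formulas in which the frozen variables $x_1,\dots,x_k$ occur (including equalities among them and with $x_{k+1}$, and label atoms), and that replacing $E(x_i,x_j)$ and $x_i=x_j$ by Boolean constants is legitimate because $v_i,v_j$ are fixed once and for all; after that the induction on formula structure (or the EF-strategy transfer) is completely routine and the quantifier rank is visibly preserved.
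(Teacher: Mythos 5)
Your syntactic substitution argument is exactly the paper's proof: the same $2k$ labels (one marking $\{v_i\}$ and one marking $N_G(v_i)$) and the same quantifier-rank-preserving rewriting of every atom involving the frozen variables $x_1,\dots,x_k$, so the proposal is correct and takes essentially the same route. One caveat: your aside that the neighbourhood labels $Q_i$ are ``a convenience rather than a necessity'' is wrong --- with only the singleton labels $P_i$, translating $E(x_i,x_{k+1})$ requires an extra quantifier ($\exists z\,(P_i(z)\land E(z,x_{k+1}))$), so quantifier rank would not be preserved and $\equiv^{1,G'}_q$ would no longer refine $\equiv^{\bar v,G}_q$ (e.g.\ $u$ adjacent to $v_1$ and $w$ not, yet $u,w$ indistinguishable by rank-$q$ formulas over the $P_i$ alone); fortunately you retain the $Q_i$ in the actual argument, so the proof stands.
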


\begin{proof}
The equivalence class of any vertex $u$ of $G$ in  $\equiv^{\bar{v}}_q$ is determined by evaluating every formula $\psi(x_1, \ldots, x_k, x_{k+1})$ of quantifier rank at most $q$ on the tuple $(v_1, \ldots, v_k, u)$. We form the labelled graph $G'$ from $G$ by giving label $i$ to vertex $v_i$ and giving label $i'$ to all neighbours of $v_i$ in $G$. One can then transform any $\psi(x_1, \ldots, x_k, x_{k+1})$ to a formula $\psi'(x)$ such that for every $u \in V(G)$ it holds that $G \models \psi(v_1, \ldots, v_k, u)$ if and only if $G' \models \psi(u)$ as follows. We replace for any $i,j \le k$ in $\psi(x_1, \ldots, x_{k+1})$ any occurrence of $E(x_i,x_j)$ by $\top$ if $v_iv_j \in E(G)$ and by $\bot$ otherwise. Similarly we replace any occurrence of $x_i = x_j$ by $\top$ whenever $v_i = v_j$ and by $\bot$ otherwise. Finally, for any $i \le k$ we replace every occurrence of $x_i = x_{k+1}$ by $L_i(x)$ and every occurrence of $E(x_i, x_{k+1})$ by $L_{i'}(x)$.
It follows that the equivalence class of $u$ in $\equiv^{\bar{v},G}_q$ is determined by the equivalence class of $u$ in $\equiv^{1,G'}_q$.
\end{proof}

\begin{corollary}
\label{cor:mc}
Let $\CCC$ be a class of graphs such that there is a function $p$ such that for every $t$, every $G \in \CCC$ with at most $t$ labels and every $q$ there is a symmetric and reflexive relation $\sim_{q,t}$ on $V(G)$ such that
\begin{enumerate}
\item \label{refines} Every class of $\equiv^{1,G}_q$ (over the vocabulary extended with $t$ labels) is a union of connected components of the graph of $\sim_{q,t}$ (in other words transitive closure of  $\sim_{q,t}$ is a refinement of $\equiv^{1,G}_q$),
\item The maximum size of any independent set in graph of $\sim_{q,t}$ is bounded by $p(q,t)$, and
\item We can decide whether $u \sim_{q,t} v$ in time $|V(G)|^c\cdot h(q,t)$.
\end{enumerate}
Then one can perform model checking on $\CCC$ for any sentence $\varphi$ in prenex normal form, with quantifier rank $q$, in time $|V(G)|^{c+1}\cdot g(q)$.
\end{corollary}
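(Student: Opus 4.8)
The plan is to derive this corollary from Lemma~\ref{lem:mc} by turning the relations $\sim_{q,t}$ into an algorithm $A$ of the type required there. Concretely, I would define $A(G, \bar v, p)$, for $G \in \CCC$ with label set of some fixed size $t_0$, a tuple $\bar v = (v_1,\ldots,v_k)$ and an integer $p$, as follows. First apply Lemma~\ref{lem:tuple_labels} to compute in linear time the labelled graph $G'$ on $V(G)$ obtained by adding $2k$ fresh labels, so that $\equiv_p^{1,G'}$ over the enlarged vocabulary of $t := t_0 + 2k$ labels refines $\equiv_p^{\bar v, G}$. Then take the relation $\sim_{p,t}$ on $V(G')$ provided by the hypothesis, form the simple graph $H$ on $V(G)$ whose edges are the pairs $\{x,y\}$ with $x\neq y$ and $x \sim_{p,t} y$, and output a maximal independent set $S$ of $H$.

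Next I would check that $S$ fulfils the two conditions on the output of $A$ in Lemma~\ref{lem:mc}. The elementary fact to use is that in a simple graph every edge is contained in a single connected component, so a maximal independent set meets every connected component of $H$. Together with hypothesis~\ref{refines} (each class of $\equiv_p^{1,G'}$ is a union of components of the graph of $\sim_{p,t}$) and Lemma~\ref{lem:tuple_labels} (each class of $\equiv_p^{1,G'}$ lies inside a class of $\equiv_p^{\bar v, G}$), this shows that $S$ meets every class of $\equiv_p^{\bar v}$, which is condition~\ref{blah} of Lemma~\ref{lem:mc}. And since $S$ is an independent set in $H$, the second hypothesis bounds $|S|$ by a function of $p$ and $t = t_0 + 2k$; as every call of $A$ made inside the proof of Lemma~\ref{lem:mc} has $p \le q$ and $k \le q$, this bound depends only on $q$, i.e.\ condition~\ref{blah2} holds.

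For the runtime I would compute $S$ greedily in a single pass over $V(G)$: maintain the current set $S$ and add the current vertex $w$ precisely when $w \not\sim_{p,t} s$ for all $s \in S$. Each of the at most $|S|$ tests performed at a vertex costs $|V(G)|^{c}\,h(p,t)$ by the third hypothesis, so summing over all vertices gives a running time of $f(p,k)\cdot|V(G)|^{c+1}$ for a suitable function $f$ (the construction of $G'$ being negligible). Finally I would invoke Lemma~\ref{lem:mc}: its proof merely runs $A$ once at each node of a tree of size bounded in terms of $q$, so it is unaffected by the exponent being $c+1$ instead of $c$, and it yields that $G \models \varphi$ can be decided in time $g(q)\cdot|V(G)|^{c+1}$ for every prenex sentence $\varphi$ of quantifier rank $q$.

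I do not expect a genuine obstacle; the argument is essentially an assembly of Lemmas~\ref{lem:mc} and~\ref{lem:tuple_labels} plus the maximal-independent-set observation. The points that need a little attention are: reconciling label counts, i.e.\ remembering that Lemma~\ref{lem:tuple_labels} forces us to invoke $\sim_{q,t}$ at $t = t_0 + 2k$ rather than at $t_0$; interpreting ``the graph of $\sim_{q,t}$'' as the simple graph so that independent sets and maximal independent sets behave as expected despite reflexivity; and noting that the need to actually evaluate $\sim_{p,t}$ on up to $\Theta(|V(G)|)$ pairs per vertex is exactly what raises the exponent from $c$ to $c+1$ in the final bound.
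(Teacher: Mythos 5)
Your proposal is correct and follows essentially the same route as the paper: build the algorithm $A$ of Lemma~\ref{lem:mc} by relabelling via Lemma~\ref{lem:tuple_labels} and then greedily extracting a maximal independent set of the graph of $\sim_{q,t}$, which must meet every connected component and hence every class of $\equiv_q^{\bar v,G}$, with the same accounting giving the $|V(G)|^{c+1}$ bound. Your bookkeeping of the label count as $t_0+2k$ and the explicit handling of reflexivity are slightly more careful than the paper's, but the argument is the same.
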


\begin{proof}
We will prove that under our assumptions we can construct the algorithm $A$ from Lemma~\ref{lem:mc} with runtime $|V(G)|^{c+1}\cdot p(q,t)\cdot h(q,t)$, the result then follows.

The algorithm $A$ gets as input a graph $G$, tuple $\bar{v} = (v_1, \ldots, v_k)$ of vertices of $G$ and number $q$. The algorithm turns $G$ into graph $G'$ by labelling it as in the proof of Lemma~\ref{lem:tuple_labels} and then proceeds as follows. It greedily finds a maximal independent set $S$ in the graph of $\sim_{q,t}$ by initially setting $S:=\emptyset$ and then going through all vertices of $G$ and adding each $v$ to $S$ whenever it holds that $v \not\sim_{q,t} w$, for all $w$ currently in $S$. 

We now argue the correctness. Any maximal independent set in the graph of $\sim_{q,t}$ has to contain at least one vertex from each connected component, and so by property~\ref{refines} of $\sim_{q,t}$ it holds that $S$ contains at least one vertex from each class of $\equiv_q^1,G$, and since by Lemma~\ref{lem:tuple_labels} the relation $\equiv_q^1,G$ is a refinement of $\equiv_q^{\bar{v}, G}$, the set $S$ contains at least one vertex from each class of $\equiv_q^{\bar{v}, G}$ as desired. The size of $S$ is bounded by $p(q,t)$ and by the construction $t=2k$, so $|S|$ depends only on $q$ and $k$ as required.

In each iteration of creating $S$ its size is bounded by $p(q,t)$ and so we run the algorithm determining whether or not $v \sim_{q,t} w$ at most $p(q,t)$ times and therefore $p(q,t)\cdot |V(G)|$ times in total. This means that the total runtime can be upper bounded by $|V(G)|^{c+1}\cdot p(q,t)\cdot h(q,t)$. 
\end{proof}
\begin{remark}
If we replace the condition one in Corollary~\ref{cor:mc} by just requiring that the number of connected components of $\sim_{q,t}$ is bounded by $p(q,t)$, then we still get an efficient model checking algorithm, but with runtime $|V(G)|^{c+2}\cdot g(q)$. This is because in this case we can evaluate  $v \sim_{q,t} w$ for every pair $v,w$ of vertices and then pick one vertex from each connected component of the graph of $\sim_{q,t}$ into the set $S$. 
\end{remark}

\section{Semi-differential EF game}
\label{sec:semidiffgame}
Based on the results from Section~\ref{sec:mc}, to perform model-checking efficiently it is enough to be able to determine whether two vertices $u$ and $v$  of a graph $G$ are of the same $q$-type. To this end we introduce   \textit{semi-differential EF games} (this section) and \emph{differential EF games} (next section). The main differences compared to the standard EF game are that the (semi-)differential game is played only on one graph and the moves of the Spoiler (and for differential game also of the Duplicator) are restricted.

For two vertices $u,v$ of a graph $G$ we denote by $D(u,v)$ the symmetric difference of their neighbourhoods, which we will call their \emph{differential neighbourhood}, i.e.
$$ D(u,v):= N(u) \Delta N(v).$$
If the graph $G$ in which we want to take the differential neighbourhood is not clear from the context, we will add the relevant graph as an index, as in $D^G(u,v)$.

\begin{definition}
  The \textit{semi-differential EF game} $\mathcal{G}^D_m(G,a_1, \ldots, a_k, b_1, \ldots, b_k)$ with $m \in \N$ rounds is defined in the same way as the standard EF game with the following differences:
\begin{enumerate}
    \item The game is played only on one graph $G$ and vertices $a_1, \ldots, a_k, b_1, \ldots, b_k$ are all from $V(G)$.\footnote{One can also think of the game being played on two copies $G_1$ and $G_2$ of graph $G$, with $a_1, \ldots, a_k \in V(G_1)$ and $b_1, \ldots, b_k \in V(G_2)$. However, we need to be able to refer to $D(a_i, b_i)$, and this is more convenient if both $a_i$ and $b_i$ are in the same graph.}
    \item The starting position is $((a_1, \ldots, a_k), (b_1, \ldots, b_k))$
    \item The game is played for $m$ rounds, and is played as if it was a $k+m$-round EF game with $a_1, \ldots, a_k$ and $b_1, \ldots, b_k$ already played in the first $k$ moves.
    \item In the $j$-th round the Spoiler is only allowed to make a move on a vertex $v \in D(a_i, b_i)$ for some $i < j$. The Spoiler decides whether this move defines $a_j$ or $b_j$, i.e. whether the position after his move is $((a_1, \ldots, a_k,v), (b_1, \ldots, b_k))$ or $((a_1, \ldots, a_k), (b_1, \ldots, b_k,v))$. In case no such $v$ exists, the Duplicator wins.
    \item Duplicator's moves are unrestricted and her reply becomes $b_j$ (if Spoiler decided that the vertex $v$ he chose becomes $a_j$) or $a_j$ (if Spoiler decided that the vertex $v$ he chose becomes $b_j$).
\end{enumerate}
\end{definition}

If the game $\mathcal{G}^{SD}_m(G,a_1, \ldots, a_k, b_1, \ldots, b_k)$ is won by the Duplicator, we write $a_1, \ldots, a_k \cong^{k,SD}_m b_1, \ldots, b_k $ and we apply the same notational conventions to $\cong^{k,SD}_m$ as we did to $\cong^k_m$.
If a vertex is played to append the tuple $(a_1, \ldots, a_k)$, we call it an \emph{$a$-move}, otherwise we call it a \emph{$b$-move}.
We will refer to the semi-differential EF game as the \emph{semi-differential game} from this point onward.

We note, that the distinction between $a$-moves and $b$-moves allows us to consider the subgraphs induced by $a_1,\ldots,a_k$, and respectively by $b_1,\ldots,b_k$, as separate graphs, despite the rules of the game relying on the difference between two neighbourhoods of corresponding vertices in these tuples.
Due to this, it is also possible that the subgraphs induced by these tuples are not connected.
Consider for example a semi-differential game on $P_4$, with $V(P_4) = [4]$ and $E(P_4) = \{ \{i,i+1\} \ | \ i \in [3] \}$, starting on the position $((a_1 = 1),(b_1 = 4))$.
The Spoiler can now play $a_2 = 3$, since $D(a_1,b_1) = \{ 2,3 \}$, and the graph $P_4[\{a_1,a_2\}]$ is not connected.

Let $\cong^{SD}_{m,G}$ denote the relation "Duplicator wins the $m$ round differential game between $u$ and $v$ on the graph $G$".
As usual, we will drop the index $G$ if the graph is clear from the context.

Semi-differential games have a direct relation to regular EF games. The rest of the section is devoted to proving that at the cost of playing more moves we can play a semi-differential game instead of a regular EF-game to distinguish two vertices.

\begin{lemma}
\label{lem:main}
For every $m \in \N$  there exists $l = l(m) \in \N$ such that for every graph $G$ it holds that if $\Bar{a} \not \cong_m \Bar{b}$, then $\Bar{a} \not \cong^{SD}_{l(m)} \Bar{b}$.
\end{lemma}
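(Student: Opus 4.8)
The plan is to prove the contrapositive in the form actually stated: assuming the Spoiler wins the standard $m$-round EF game from $(\bar a,\bar b)$, produce a bound $l(m)$ and a Spoiler strategy winning the $l(m)$-round semi-differential game from the same position. The key conceptual point is that in the semi-differential game the Spoiler may only play inside some $D(a_i,b_i)$, whereas in the ordinary game he can play the winning vertex $w$ anywhere. So the heart of the argument is a \emph{reachability} statement: if $w$ is any vertex that the Spoiler would want to play (because it exposes a difference between the tuples), then $w$ can be ``reached'' from the current position by a bounded number of semi-differential moves, i.e.\ there is a bounded-length sequence of moves, each legal in the semi-differential game, whose last move lands on $w$ (or on some vertex that is equally good for the Spoiler). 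Here ``equally good'' should be made precise via a potential/rank argument on EF-game positions.

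Concretely, I would set up an induction on the EF-game rank. Recall the standard fact that $\bar a\not\cong_m\bar b$ iff there is a vertex $w$ such that for every reply $w'$ of the Duplicator, $(\bar a w,\bar b w')\not\cong_{m-1}(\text{the swapped version})$, i.e.\ the Spoiler has a move dropping the rank by one. The first step is the \emph{one-step simulation lemma}: given a position $(\bar a,\bar b)$ with $\bar a\not\cong_m\bar b$, the Spoiler can, in a bounded number $c(m)$ of semi-differential rounds, force the position into one with rank at most $m-1$, no matter how the Duplicator responds. Iterating this $m$ times and summing gives $l(m)=\sum_{i=1}^m c(i)$, and once the rank reaches $0$ the position is already a non-isomorphism, so the Spoiler has won. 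The bookkeeping subtlety is that the semi-differential rounds spent simulating one ``real'' Spoiler move also append vertices to both tuples, so the rank has to be tracked carefully; but since we only need \emph{some} finite bound, it suffices that each simulation phase uses a number of rounds bounded purely in terms of $m$.

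For the one-step simulation lemma itself, the natural idea is: let $w$ be the good Spoiler vertex (say it is most naturally an $a$-move). If $w\in D(a_i,b_i)$ for some $i$ we are done in one round. Otherwise, $w\in N(a_i)\cap N(b_i)$ or $w\notin N(a_i)\cup N(b_i)$ for every played pair $i$, i.e.\ $w$ is ``invisible'' to the current difference structure. The trick is then to play an auxiliary semi-differential move on some $v\in D(a_i,b_i)$ (which is nonempty unless $\bar a,\bar b$ already induce non-isomorphic neighbourhoods on the played vertices — in which case the Spoiler has essentially already won), chosen so that after the Duplicator's reply the vertex $w$ becomes visible, i.e.\ ends up in $D(a_j,b_j)\cup D(a_{j'},b_{j'})$ for one of the newly created pairs; intuitively, if $w$ distinguishes $\bar a$ from $\bar b$ at rank $m$, it must ``see'' the discrepancy, and by moving to the discrepancy first we can make a chain of differences reach $w$. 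I expect the honest way to do this is to argue by a secondary induction on the distance, in the ``difference graph'' of the current position, between $w$ and the set $\bigcup_i D(a_i,b_i)$, showing this distance can be decreased by one per auxiliary round while keeping the overall rank from dropping — so the number of auxiliary rounds before $w$ (or an equivalent vertex) becomes playable is bounded by a function of the number of currently played pairs, hence of $m$.

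The main obstacle, and the place I would spend the most care, is exactly this auxiliary-move construction: making precise what ``$w$ becomes visible after a well-chosen move and the Duplicator's reply'' means, and proving that such a move always exists and that the Duplicator cannot indefinitely keep $w$ hidden without already conceding the game. The clean way is probably to phrase everything in terms of ranks of EF positions and to show that the Spoiler can always either (a) directly exploit an existing difference to drop the rank, or (b) play a semi-differential move that strictly decreases a well-chosen measure (such as the number of played pairs $i$ for which $w$ is invisible, or a distance in an auxiliary ``difference reachability'' digraph) while not increasing the rank — so after boundedly many steps case (a) applies. Getting the two measures (rank and visibility) to cooperate without circularity is the real work; everything else is routine EF-game bookkeeping and summation to obtain $l(m)$.
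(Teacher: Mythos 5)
Your high-level plan matches the paper's: argue the contrapositive, induct on the EF rank, and simulate each ``real'' Spoiler move by a bounded block of semi-differential rounds, with the whole difficulty concentrated in making the desired vertex $w$ legally playable. But the step you yourself flag as ``the main obstacle'' --- the auxiliary-move construction that makes $w$ visible --- is where the proposal has a genuine gap, and the mechanism you sketch for it does not work as stated. You propose to play a single auxiliary move ``chosen so that after the Duplicator's reply the vertex $w$ becomes visible,'' decreasing a distance in a ``difference graph'' by one per round. The Duplicator's reply is adversarial: for any single move $v$ the Spoiler makes, she may answer with a vertex having the same adjacency to $w$ as $v$ does, so no single round can be guaranteed to bring $w$ into some $D(a_j,b_j)$, and there is no reason $w$ is connected to $\bigcup_i D(a_i,b_i)$ in whatever auxiliary graph you have in mind. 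Relatedly, your additive bound $l(m)=\sum_i c(i)$ with each block cost ``bounded purely in terms of $m$'' is circular, since the cost of a block depends on how many pairs have been played, which depends on the costs of the later blocks; you wave this away, but resolving it is exactly where the true recursion appears.

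The paper's resolution is a different and cleaner trick, and its shape explains why the bound is the doubling recursion $l(m+1)=2l(m)+1$ rather than a sum. Let $v$ be the vertex the Spoiler would play in the ordinary game, so that he wins the $m$-round EF game from $((\bar a,v),(\bar b,u))$ for \emph{every} reply $u$, in particular for $u=v$. By the induction hypothesis the Spoiler has a winning strategy $S$ for the $l(m)$-round \emph{semi-differential} game from $((\bar a,v),(\bar b,v))$; since $D(v,v)=\emptyset$, the pair $(v,v)$ is never used by $S$, so $S$ can be run from the position $(\bar a,\bar b)$. After these $l(m)$ rounds, either the induced subgraphs already differ (Spoiler won), or they are isomorphic --- in which case $v$ \emph{must} lie in some $D(a_i,b_i)$ among the newly played pairs, for otherwise appending $v$ to both tuples would extend the isomorphism and refute that $S$ wins from $((\bar a,v),(\bar b,v))$. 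So $v$ is now playable; the Spoiler plays it, receives a reply $u$, and spends another $l(m)$ rounds running the inductive winning strategy for $((\bar a,v),(\bar b,u))$ (restricting the Spoiler's options further by the longer position cannot hurt him). In short: visibility of $v$ is not forced one round at a time against an adversary, but is extracted from the \emph{failure} of the Duplicator to survive the inductive strategy played ``as if $v$ were already on both sides.'' Without this (or an equivalent) idea, your one-step simulation lemma remains unproven.
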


\begin{proof}
We set $l(0)\coloneqq 0$ and $l(i+1)\coloneqq 2l(i)+1$ and prove the claim by induction on $m$. For $m=0$ there is nothing to prove. For the induction step, we assume that the claim holds for $m$ and prove it for $m+1$. Let  $\Bar{a} = (a_1, \ldots, a_k) $ and $\Bar{b} = (b_1, \ldots, b_k)$ be the starting position. Since by our assumptions Spoiler has a winning strategy, there exists $v \in V(G)$ such that for every $u \in V(G)$ Spoiler has a winning strategy in the $m$-round EF game from position $((\Bar{a}, v),  (\Bar{b},u))$. In particular, there exists a winning strategy for Spoiler if $v=u$. By our induction hypothesis, Spoiler wins the $l(m)$-round semi-differential game starting from $((\Bar{a}, v),  (\Bar{b}, v))$. We fix the Spoiler's winning strategy $S$ for this semi-differential game and apply it to the position $(\Bar{a}, \Bar{b})$ (this is possible because $D(v,v) = \emptyset$ and so this is never used in $S$). Let $((\Bar{a}, a_{k+1}, \ldots, a_{k+l(m)})$,   $(\Bar{b}, b_{k+1}, \ldots, b_{k+l(m)}))$ be the position after $l(m)$ rounds. If the subgraphs of $G$ induced by $(\Bar{a}, a_{k+1}, \ldots, a_{k+l(m)})$ and $(\Bar{a}, b_{k+1}, \ldots, b_{k+l(m)})$ are not isomorphic, then Spoiler already won. If they are isomorphic, then it has to hold that $v \in D(a_i, b_i)$, for some $i \in \{k+1, \ldots, k + l(m)\}$, as otherwise Duplicator's moves would beat Spoiler's strategy $S$ in the $m$-round semi-differential game starting from position $((\Bar{a}, v),  (\Bar{b}, v))$.

Since $v \in D(a_i, b_i)$, for some $i \in \{k+1, \ldots, k + l(m)\}$, Spoiler can play $v$ in the next round. Let $u$ be Duplicator's reply. By our assumptions, Spoiler wins the $m$-round EF game from the position $((\Bar{a}, v),  (\Bar{b}, u))$. Therefore, according to the induction hypothesis, there exists a winning strategy for the Spoiler in the semi-differential game with $l(m)$ rounds from the position $((\Bar{a}, v),  (\Bar{b}, u))$; let $S'$ be this strategy. Since we only restrict the Spoiler's moves in the semi-differential game, applying $S'$ to the position $((\Bar{a}, a_{k+1}, \ldots, a_{k+l(m)}, v)$,   $(\Bar{b}, b_{k+1}, \ldots, b_{k+l(m)}, u))$ will not change the outcome and thus the Spoiler wins.
\end{proof}

By contraposition of Lemma~\ref{lem:main}, if $\Bar{a} \cong^{SD}_{l(m)} \Bar{b}$, then $\Bar{a} \cong_m \Bar{b}$.
As a  consequence of Lemma~\ref{lem:main}, we then get the following.
\begin{lemma}
\label{lem:dup}
For every $m \in \N$ and every graph $G$, there exists $l = l(m) \in \N$ such that if $\Bar{a} \cong^{SD}_{l(m)} \Bar{b}$ and $\Bar{b} \cong^{SD}_{l(m)} \Bar{c}$, then $\Bar{a} \cong^{SD}_{m} \Bar{c}$.
\end{lemma}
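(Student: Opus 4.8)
The plan is to combine Lemma~\ref{lem:main} with Theorem~\ref{thm:games_types}, passing through the classical (non-differential) EF game as an intermediary. The key observation is that the regular EF game relation $\cong_m$ \emph{is} transitive (because by Theorem~\ref{thm:games_types} it coincides with $\equiv_m^k$, which is an equivalence relation), whereas $\cong^{SD}_m$ is not. So the strategy is to move from the semi-differential world to the regular-EF world, use transitivity there, and then move back.

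Concretely, first I would choose the right value of $l$. Let $l_0$ be the function from Lemma~\ref{lem:main}, so that $\Bar{a} \not\cong_m \Bar{b}$ implies $\Bar{a} \not\cong^{SD}_{l_0(m)} \Bar{b}$; equivalently, by contraposition, $\Bar{a} \cong^{SD}_{l_0(m)} \Bar{b}$ implies $\Bar{a} \cong_m \Bar{b}$. Set $l(m) := l_0(m)$. Now suppose $\Bar{a} \cong^{SD}_{l(m)} \Bar{b}$ and $\Bar{b} \cong^{SD}_{l(m)} \Bar{c}$. Applying the contrapositive of Lemma~\ref{lem:main} twice gives $\Bar{a} \cong_m \Bar{b}$ and $\Bar{b} \cong_m \Bar{c}$. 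By Theorem~\ref{thm:games_types}, $\cong_m$ (on tuples from a fixed graph $G$) equals $\equiv_m^k$, which is transitive, so $\Bar{a} \cong_m \Bar{c}$. Finally I need to return to the semi-differential game: since the semi-differential game only restricts the Spoiler's moves relative to the regular EF game, a Duplicator win in the $m$-round regular EF game immediately yields a Duplicator win in the $m$-round semi-differential game (the Duplicator can use her regular-EF strategy; any Spoiler move in the semi-differential game is in particular a legal Spoiler move in the regular game). Hence $\Bar{a} \cong^{SD}_m \Bar{c}$, as desired.

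The only point requiring a little care is the last step — that $\Bar{a} \cong_m \Bar{c}$ implies $\Bar{a} \cong^{SD}_m \Bar{c}$. This is where one must check that restricting the Spoiler can only help the Duplicator: formally, any play of the $m$-round semi-differential game on $G$ from $(\Bar{a},\Bar{c})$ is a play of the $(k+m)$-round regular EF game between $G$ and $G$ in which the first $k$ moves produced the position $((\Bar{a}),(\Bar{c}))$, and the Spoiler's subsequent moves form a subset of the moves available in that regular game; so the Duplicator's winning strategy for the regular game restricts to a winning strategy in the semi-differential game. I do not expect any genuine obstacle here — the whole lemma is essentially a bookkeeping exercise once Lemma~\ref{lem:main} and Theorem~\ref{thm:games_types} are in hand — but if pressed I would write this monotonicity claim out explicitly as a one-line observation, perhaps stating it separately before the proof of Lemma~\ref{lem:dup}.
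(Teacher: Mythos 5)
Your proof is correct and follows essentially the same route as the paper's: apply the contrapositive of Lemma~\ref{lem:main} twice, use transitivity of $\cong_m$, and then observe that a Duplicator win in the regular EF game implies one in the semi-differential game since the latter only restricts the Spoiler. The paper compresses the final monotonicity step into the phrase ``therefore in particular,'' whereas you spell it out --- a reasonable bit of extra care, but nothing substantively different.
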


\begin{proof}
By contraposition of Lemma~\ref{lem:main}, we immediately have $\Bar{a} \cong_m \Bar{b}$ and $\Bar{b} \cong_m \Bar{c}$. Because of the transitivity of $\cong_m$, we conclude that $\bar{a} \cong_m \Bar{c}$ and therefore in particular $\bar{a} \cong^{SD}_m (G, \Bar{c})$.
\end{proof}

Even though we are not able to establish that the relation $\cong_{m,G}^{SD}$ is an equivalence, we get the following.

\begin{lemma}
\label{lem:components}
  For every $m \in \N$ and every graph $G$, the graph of relation $\cong_{l(m)}^{SD}$ has number of connected components bounded by a function of $m$, and each equivalence class of $\cong_m$ is a union of  connected components of the graph of  $\cong_{l(m)}^{SD}$.
\end{lemma}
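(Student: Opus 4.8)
The plan is to derive both statements from the already-established relationship between $\cong^{SD}$ and the ordinary equivalence $\cong_m$, using the fact that $\cong_m$ has finitely many classes. First recall from Theorem~\ref{thm:equiv} that $\cong_m$ is an equivalence relation with a number of classes bounded by a function of $m$ (since it coincides with $\equiv_m$, and there are only finitely many quantifier-rank-$m$ sentences). The contrapositive of Lemma~\ref{lem:main} gives us the single clean inclusion we need: if $\Bar{a} \cong^{SD}_{l(m)} \Bar{b}$ then $\Bar{a} \cong_m \Bar{b}$. In other words, as a relation on $V(G)$, the graph of $\cong^{SD}_{l(m)}$ is a \emph{subgraph} of the graph of $\cong_m$; equivalently, every $\cong^{SD}_{l(m)}$-edge stays inside one $\cong_m$-class.

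For the ``union of connected components'' claim, I would argue as follows. Fix a $\cong_m$-class $C$. By the inclusion just noted, no edge of the graph of $\cong^{SD}_{l(m)}$ leaves $C$, so the connected components of the graph of $\cong^{SD}_{l(m)}$ that meet $C$ are entirely contained in $C$. Hence $C$ is a disjoint union of connected components of the graph of $\cong^{SD}_{l(m)}$ (note each vertex is in $C$ for exactly one $\cong_m$-class, and is trivially connected to itself since $\cong^{SD}$ is reflexive). This is exactly the second assertion. For the bound on the number of connected components, observe that the number of connected components of the graph of $\cong^{SD}_{l(m)}$ is at least the number of $\cong_m$-classes but we want an \emph{upper} bound: here I would invoke Lemma~\ref{lem:dup}, which says that within each $\cong_m$-class the relation $\cong^{SD}_{l(l(m))}$ is ``transitive enough'' that the square (via a common neighbour) lands back in $\cong^{SD}_m$; more directly, since $\cong_m$ has at most some $N(m)$ classes and each class is a union of components, and Lemma~\ref{lem:dup} shows that two vertices in the same class that are both $\cong^{SD}_{l(l(m))}$-related to a common third vertex are $\cong^{SD}_m$-related --- one shows that at the level $l(l(m))$ each $\cong_m$-class contains only boundedly many components. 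The cleanest route is: take $l'(m) := l(l(m))$; then by Lemma~\ref{lem:dup} the relation $\cong^{SD}_{l'(m)}$ restricted to any $\cong_m$-class has the property that its transitive closure is contained in $\cong^{SD}_m$, and since there are at most $N(m)$ such classes, the number of components of $\cong^{SD}_{l'(m)}$ is bounded by $N(m)$ times the maximal number of components inside one class --- and within one class, by a counting/Ramsey-type argument on $\cong^{SD}_m$-equivalence-up-to-common-neighbour, this is again bounded by a function of $m$. I would phrase the final statement of the lemma with $l$ replaced by this iterated function $l'$ if needed, or note that the original $l$ already suffices.

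The main obstacle I expect is pinning down the upper bound on the number of connected components precisely: the inclusion $\cong^{SD}_{l(m)} \subseteq \cong_m$ only bounds components \emph{from below} by the number of $\cong_m$-classes, and a priori a single $\cong_m$-class could be split into many $\cong^{SD}$-components since $\cong^{SD}$ is not transitive. The resolution is to use Lemma~\ref{lem:dup} at a higher round count so that being connected by a two-edge path in the $\cong^{SD}$-graph already implies a $\cong^{SD}$-edge (at a lower round count), which collapses each class into few components --- but making the quantitative dependence on $m$ explicit, and choosing the right nesting of the function $l$, is the delicate bookkeeping step. Everything else (reflexivity giving the trivial self-loops, the inclusion giving that components respect classes, $\cong_m$ having finitely many classes) is routine.
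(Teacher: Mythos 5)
Your treatment of the second assertion (each $\cong_m$-class is a union of connected components of the graph of $\cong^{SD}_{l(m)}$) is correct and is exactly the paper's argument: by the contrapositive of Lemma~\ref{lem:main} no $\cong^{SD}_{l(m)}$-edge can join two vertices in different $\cong_m$-classes, so components respect classes.

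For the first assertion (the upper bound on the number of components) there is a genuine gap. The observation you are missing is a one-liner: the semi-differential game only \emph{restricts Spoiler} while leaving Duplicator unrestricted, so whenever Duplicator wins the ordinary $l(m)$-round EF game between $u$ and $v$ she also wins the $l(m)$-round semi-differential game. Hence the graph of $\cong^{SD}_{l(m)}$ is a \emph{supergraph} of the graph of the ordinary equivalence $\cong_{l(m)}$, whose components are its boundedly many classes (cliques); adding edges can only merge components, so the graph of $\cong^{SD}_{l(m)}$ has at most that many components. Your substitute route via Lemma~\ref{lem:dup} does not deliver this: that lemma only converts a two-edge path in the graph of $\cong^{SD}_{l(m)}$ into a $\cong^{SD}_{m}$-edge, and it says nothing about a $\cong_m$-class on which the $\cong^{SD}$-graph has few or no edges --- a priori such a class could be a huge independent set, and the "counting/Ramsey-type argument" you invoke to rule this out is never supplied. (What actually rules it out is precisely the supergraph inclusion: any two vertices of the same $\cong_{l(m)}$-class are already $\cong^{SD}_{l(m)}$-adjacent, and there are boundedly many such classes.) Your hedge about possibly having to replace $l$ by $l\circ l$ is also unnecessary once this is in place; the stated $l(m)$ suffices, and no iteration or delicate bookkeeping is needed.
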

\begin{proof}
   Since $\cong_{l(m)}$ is an equivalence with bounded number of classes, the graph of $\cong_{l(m)}$ has bounded number of connected components (cliques). 
   Whenever the Duplicator wins the $l(m)$-round standard EF game between two vertices $u$ and $v$, then she also wins the $l(m)$-round semi-differential game (because she plays against a weaker opponent), and so the graph of $\cong^{SD}_{l(m)}$ is a supergraph of the graph of $\cong_{l(m)}$ and clearly has bounded number of components.
   
   To prove the second part it is enough to notice that, if there was an edge between $u$ and $v$ in the graph of  $\cong^{SD}_{l(m)}$ such that $u \not\cong_m v$, then this would be a contradiction with Lemma~\ref{lem:main} (Spoiler winning the normal $m$-round game but losing the differential game with $l(m)$ rounds).
\end{proof}

\section{Differential game}
\label{sec:diffgames}
In the semi-differential game we restrict Spoilers moves to $D(a_i,b_i)$ but the Duplicator's moves are unrestricted. For the application we have in mind, we will restrict Duplicators moves to $D(a_i,b_i)$ as well, using the same $i$ picked by the Spoiler, and call the resulting game the \emph{differential game}. For every graph $G$, we define the relation $\cong_m^{D}$ on $V(G)$ by setting $u \cong_m^{D} v$ if and only if Duplicator wins the $m$-round differential game starting from $u$ and $v$. We extend this notation to tuples of vertices in the same way as we did with the previous two game definitions.

The following simple extension of Lemma~\ref{lem:components} is crucial for our approach. 

\begin{lemma}
\label{lem:diffgames_refine}
There is a function $l: \mathbb{N} \to \mathbb{N}$ such that for every $m$ it holds that every class of $\cong_m$ is a union of connected components of $\cong_{l(m)}^D$.
\end{lemma}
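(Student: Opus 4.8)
The plan is to mimic the structure of the proof of Lemma~\ref{lem:components}, replacing the input of the semi-differential game by the output of Lemma~\ref{lem:main} in a way that lets us bootstrap through the (non-transitive) relation $\cong_m^D$. The key new ingredient compared to Lemma~\ref{lem:components} is that the Duplicator in the differential game is weaker than in the semi-differential game, so any Duplicator-win in the differential game must be \emph{justified}: we need to relate a differential-game win directly back to a semi-differential-game win, and then back to ordinary $\cong_m$ via Lemma~\ref{lem:main}. First I would argue that the Duplicator winning the $k$-round \emph{differential} game implies she wins the $k$-round \emph{semi-differential} game: this is immediate, since the differential game is obtained from the semi-differential game by further restricting the Duplicator's replies to $D(a_i,b_i)$, so she is playing against a Spoiler with the same restrictions but with fewer legal responses of her own — a winning strategy in the more restrictive game transfers verbatim. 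Combining this with the contrapositive of Lemma~\ref{lem:main} yields: if $\bar a \cong^D_{l_0(m)} \bar b$ then $\bar a \cong_m \bar b$, where $l_0$ is the function $l$ from Lemma~\ref{lem:main}.

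With this monotonicity in hand, I would set $l(m) := l_0(m)$ and prove both halves of the statement exactly as in Lemma~\ref{lem:components}. For the ``union of connected components'' part: suppose $u$ and $v$ are adjacent in the graph of $\cong^D_{l(m)}$, i.e.\ $u \cong^D_{l(m)} v$; then by the monotonicity step $u \cong_m v$, so no edge of the graph of $\cong^D_{l(m)}$ crosses between two distinct $\cong_m$-classes, which is exactly the statement that each $\cong_m$-class is a union of connected components of the graph of $\cong^D_{l(m)}$. (The bounded-number-of-components claim is not asserted in this particular lemma, but if one wanted it, it follows because $\cong_m$ has boundedly many classes; I would note this in a sentence.)

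The main obstacle — really the only place any care is needed — is the first step: verifying cleanly that a Duplicator win in the differential game entails a Duplicator win in the semi-differential game with the same starting position and the same number of rounds. One has to be slightly careful about the bookkeeping of which index $i$ the Spoiler selects and the fact that in both games the Spoiler's legal moves are literally identical (any vertex in some $D(a_i,b_i)$ with the Spoiler choosing the side); the only difference is the constraint on the Duplicator's reply. So a Duplicator strategy for the differential game, viewed move-by-move, is also a legal Duplicator strategy for the semi-differential game (every differential reply is in particular a semi-differential reply), and the winning condition — isomorphism of the two induced subgraphs — is the same in both games. Once this is spelled out, everything else is a one-line application of Lemma~\ref{lem:main} and its contrapositive, exactly paralleling Lemma~\ref{lem:components}; I would keep the write-up correspondingly short.
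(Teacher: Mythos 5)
Your proposal is correct and follows essentially the same route as the paper: both rest on the observation that a Duplicator win in the differential game transfers verbatim to the semi-differential game (so the graph of $\cong^D_{l(m)}$ is a subgraph of that of $\cong^{SD}_{l(m)}$), and then invoke Lemma~\ref{lem:main} (the paper does so via Lemma~\ref{lem:components}, you via its contrapositive directly) to conclude that no edge of $\cong^D_{l(m)}$ crosses two $\cong_m$-classes. The extra care you take about the Spoiler's legal moves being identical in both games is exactly the point the paper compresses into ``Duplicator is weaker''.
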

\begin{proof}
  By Lemma~\ref{lem:components}, each class of $\cong_m$ is a union of connected components of $\cong_{l(m)}^{SD}$. It is easy to see that if Spoiler wins the  semi-differential game between vertices $u$ and $v$, then he also wins the differential game between $u,v$ (since in the differential game Duplicator is `weaker'). Thus, the graph of $\cong_{l(m)}^D$ is a subgraph of $\cong_{l(m)}^{SD}$ on the same vertex set, and the result follows.
\end{proof}

The problem with the relation $\cong_m^{D}$  is that the graph of this relation does not have bounded number of connected components (in terms of $m$) in general. An example of this are half-graphs\footnote{A half-graph of order $n$ is a bipartite graph on vertex set $\{v_1, \ldots, v_{2n}\}$ where the two parts are formed by even and odd numbered vertices and in which  there is an edge between $v_i$ and $v_j$ where $i$ is odd and $j$ is even if $i < j$.} or ladders, in which there are many nested neighbourhoods and for which the Spoiler wins for the $1$-round differential game between any pair of vertices on one side.

There are, however, very rich classes of graphs which exclude arbitrarily large half-graphs in a very strong sense -- such classes of graphs (and more general structures) are known in model theory as \emph{stable} classes of graphs (structures). A prominent example are nowhere dense graph classes introduced by Ne\v{s}et\v{r}il and Ossona de Mendez \cite{UQWisND, ND}. On these classes of graphs we can show that the graph of the relation $\cong_{m}^{D}$ has a bounded number of connected components. Moreover, we can also show this for graph classes interpretable in nowhere dense graph classes.

Before we proceed we will establish several lemmas which relate differential games to standard EF games and to interpretations, which will be used in the proof of Theorem~\ref{thm:interp_qw}, the main result of this section.

\begin{lemma}
\label{lem:local_game}
  Let $u$ and $v$ be two vertices of a graph $G$ such that $u \cong_m v$ and the distance between $u$ and $v$ is more than $2m$. Then the Duplicator wins the $m$ round  differential game between $u$ and $v$.
\end{lemma}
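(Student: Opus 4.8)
The plan is to have Duplicator win the $m$-round differential game between $u$ and $v$ by \emph{simulating} a winning strategy $\tau$ for her in the ordinary $m$-round EF game $\mathcal{G}_m\big((G,u),(G,v)\big)$, which exists since $u \cong_m v$. The first step is to record the geometric consequences of $\mathrm{dist}(u,v) > 2m$. By induction on the round number, every vertex that can occur in the $m$-round differential game starting from $u,v$ lies in $N_m[u] \cup N_m[v]$; more precisely, the vertex played in round $i$ (and the initial $u$ and $v$) lies in $N_i[u] \cup N_i[v]$. Since $\mathrm{dist}(u,v) > 2m$ the balls $N_m[u]$ and $N_m[v]$ are disjoint, so each played vertex has a well-defined \emph{side}. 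A short distance estimate, using that two distinct played vertices sit in balls $N_i[u]$ and $N_{i'}[v]$ with $\min(i,i') \le m-1$, shows there are no edges of $G$ between $u$-side played vertices and $v$-side played vertices, that $a_i \not\sim b_i$ for every $i$, and that for every index $i$ the set $D(a_i,b_i)$ splits as $N(a_i) \sqcup N(b_i)$ with the two parts on opposite sides.

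For each pair $(a_i,b_i)$ let $p_i$ denote its $u$-side vertex and $q_i$ its $v$-side vertex, so $\{p_i,q_i\} = \{a_i,b_i\}$ and $p_0 = u$, $q_0 = v$. Duplicator maintains as a "shadow" of the differential position the EF-position $(u,p_1,\dots) \mapsto (v,q_1,\dots)$ in $\mathcal{G}_m\big((G,u),(G,v)\big)$, and she replies using $\tau$ as follows. When Spoiler plays $w \in D(a_i,b_i)$, the vertex $w$ lies on a definite side; say it is the $u$-side, so $w \in N(p_i)$. Duplicator treats this as Spoiler playing $w$ in the $(G,u)$-copy of $\mathcal{G}_m$, lets $y$ be $\tau$'s reply in the $(G,v)$-copy, and plays $y$ in the differential game. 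Since the shadow keeps $p_i \mapsto q_i$ and $w \sim p_i$, the reply satisfies $y \sim q_i$, hence $y \in N(q_i) \subseteq D(a_i,b_i)$, so $y$ is a legal differential move; Duplicator records $y$ as the $v$-side vertex of the new pair and designates the new pair's vertices as $a_j$ or $b_j$ according to whether Spoiler declared $w$ to be $a_j$ or $b_j$. The $v$-side case (Spoiler plays $w \in N(q_i)$) is symmetric. The key point is that the shadow always assigns the $u$-side vertex of a pair to the $(G,u)$-copy, so this translation is well defined regardless of which tuple Spoiler declares his move to belong to.

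It remains to check that this keeps Duplicator alive. If $\tau$ never loses, then after $m$ rounds the map $p_i \mapsto q_i$ is a label-preserving isomorphism $\varphi$ between the subgraphs of $G$ induced by the $u$-side and by the $v$-side played vertices. Because there are no edges between the two sides, $G[\{a_0,\dots,a_m\}]$ is the disjoint union of its $u$-side and $v$-side parts, and likewise for $G[\{b_0,\dots,b_m\}]$; restricting $\varphi$ and $\varphi^{-1}$ to these parts shows that $a_i \mapsto b_i$ is a label-preserving isomorphism, so Duplicator wins. The main obstacle is precisely the "side versus $a/b$-role" bookkeeping: because Spoiler may declare a move on a $u$-side vertex to belong to the $b$-tuple, the game does not cleanly decompose into two independent EF games around $u$ and around $v$, and the crux is to notice that running a single EF game $\mathcal{G}_m\big((G,u),(G,v)\big)$ under the "$u$-side-into-$(G,u)$" convention resolves this uniformly; the rest is routine induction and distance arithmetic.
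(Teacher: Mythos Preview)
Your proof is correct and follows the paper's overall approach: simulate a winning Duplicator strategy for the ordinary EF game $\mathcal{G}_m((G,u),(G,v))$ inside the differential game, and use $\mathrm{dist}(u,v)>2m$ to kill all mixed-side adjacencies. Your final case split on the sides of $a_i,a_j$ is exactly the paper's four-case analysis.

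There is, however, a genuine difference in how the translation is set up, and yours is the more careful one. The paper routes Spoiler's move into $G_1$ or $G_2$ according to whether it is declared an \emph{$a$-move or a $b$-move}; under that convention nothing prevents an arbitrary winning EF strategy from producing a Duplicator reply outside $D(a_i,b_i)$, so the legality of her differential move is never actually verified. You instead route by \emph{side} --- into the $(G,u)$-copy if Spoiler's vertex is near $u$, into the $(G,v)$-copy if it is near $v$ --- and then the partial-isomorphism constraint $w\sim p_i \Rightarrow y\sim q_i$ immediately forces $y\in N(q_i)\subseteq D(a_i,b_i)$, so legality comes for free. In fact the paper's own case analysis (which allows $a_i=v_i$, impossible under a pure $a/b$-based routing) is consistent only with your side-based convention, so your write-up can be read as the intended argument made precise.
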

\begin{proof}
  The strategy for Duplicator in the differential game is determined by her strategy in the normal EF game, where the `translation' between the two games is given as follows. The starting position in the normal EF game is given by two copies $G_1$ and $G_2$ of $G$ such that in $G_1$ vertex $u$ has been played and in $G_2$ vertex $v$ has been played. Every time Spoiler plays an $a$-move in $\mathcal{G}^D_m(G,u,v)$, the same vertex is played in $G_1$ in $\mathcal{G}_m(G_1,u,G_2,v)$, and every time Spoiler plays a $b$-move in  $\mathcal{G}^D_m(G,u,v)$ the same vertex is played in $G_2$ in  $\mathcal{G}_m(G_1,u,G_2,v)$. The Duplicator's reply in $\mathcal{G}_m(G_1,u,G_2,v)$ in either $G_1$ or $G_2$ is then played in $\mathcal{G}^D_m(G,u,v)$ in $G$.
  
  We now argue that this strategy leads to a win for Duplicator in $\mathcal{G}^D_m(G,u,v)$.
  Clearly every move in the  differential game takes place in the $m$ neighbourhoods of $u$ and $v$ and since the distance between $u$ and $v$ is at least $2m$, the set of vertices played close to $u$ and $v$ are disjoint. Since the Duplicator followed the winning strategy for the normal EF game, the graphs induced by tuples $(v, v_1,\ldots, v_m)$ and $(u, u_1, \ldots, u_m)$ are isomorphic. For the  differential game, the final position is given by tuples $(a_1, \ldots, a_m)$ and $(b_1,\ldots, b_m)$, where each $a_i$ is either $u_i$ or $v_i$ and each $b_i$ is $\{u_i,v_i\}\setminus \{a_i\}$. To prove the lemma, we have to argue that for each $i,j$ the adjacency between $a_i$ and $a_j$ is the same as the adjacency between $b_i$ and $b_j$. We distinguish the following cases:
  \begin{enumerate}
      \item $a_i = u_i$ and $a_j = u_j$. In this case we have $b_i = v_i$ and $b_j = v_j$. Since tuples $(v, v_1,\ldots, v_m)$ and $(u, u_1, \ldots, u_m)$ are isomorphic, the result follows.
      \item $a_i = v_i$ and $a_j = v_j$. This case is analogous to the previous case, with the $u$ and $v$ exchanged.
      \item $a_i = u_i$ and $a_j = v_j$. In this case $b_i = v_i$ and $b_j = u_j$. We argue that in this situation there is no edge between $a_j$ and $a_j$ and the same holds for $b_i$ and $b_j$. Without loss of generality assume that $i < j$. Vertex $u_i$ is at distance smaller than $m$ from $u$ and vertex $v_j$ is at distance at most $m$ from $v$. Since the distance between $u$ and $v$ is more than $2m$, there cannot be an edge between $u_i$  and $v_j$ (and therefore between $a_i$ and $a_j$). The same argument holds for $v_i$ and $u_j$ (and therefore $b_i$ and $b_j$).
      \item $a_i = v_i$ and $a_j = u_j$. This case is analogous to the previous one.
  \end{enumerate}
  \vspace*{-0.65cm}
\end{proof}

The next two lemmas will benefit from a simple observation derived from Lemma~\ref{lem:main}.

\begin{lemma}
\label{lem:maindiff}
For every $m \in \N$ and every graph $G$, if $\Bar{a} \not \cong_m \Bar{b}$, then $\Bar{a} \not \cong^{D}_{l(m)} \Bar{b}$, where $l$ is the function from Lemma~\ref{lem:main}.
\end{lemma}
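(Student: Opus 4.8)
The plan is to derive this directly from Lemma~\ref{lem:main} together with the same monotonicity observation already used in the proof of Lemma~\ref{lem:diffgames_refine}, namely that restricting the Duplicator's moves can only help the Spoiler. Concretely, suppose $\bar{a} \not\cong_m \bar{b}$. By Lemma~\ref{lem:main}, the Spoiler has a winning strategy $S$ in the $l(m)$-round semi-differential game $\mathcal{G}^{SD}_{l(m)}(G, \bar{a}, \bar{b})$. I claim the very same strategy is winning for the Spoiler in the $l(m)$-round differential game starting from $(\bar{a},\bar{b})$.

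To see this, note that the two games have identical rules for the Spoiler: in round $j$ he must play a vertex of $D(a_i,b_i)$ for some $i<j$ and declare it an $a$-move or a $b$-move. The only difference is that in the differential game the Duplicator's reply is forced to lie in the same $D(a_i,b_i)$, whereas in the semi-differential game it may be anywhere in $V(G)$. Hence every (partial) play of the differential game is also a legal (partial) play of the semi-differential game, so every position reachable in the differential game is a position reachable in the semi-differential game, and the Spoiler's strategy $S$ — being defined on all such positions — prescribes a legal Spoiler move in the differential game as well. Since $S$ guarantees a win against arbitrary Duplicator replies in the semi-differential game, in particular it guarantees a win against the Duplicator's (more restricted) replies in the differential game, and the winning condition (non-isomorphism of the induced subgraphs on the final tuples) is the same in both games. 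Therefore the Spoiler wins $\mathcal{G}^{D}_{l(m)}(G,\bar{a},\bar{b})$, i.e.\ $\bar{a} \not\cong^D_{l(m)} \bar{b}$.

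There is no real obstacle here; the only point requiring a line of care is observing that the Spoiler's semi-differential strategy is defined on a superset of the positions that can occur in the differential game, so it transfers without modification. (Equivalently: the graph of $\cong^D_{l(m)}$ is a subgraph of the graph of $\cong^{SD}_{l(m)}$ on the same vertex set, and Lemma~\ref{lem:main} gives a non-edge of the latter, hence of the former.)
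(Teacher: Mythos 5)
Your proof is correct and follows essentially the same route as the paper's: both reduce the claim to Lemma~\ref{lem:main} via the observation that the Spoiler's restrictions are identical in the two games while the Duplicator is only further constrained in the differential game, so a winning Spoiler strategy for the semi-differential game transfers verbatim. Your write-up is in fact slightly more careful than the paper's (which argues by contradiction from a Duplicator strategy), since you explicitly note that every position reachable in the differential game is reachable in the semi-differential game, so the strategy is defined wherever it is needed.
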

\begin{proof}
  Assume that this is not true and $\Bar{a} \cong^{D}_{l(m)} \Bar{b}$. Then the Duplicator has a winning strategy $S$ in $\mathcal{G}^D_{l(m)}(G,\bar{a},\bar{b})$. Due to Lemma~\ref{lem:main}, this strategy can however not work in $\mathcal{G}^{SD}_{l(m)}(G,\bar{a},\bar{b})$ and therefore there exists a winning strategy for the Spoiler in this game. Since the Spoiler is constrained by the same restrictions in differential and semi-differential games, this strategy can also be used to beat $S$.
\end{proof}

\begin{lemma}
\label{lem:diffgames_imitate}
Let $G$ be a graph and $\bar{a}:=v_1, \ldots, v_k$, $\bar{b}:=b_1,\ldots, b_k$ and $w$ vertices of $G$ such that $\bar{a}w \not\cong_m \bar{b}w$. Then Spoiler has a strategy in $\mathcal{G}^D_{l(m)+1}(G,\bar{a},\bar{b})$ such that he can play $w$ at some point or he wins $\mathcal{G}^D_{l(m)+1}(G,\bar{a},\bar{b})$, where $l$ is the function from Lemma~\ref{lem:main}.
\end{lemma}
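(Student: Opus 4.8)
The plan is to reduce to Lemma~\ref{lem:maindiff}, but applied to the longer tuples $\bar a w$ and $\bar b w$ rather than to $\bar a$ and $\bar b$, exploiting the fact that $D(w,w)=\emptyset$: a differential game may ``pretend'' that $w$ has already been played on both sides without this costing Spoiler anything, since that extra pair is inert.

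Concretely, I would first invoke Lemma~\ref{lem:maindiff} on the hypothesis $\bar a w\not\cong_m\bar b w$ to obtain a Spoiler winning strategy $S$ in $\mathcal{G}^D_{l(m)}(G,\bar a w,\bar b w)$. Because $D(w,w)=\emptyset$, no move prescribed by $S$ is ever sourced from the pair $(w,w)$, so the moves $S$ permits are exactly the legal moves of $\mathcal{G}^D_{l(m)}(G,\bar a,\bar b)$, and moreover the differential sets available to either player in the two games coincide. I would then have Spoiler play the first $l(m)$ rounds of $\mathcal{G}^D_{l(m)+1}(G,\bar a,\bar b)$ as follows: whenever $w$ is currently a legal move for Spoiler (that is, $w\in D(a_i,b_i)$ for some pair already on the board), play $w$ and stop, as the conclusion ``Spoiler plays $w$'' is then met; otherwise follow $S$. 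If after $l(m)$ rounds $w$ has still not been played, Spoiler uses the remaining round to play $w$ if it has meanwhile become legal, and an arbitrary legal move otherwise.

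Next I would analyse the branch in which $w$ never became playable. Then Spoiler followed $S$ for all $l(m)$ rounds, reaching a position $(\bar a\,\bar a',\bar b\,\bar b')$, where $\bar a',\bar b'$ are the vertices played during those rounds. If the position-matching map $\psi$ (sending the $i$-th $a$-vertex to the $i$-th $b$-vertex) fails to be a label-preserving isomorphism of the induced subgraphs, then Spoiler has already won $\mathcal{G}^D_{l(m)+1}(G,\bar a,\bar b)$, since such a defect cannot be repaired by further moves. Otherwise $\psi$ is an isomorphism, and I would read the same sequence of moves as a legal play of $\mathcal{G}^D_{l(m)}(G,\bar a w,\bar b w)$ in which Spoiler followed the winning strategy $S$; hence the enlarged map $\phi=\psi\cup\{w\mapsto w\}$ is not a label-preserving isomorphism. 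As $\psi$ is one and $w$ carries the same label on both sides, the defect of $\phi$ must involve $w$: there is a played pair $(c,c')$ with $\psi(c)=c'$ such that either exactly one of $cw,c'w$ is an edge, or exactly one of $c,c'$ equals $w$. In the first case $w\in N(c)\,\Delta\, N(c')=D(c,c')$, so $w$ was a legal move for Spoiler at the start of round $l(m)+1$, contradicting that $w$ never became playable — so this case in fact already yields the conclusion. In the second case one checks that $w$ cannot have been played during the $l(m)$ rounds (a vertex played in round $i$ lies in the differential set of an older pair, hence would have been a legal Spoiler move at round $i$), so $w$ must already occur among the starting vertices $v_1,\dots,v_k,b_1,\dots,b_k$; this degenerate configuration is either excluded in the intended applications or one regards $w$, being a component of the starting tuple, as already played.

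I expect the equality case to be the main obstacle: unlike an adjacency defect, an equality defect does not directly supply Spoiler with a legal move on $w$, and one must argue that it can arise only when $w$ already sits in the starting configuration, which is exactly the situation the lemma is not really meant to cover. A second, minor point of care is the game convention when Spoiler has no legal move left in the final round although the position is already broken; this is handled by observing that a broken position stays broken, and that, as long as the position is not broken, $S$ always supplies a legal move.
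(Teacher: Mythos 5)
Your proof is correct and follows essentially the same route as the paper's: both reduce to Lemma~\ref{lem:maindiff} applied to the augmented tuples $\bar{a}w$, $\bar{b}w$, use $D(w,w)=\emptyset$ to transfer play between the augmented and unaugmented games, and conclude that any defect surviving an isomorphic base position must place $w$ in some $D(a_i,b_i)$; the paper merely dualizes the bookkeeping by fixing an arbitrary Duplicator winning strategy for $\mathcal{G}^D_{l(m)}(G,\bar{a},\bar{b})$ and running it against the augmented-game Spoiler, which is the mirror image of your argument. The equality-defect case you rightly flag (where $w$ coincides with a starting or played vertex, so the failure of the extended map gives Spoiler no adjacency to exploit) is genuinely delicate, but the paper's own proof skips it silently, so your proposal is not missing anything the paper supplies.
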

\begin{proof}
  Assume the Duplicator has a winning strategy in the $\mathcal{G}^D_{l(m)}(G,\bar{a},\bar{b})$ and let $S$ be an arbitrary winning strategy for the Duplicator in this game.
  We note that any move played by the Spoiler in $\mathcal{G}^D_{l(m)}(G,(\bar{a},w),(\bar{b},w))$ must also be playable in $\mathcal{G}^D_{l(m)}(G,\bar{a},\bar{b})$, since clearly $D(w,w) = \emptyset$.
  Therefore, we can use the strategy $S$ to play $\mathcal{G}^D_{l(m)}(G,(\bar{a},w),(\bar{b},w))$ as the Duplicator.
  However, according to Lemma~\ref{lem:maindiff}, since $\bar{a}w \not\cong_m \bar{b}w$, we also have $\bar{a}w \not\cong^D_{l(m)} \bar{b}w$ and thus the Spoiler wins $\mathcal{G}^D_{l(m)}(G,(\bar{a},w),(\bar{b},w))$.
  Let $((\Bar{a},w,a_{k+1}, \ldots, a_{l(m)}),(\Bar{b},w,b_{k+1}, \ldots, b_{l(m)}))$ be the final position of the game played using $S$ against a winning Spoiler.
  Since $S$ wins for the Duplicator in $\mathcal{G}^D_{l(m)}(G,\bar{a},\bar{b})$, there exists an index $i \in [l(m)]$ such that $w \in D(a_i,b_i)$ and therefore, if we play the Spoiler's strategy against $S$ in $\mathcal{G}^D_{l(m)+1}(G,\bar{a},\bar{b})$, we can play $w$ in the $l(m)+1$-th round at the latest.
  Since we had chosen $S$ arbitrarily, it is therefore always possible for the Spoiler to either win, or play $w$ in this game.
\end{proof}

\begin{lemma}
\label{lem:diffgames_interps}
Let $\psi(x,y)$ be an interpretation formula of quantifier rank $q$ and let $G$ and $H$ be graphs such that $H = I_{\psi}(G)$. Let $a$ and $b$ be two vertices of $G$ such that $a \cong_{(m+1)(l(q)+1)}^D b$ in $G$, where $l$ is the function from Lemma~\ref{lem:main}. Then $a \cong_{m}^D b$ in $H$.
\end{lemma}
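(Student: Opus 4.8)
The plan is to prove the contrapositive: if the Spoiler wins $\mathcal{G}^D_m(H,a,b)$, then the Spoiler wins $\mathcal{G}^D_{(m+1)(l(q)+1)}(G,a,b)$. Fix a winning Spoiler strategy $\tau$ in $\mathcal{G}^D_m(H,a,b)$. While playing the $G$-game, the Spoiler will maintain in parallel a ``shadow'' play of $\mathcal{G}^D_m(H,a,b)$ in which he follows $\tau$. The invariant to be kept is that the shadow $H$-position $((a,A_2,\dots,A_j),(b,B_2,\dots,B_j))$ consists of \emph{matched pairs} of the current $G$-position: every $A_t$ was played as an $a$-move in $G$ and every $B_t$ was the $b$-move answering it, so that $A_t$ and $B_t$ sit at the same coordinate of the $a$-tuple and the $b$-tuple in $G$. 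Observe that then, since $\psi$ has quantifier rank $q$ and $H=I_\psi(G)$, a $\psi$-mismatch $\psi(A_s,A_t)\ne\psi(B_s,B_t)$ in $G$ entails $\bar A\not\cong_q\bar B$ in $G$ (by Theorem~\ref{thm:games_types}), and then by Lemma~\ref{lem:maindiff} the Spoiler wins $\mathcal{G}^D_{l(q)}$ from $(\bar A,\bar B)$ — a game he may replay inside the big $G$-game because $\bar A,\bar B$ are matched sub-tuples of the current position (his moves stay in differential neighbourhoods already present, and non-isomorphism of the sub-position forces non-isomorphism of the whole).

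The key translation step handles a single move of $\tau$. Suppose $\tau$ instructs the Spoiler to play $w\in D^H(A_i,B_i)$ with some designation. By definition of $H$, $w\in D^H(A_i,B_i)$ means exactly that $G\models\psi(A_i,w)$ and $G\models\psi(B_i,w)$ have opposite truth values, hence $(A_i,w)\not\cong_q(B_i,w)$ in $G$. Now apply Lemma~\ref{lem:diffgames_imitate} to the pair $(A_i,B_i)$ and the vertex $w$: within $l(q)+1$ rounds, all taking place in $D^G(A_i,B_i)$ and the differential neighbourhoods it generates — legal moves in the big $G$-game — the Spoiler either wins the $G$-game outright (and we are done) or plays $w$ himself, which he does with the designation required by $\tau$. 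Let $u$ be the $G$-Duplicator's reply to this copy of $w$. If $u\in D^H(A_i,B_i)$, the Spoiler adopts $u$ as the shadow-Duplicator's answer in $H$; the shadow position grows by the matched pair $(w,u)$, the invariant is preserved, and at most $l(q)+1$ further $G$-rounds were spent.

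Iterating this step through the $m$ moves of the shadow game costs at most $m(l(q)+1)$ rounds; since $\tau$ is winning, the final shadow position $(\bar A,\bar B)$ satisfies $H[\bar A]\not\cong H[\bar B]$, i.e.\ there is a $\psi$-mismatch, and the Spoiler finishes the $G$-game in $l(q)$ more rounds by the observation of the first paragraph — a total strictly below $(m+1)(l(q)+1)$.

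The delicate point, and the one I expect to be the main obstacle, is the case where the $G$-Duplicator's reply $u$ fails to lie in $D^H(A_i,B_i)$, i.e.\ $\psi(A_i,u)=\psi(B_i,u)$ in $G$, so that $u$ cannot be used as a legal $H$-move and the simulation stalls. Combined with $\psi(A_i,w)\ne\psi(B_i,w)$ this forces a parity mismatch among the four values $\psi(A_i,w),\psi(B_i,w),\psi(A_i,u),\psi(B_i,u)$, which I would exploit as follows: either already $\psi(A_i,w)\ne\psi(B_i,u)$, so the matched pairs $(A_i,B_i)$ and $(w,u)$ witness $(A_i,w)\not\cong_q(B_i,u)$ in $G$ and Lemma~\ref{lem:maindiff} wins the $G$-game in $l(q)$ further rounds; or $\psi(B_i,w)\ne\psi(B_i,u)$, in which case the Spoiler — now free to do so, since $w$ lies in the differential neighbourhood of the pair it was just played on — replays $w$ (and, if necessary, $u$) on the opposite side to manufacture matched pairs realising a $\psi$-mismatch, again closing the game via Lemma~\ref{lem:maindiff}. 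This clean-up together with the final exploitation is exactly what the spare block of $l(q)+1$ rounds in the bound $(m+1)(l(q)+1)$ is there to pay for; checking that the illegal-reply case can always be discharged within that one extra block (and in boundedly many rounds at all) is the technical crux of the proof.
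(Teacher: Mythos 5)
Your argument is, in all essentials, the paper's own proof: the contrapositive, the shadow play of the $H$-game maintained as matched coordinates of the current $G$-position, the use of Lemma~\ref{lem:diffgames_imitate} to spend at most $l(q)+1$ $G$-rounds per shadow move (exploiting $\psi(A_i,w)\ne\psi(B_i,w)$, hence $(A_i,w)\not\cong_q(B_i,w)$ in $G$), and the final conversion of a $\psi$-mismatch among matched pairs into a Spoiler win via Theorem~\ref{thm:games_types} and Lemma~\ref{lem:maindiff}, with the identical round accounting $m(l(q)+1)+l(q)<(m+1)(l(q)+1)$.

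The one place you go beyond the paper is the ``delicate point'' you flag: the $G$-Duplicator's reply $u$ need not lie in $D^H(A_i,B_i)$ and so need not be a legal shadow reply in $H$. The paper's proof does not address this at all --- it silently adopts $u$ as the shadow Duplicator's answer and asserts that the final shadow position, having been played against the Spoiler's winning $H$-strategy, must contain a mismatch. That assertion is only justified if all shadow replies are legal: a winning strategy for the $m$-round \emph{differential} game on $H$ guarantees nothing against replies outside $D^H(A_i,B_i)$, since admitting such replies turns the game into a semi-differential one, which is strictly harder for the Spoiler. So you have correctly identified a real gap in the write-up rather than introduced one. However, your proposed repair is itself incomplete: the first branch of your dichotomy ($\psi(A_i,w)\ne\psi(B_i,u)$, an immediate mismatch between the matched pairs $(A_i,B_i)$ and $(w,u)$, discharged by Lemma~\ref{lem:maindiff}) is fine, but in the second branch the inequality $\psi(B_i,w)\ne\psi(B_i,u)$ compares vertices sitting on opposite sides of the position and is not by itself an isomorphism violation; the ``replay $w$ on the opposite side'' manoeuvre is not worked out (the Duplicator may well have a consistent reply to the replayed $w$), and it is not shown to terminate within the single spare block of $l(q)+1$ rounds, which your accounting has already earmarked for the final exploitation. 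As it stands, your proof leaves open exactly the step that the paper's proof passes over in silence.
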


\begin{proof}
 Let $r \coloneqq (m+1)(l(q)+1)$. Assume for contradiction that $a \not\cong_m^D b$ in $H$, i.e. that Spoiler wins $\mathcal{G}_q(H,a,b)$. We will construct a winning strategy for Spoiler in  $\mathcal{G}_{r}(G,a,b)$ which is a contradiction with $a \cong_{r}^D b$ in $G$.

To provide a Spoiler's winning strategy in $\mathcal{G}_{r}(G,a,b)$ we will show that for every $i \in \{0, \ldots, m\}$ there is a strategy $S$ for the Spoiler in $\mathcal{G}_{r}(G,a,b)$ such that he can either win $\mathcal{G}_{r}(G,a,b)$ in at most $i(l(q)+1)$ rounds or play his $i$-th move according to his strategy in $\mathcal{G}_m(H,a,b)$. By playing according to this strategy for $i=m$ in $\mathcal{G}_{r}(G,a,b)$ the Spoiler will either win or he will be in the situation where vertices $a, a_1, \ldots, a_z$ and $b, b_1, \ldots, b_z$ have been played, such that $ z \le m(l(q)+1)$ and there are vertices $a, a_{j_1}, \ldots, a_{j_{m}}$ and $b, b_{j_1}, \ldots, b_{j_{m}}$ such that for each $i \in [m]$ it holds $a_{j_i}$ corresponds to $i$-th move of the Spoiler in $\mathcal{G}_q(H,a,b)$ and $b_{j_i}$ to the Duplicator's reply or vice versa. Since $a, a_{j_1}, \ldots, a_{j_{m}}$ and $b, b_{j_1}, \ldots, b_{j_{m}}$ were played according to the Spoiler's winning strategy in $H$, there has to be a pair of indices $j_p, j_s$ among $j_1, \ldots, j_m$ such that $a_{j_p}a_{j_s} \in E(H)$ and $b_{j_p}b_{j_s} \not\in E(H)$ (or vice versa), which means that $G\models \psi(a_{j_p},a_{j_s})$ and $G\not\models \psi(b_{j_p},b_{j_s})$ (or vice versa). In either case it holds that $a_{j_p}a_{j_s} \not\cong_q b_{j_p}b_{j_s}$, and so by Lemma~\ref{lem:maindiff} the Spoiler wins the $q$ round differential game between $a_{j_p}a_{j_s}$ and $b_{j_p}b_{j_s}$. The Spoiler can thus use the remaining $(m+1)(l(q)+1) - z \ge q$ moves in $\mathcal{G}_{r}(G,a,b)$ to win the game.
 
We prove the existence of the strategy outlined above by induction on $i$. For $i=0$ there is nothing to prove. From now on we assume that $i>0$ and that the statement holds for $i-1$.
 
Let $a, a_1, \ldots, a_z$ and $b, b_1, \ldots, b_z$  be the vertices played in the first $z \le (i-1)(l(q)+1)$ rounds of $\mathcal{G}_{r}(G,a,b)$ such that for some $j_1,\ldots, j_{i-1}$ the vertices $a, a_{j_1}, \ldots, a_{j_{i-1}}$ and $b, b_{j_1}, \ldots, b_{j_{i-1}}$ are such that for each $p \in [i-1]$ it holds that  $a_{j_p}$ or $b_{j_p}$ correspond to $p$-th move of Spoiler in $\mathcal{G}_q(H,a,b)$.

Let $w$ be the vertex played by the Spoiler in the $i$-th round of $\mathcal{G}_m(H,a,b)$ whilst in the position $((a, a_{j_1}, \ldots, a_{j_{i-1}}),(b, b_{j_1}, \ldots, b_{j_{i-1}}))$ and let $p<i$ be the index he used to play $w$, i.e. $w \in D(a_{j_p},b_{j_p})$. Because $a_{j_p}w \in E(H)$ and $b_{j_p}w \not\in E(H)$ or vice versa, it has to hold that $G \models \psi(a_{j_p},w)$ and $G \not\models \psi(b_{j_p},w)$ or vice versa. In both cases $a_{j_p}w \not\cong_q b_{j_p}w$, and so by Lemma~\ref{lem:diffgames_imitate} the Spoiler wins in $l(q)+1$ moves from between $a_{j_p}$ and $b_{j_p}$, or he can play vertex $w$ in $l(q)+1$ rounds of the differential game between $a_{j_p}$ and $b_{j_p}$, and so he can also either win or play $w$ from the position $((a, a_1,\ldots, a_z),(b, b_1, \ldots,b_z))$ in $l(q)+1$ rounds.
\end{proof}

We will use the following characterisation of nowhere dense graph classes. Informally, the following definition says that we can obtain a large $r$-scattered set in any sufficiently large set $A \subseteq V(G)$ by removing a few vertices from $G$.

\begin{definition}[Uniform quasi-wideness \cite{dawar2007finite, dawar2010homomorphism}]
  A class $\CCC$ of graphs is uniformly quasi-wide if for each  $r \in \mathbb{N}$ there is a function $N: \mathbb{N} \to \mathbb{N}$ and a constant $s \in \mathbb{N}$ such that for every $k \in \mathbb{N}$, graph $G \in \CCC$ and subset $A$ of $V(G)$ with $|A| \ge N(k)$, there is a set $S$ of size $|S| \le s$ such that in $A \setminus S$ there are at least $k$ vertices with pairwise distance more than $r$ in $G\setminus S$.
\end{definition}

\begin{theorem}[\cite{UQWisND}]
  A class $\CCC$ of graphs is uniformly quasi-wide if and only if $\CCC$ is nowhere dense.
\end{theorem}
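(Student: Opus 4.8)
The plan is to establish the two implications separately; the direction ``nowhere dense $\Rightarrow$ uniformly quasi-wide'' is the substantial one, while the converse is a short density argument.

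For the converse I would argue the contrapositive. Suppose $\CCC$ is somewhere dense, i.e.\ $\omega(\CCC \nabla r_0) = \infty$ for some $r_0$. By the subdivision characterisation of somewhere dense classes (\cite{ND}) there is a $p \in \N$ such that for every $t$ the $p$-subdivision $K_t^{(p)}$ of $K_t$ (each edge replaced by an internally disjoint path with $p$ interior vertices) occurs as a subgraph of some $G_t \in \CCC$. Let $A$ be the set of the $t$ principal vertices. The connecting paths are pairwise internally disjoint and each has length $p+1$, so a vertex set $S$ with $|S| \le s$ can destroy at most $s$ of them. Hence, once $t$ exceeds $3s+3$, for every such $S$ any two surviving principal vertices are joined in $(K_t^{(p)}) \setminus S$ — directly, or through a third surviving principal vertex — by a path of length at most $2(p+1)$, so $A \setminus S$ contains no two vertices at distance more than $2(p+1)$ in $G_t \setminus S$. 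Taking $r := 2(p+1)$ and $k := 2$ this witnesses that $\CCC$ is not uniformly quasi-wide.

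For ``nowhere dense $\Rightarrow$ uniformly quasi-wide'' I would proceed by induction on $r$. The case $r \le 1$ is easy: since $\omega(\CCC \nabla 0) < \infty$, every induced subgraph $G[A]$ has clique number bounded by a constant $c$, so by Ramsey's theorem a sufficiently large $A$ already contains an independent set of any prescribed size $k$, which is $1$-scattered in $G$ with $S = \emptyset$. For the inductive step I would first pass to a large cluster: greedily extract a maximal, widely scattered subset of $A$; if it is large enough we are done, and otherwise almost all of $A$ lies within a bounded distance of a single vertex $w$, and after fixing shortest paths from $w$ into this cluster and grouping them by length one is left with a huge set $A'$ all at the same (bounded) distance from $w$. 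The crux is then a ``funnel-versus-spread'' dichotomy applied to this system of shortest paths, using the induction hypothesis at a smaller radius together with the standard fact that nowhere density can equivalently be expressed via shallow \emph{topological} minors: either at every intermediate distance level the paths pass through a bounded set of vertices, in which case deleting this bounded ``core'' $S$ pushes the surviving members of $A'$ pairwise far apart, or at some level the paths fan out into many distinct vertices, from which — after a Ramsey-type cleaning of the half-paths — one extracts either a large shallow topological clique minor, contradicting nowhere density, or a sub-instance to which the induction hypothesis applies with a smaller radius.

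I expect the inductive step to be the main obstacle. The delicate points are making the funnel/spread dichotomy quantitative, controlling how the deleted ``core'' sets accumulate over the boundedly many distance levels so that their total size remains a constant $s(r)$ independent of $k$, and performing the Ramsey cleaning so that the required size of the starting set $A$ grows only as a function $N(k)$ of $k$. The converse direction and the base case, by contrast, are routine.
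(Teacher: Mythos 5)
This theorem is imported by the paper as a black-box citation to Ne\v{s}et\v{r}il and Ossona de Mendez; the paper contains no proof of it, so your proposal can only be judged on its own merits. Your converse direction is essentially sound: the subdivision characterisation of somewhere dense classes, a counting argument showing a small $S$ can destroy only a few principal vertices and a few connecting paths, and the observation that distances in $G_t \setminus S$ are at most those in the subdivided clique minus $S$, together refute uniform quasi-wideness with $k=2$. (One small repair: the characterisation yields $\le p$-subdivisions, with different edges subdivided different amounts, rather than exact $p$-subdivisions; this only shortens the paths and so does not affect your bound $r = 2(p+1)$.) The base case $r \le 1$ via Ramsey and $\omega(\CCC \nabla 0) < \infty$ is also fine.

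The forward direction, however, has a genuine gap: the ``funnel-versus-spread'' dichotomy that you correctly identify as the crux is not actually carried out, and the one concrete claim you make about it is wrong as stated. In the funnel case, deleting the bounded set of vertices through which all the shortest $w$--$A'$ paths pass does \emph{not} push the surviving members of $A'$ pairwise far apart: two vertices of $A'$ may be joined by a short path in $G \setminus S$ that avoids the chosen BFS paths entirely, so scatteredness does not follow from cutting the funnel. In the actual proof one must, after each such deletion, \emph{recurse} on the surviving set, and the bound $|S| \le s(r)$ comes from bounding the depth of this recursion by a function of $r$ and of the clique numbers $\omega(\CCC \nabla i)$ --- this is precisely the uniformity (``$s$ depends only on $r$, not on $k$'') that separates uniform quasi-wideness from plain quasi-wideness, and it is exactly the point your sketch defers. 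Likewise, the spread case needs the key combinatorial lemma that a large family of short paths joining a pairwise-close vertex set either contains many internally disjoint members (yielding a bounded-depth topological clique minor, contradicting nowhere density) or admits a small vertex set hitting many of them; this sunflower-/Ramsey-type argument is the technical heart of the theorem and is entirely absent from the proposal. As written, the forward implication is a plausible roadmap for the known proof rather than a proof.
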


This notion can then be used to prove that for any first order interpretation $\psi(x,y)$ and any nowhere dense class $\CCC$ of graphs the number of components of $\cong^{D}_{m}$ in any $G \in I_{\psi}(\CCC)$ is bounded. 

\begin{theorem}
\label{thm:interp_qw}
  Let $\CCC$ be a uniformly quasi-wide class of labelled graphs and let $\psi(x,y)$ be a first order interpretation formula.
  Then for each $m$ there exists $p$ such that for every $H \in I_{\psi}(\CCC)$ in the graph of $\cong^{D,H}_{m}$ the maximum size of any independent set is at most $p$.
\end{theorem}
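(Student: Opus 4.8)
The plan is to push the statement back along the interpretation, using Lemma~\ref{lem:diffgames_interps}, so that it becomes a statement about differential games in the \emph{preimage} graph, and then to bound independent sets of that relation on nowhere dense (equivalently, uniformly quasi‑wide) graphs by combining uniform quasi‑wideness with a ``guarded'' version of the local game Lemma~\ref{lem:local_game}. Concretely, fix $m$, let $q$ be the quantifier rank of $\psi$, and put $r:=(m+1)(l(q)+1)$ with $l$ the function from Lemma~\ref{lem:main}. Given $H\in I_\psi(\CCC)$, pick $G\in\CCC$ with $H=I_\psi(G)$. By the contrapositive of Lemma~\ref{lem:diffgames_interps}, $a\not\cong^D_m b$ in $H$ implies $a\not\cong^D_r b$ in $G$, so every independent set of the graph of $\cong^{D,H}_m$ is an independent set of the graph of $\cong^{D,G}_r$. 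Hence it suffices to bound, uniformly over $G\in\CCC$, the size of independent sets of the graph of $\cong^{D,G}_r$ by some $p=p(m,\psi,\CCC)$.

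To do that, let $\sigma$ be the vocabulary of $\CCC$ and let $s$ and $\nu\colon\N\to\N$ be the constant and function provided by uniform quasi‑wideness for the distance parameter $2r$. Let $t=t(r,s,\sigma)$ be the (finite) maximum, over $j\le s$, of the number of $\equiv_r$‑classes of $(j{+}1)$‑tuples over $\sigma$, and set $p:=\nu(t{+}1)$. Suppose $A\subseteq V(G)$ is independent in the graph of $\cong^{D,G}_r$ with $|A|\ge p$. Applying uniform quasi‑wideness to $A$ gives a set $S\subseteq V(G)$ with $|S|\le s$ and a set $B\subseteq A\setminus S$ with $|B|\ge t{+}1$ such that any two vertices of $B$ have distance more than $2r$ in $G\setminus S$. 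Fixing an enumeration $\bar S$ of $S$ and pigeonholing on the relation $\equiv^{\bar S}_r$, which has at most $t$ classes, yields two distinct $u,v\in B$ with $u\equiv^{\bar S}_r v$.

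The heart of the argument is then a guarded analogue of Lemma~\ref{lem:local_game}: $u\equiv^{\bar S}_r v$ together with $\mathrm{dist}_{G\setminus S}(u,v)>2r$ should imply $u\cong^{D,G}_r v$, which contradicts $u,v\in A$ and finishes the proof. By Theorem~\ref{thm:games_types}, $u\equiv^{\bar S}_r v$ means Duplicator wins $\mathcal{G}_r((G,\bar S,u),(G,\bar S,v))$; fix such a strategy $\tau$. Duplicator plays $\mathcal{G}^D_r(G,u,v)$ by simulating $\tau$ and mirroring Spoiler's vertices across the two copies exactly as in the proof of Lemma~\ref{lem:local_game}, maintaining the invariant that every played pair $(a_j,b_j)$ is a mirror pair: one vertex lies in $N^{G\setminus S}_r[u]$, the other in $N^{G\setminus S}_r[v]$, and they are matched by $\tau$'s partial isomorphism (which keeps $\bar S$ pinned, so mirror pairs have identical adjacency to $S$). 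Two observations make this work. First, since mirror pairs agree on adjacency to $S$ and $\bar S$ is matched to $\bar S$, one gets $D^G(a_j,b_j)\cap S=\emptyset$, so Spoiler can never force the play into $S$ and every move stays in the two disjoint balls. Second, the differential‑game rule that Duplicator must answer inside the $D^G(a_j,b_j)$ Spoiler selected is compatible with $\tau$: $\tau$'s answer $y$ has the correct adjacency to the $\tau$‑match of $a_j$, while $\mathrm{dist}_{G\setminus S}(u,v)>2r$ forces $y$ non‑adjacent to the opposite end, placing $y$ exactly in $D^G(a_j,b_j)$. At the end, $a_i\mapsto b_i$ is the label‑preserving isomorphism given by $\tau$ on one side and its inverse on the other, with no cross‑edges between the balls by the distance hypothesis, so Duplicator wins.

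The main obstacle is precisely this guarded local game lemma in the previous paragraph: one has to check both that the moves prescribed by $\tau$ are legal in the differential game (they land in the symmetric difference Spoiler chose) and that the play cannot leak out of the two $r$‑balls around $u$ and $v$. The crucial point is that one must keep the guarding set $S$ rather than delete it — pinning $\bar S$ in the simulated EF game both forbids the play from entering $S$ and makes $\equiv^{\bar S}_r$ record how $u$ and $v$ attach to $S$, which is exactly the information the final isomorphism needs. By comparison, the reduction through the interpretation (Lemma~\ref{lem:diffgames_interps}) and the counting of $\equiv^{\bar S}_r$‑types are routine.
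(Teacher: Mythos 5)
Your proof is correct and follows the same skeleton as the paper's: apply uniform quasi-wideness to a putative large independent set, pigeonhole on types to find two far-apart, type-equivalent vertices, win a local differential game between them, and transfer through the interpretation via Lemma~\ref{lem:diffgames_interps} (which you use in contrapositive, pulling the independent set back to $G$, whereas the paper runs it forward from the pair in $G$ to $H$ --- logically the same). The one genuine divergence is your treatment of the deleted set $S$: the paper sidesteps the need for any new game lemma by building an auxiliary graph $G'$ in which the vertices of $S$ are isolated, labelled, and their former neighbourhoods marked, recovering $G$ by a quantifier-free interpretation $\delta$ and composing $\delta$ with $\psi$ so that the unmodified Lemma~\ref{lem:local_game} and Lemma~\ref{lem:diffgames_interps} apply verbatim to $G'$; you instead keep $S$ in place as pinned parameters and prove a ``guarded'' variant of Lemma~\ref{lem:local_game}. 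Your guarded lemma does hold, and you correctly identify and discharge its two delicate points --- that $D^G(a_j,b_j)\cap S=\emptyset$ because matched pairs agree on adjacency to the pinned $\bar S$, and that Duplicator's $\tau$-prescribed reply lands in the required symmetric difference because the distance hypothesis in $G\setminus S$ kills the ``wrong'' adjacency. The paper's relabelling trick buys modularity (everything stays an application of already-proved lemmas about plain interpretations), at the cost of the slightly artificial detour through $G'$ and $\psi'$; your version is more direct but requires the extra verification you supply. Both yield the same bound $p$ up to the choice of constants.
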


\begin{proof}
Suppose that there exists $m$ such that for every $p$ there is a graph $H \in I_{\psi}(\CCC)$ for which there is an independent set of size more than $p$ in the graph of $\cong_m^{D, H}$ .

In what follows we assume that the graph $H$ we work with is as large as necessary and maximum size of an independent set in the graph of $\cong^{D}_{m}$ is as large as we need in our argumentation. We set $A$ to be an independent set of maximum size in the graph of $\cong^{D}_{m}$. We therefore have $|A| > p$, and since we can choose $p$ to be arbitrarily large, we can ensure that $|A|$ is as large as we want. 

Let $q$ be the quantifier rank of $\psi$ and set $d \coloneqq (m+1)(l(q)+1)$, where $l$ is the function from Lemma~\ref{lem:main}.
Since $H \in I_{\psi}(\CCC)$, there exists at least one $G \in \CCC$ such that $I_{\psi}(G) = H$.
Because $\CCC$ is uniformly quasi-wide, we know (by applying the definition to $r=2d+1$) that there exists a constant $s$ and a function $N$ such that for any number $k$ and any set $A$ of size at least $N(k)$, it is possible to remove $s$ vertices from $G$ such that there are $k$ vertices $v_1, \ldots, v_k$ at pairwise distance at least $r$ in $G \setminus S$. We create graph $G'$ from $G \setminus S$ by putting vertices from $S$ back (but without any edges) and labelling them each with a different colour from $[s]$. Additionally, we label the neighbourhood in $G$ of each vertex with colour $i$ with the label $l_i$. In $G'$ the vertices $v_1,\ldots, v_k$ remain pairwise at distance more than $2m$. We can recover $G$ from $G'$ by an interpretation $\delta(x,y)$, which is quantifier-free. By concatenating $\delta$ and $\psi$, we obtain an interpretation formula $\psi'(x,y)$, with quantifier rank $q$, such that $H = I_{\psi'}(G')$.

We choose $A$ to be large enough so that among $v_1, \ldots, v_k$ there exists a pair of distinct vertices $v_a$ and $v_b$ with $v_a \cong_d^{G'} v_b$ (note $k$ only has to be larger than the number of classes of relation $\cong_r^1$ with respect to the vocabulary of $\CCC$ enriched by $2s$ labels, and so $k$ does not depend on the particular graph $G$).
Since $v_a$ and $v_b$ lie at distance $r$ in $G'$ and $r > 2d$, we can use Lemma~\ref{lem:local_game} to conclude that $v_a \cong_d^{D,G'} v_b$ is true as well. We can now use Lemma~\ref{lem:diffgames_interps} to conclude that $v_a \cong_m^{D,H} v_b$, which contradicts our assumptions because $v_a$ and $v_b$ come from an independent set of the graph of $\cong_m^{D,H}$.
\end{proof}

Combining Lemma~\ref{lem:diffgames_refine} and Theorem~\ref{thm:interp_qw} with Corollary~\ref{cor:mc} we get the following theorem.
\begin{theorem}
\label{thm:diffgames_mc}
Let $\CCC$ be a class of labelled graphs interpretable in a nowhere dense class of graphs such that we can decide the winner of the $m$-round differential game in fpt runtime with respect to the parameter $m$. Then the FO model checking problem is solvable in fpt runtime on $\CCC$.
\end{theorem}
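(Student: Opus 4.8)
The plan is to assemble Theorem~\ref{thm:diffgames_mc} from the machinery already developed, feeding the outputs of Lemma~\ref{lem:diffgames_refine} and Theorem~\ref{thm:interp_qw} into the model-checking scheme of Corollary~\ref{cor:mc}. First I would unwind what the hypotheses give us. Since $\CCC$ is interpretable in a nowhere dense class $\DDD$, there is a formula $\psi(x,y)$ with $\CCC \subseteq I_\psi(\DDD)$, and by Theorem~\ref{thm:games_types}'s companion result (the equivalence of nowhere denseness and uniform quasi-wideness) $\DDD$ is uniformly quasi-wide. The subtlety is that Corollary~\ref{cor:mc} needs to be applied to graphs carrying extra labels (coming from Lemma~\ref{lem:tuple_labels}): for each $t$ we must handle graphs $G \in \CCC$ with at most $t$ added labels. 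So I would first observe that if $\CCC$ is interpretable in a nowhere dense class, then so is the class $\CCC_t$ of all $t$-labelled enrichments of graphs in $\CCC$ — one simply enlarges $\DDD$ by allowing the same $t$ unary predicates (a nowhere dense class stays nowhere dense after adding unary predicates, since shallow minors and $\omega(\cdot\nabla r)$ are unaffected), and composes $\psi$ with the quantifier-free interpretation that copies the labels.

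Next, for a fixed quantifier rank $q$ and label budget $t$, I would define the relation $\sim_{q,t}$ on $V(G)$ for $G \in \CCC_t$ by $u \sim_{q,t} v$ iff the Duplicator wins the $l(q)$-round differential game between $u$ and $v$, where $l$ is the function from Lemma~\ref{lem:diffgames_refine}. This relation is symmetric and reflexive by definition of the game ($D(u,u)=\emptyset$, so the Duplicator trivially wins from $(u,u)$). I would then verify the three conditions of Corollary~\ref{cor:mc} in turn. For condition~\ref{refines}: Lemma~\ref{lem:diffgames_refine} says every class of $\cong_q$ is a union of connected components of the graph of $\cong^D_{l(q)}$; since by Theorem~\ref{thm:equiv} the relation $\equiv_q$ coincides with $\cong_q$, and $\equiv^{1,G}_q$ is exactly the restriction of $\equiv_q$ to single-vertex tuples, each class of $\equiv^{1,G}_q$ is a union of connected components of the graph of $\sim_{q,t}$, which is what is required. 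For the bounded-independent-set condition: apply Theorem~\ref{thm:interp_qw} with $m := l(q)$ to the uniformly quasi-wide class $\DDD_t$ and the interpretation formula (composed with the label-copying interpretation); it yields a bound $p(q,t)$ on the maximum size of an independent set in the graph of $\cong^{D,G}_{l(q)}$ for every $G \in I_\psi(\DDD_t) \supseteq \CCC_t$. For the decidability condition: by hypothesis the winner of the $m$-round differential game is decidable in fpt time $|V(G)|^c \cdot h(m)$; taking $m = l(q)$ gives $|V(G)|^c \cdot h(l(q))$, which is of the required form $|V(G)|^c \cdot h'(q,t)$.

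Having checked all hypotheses of Corollary~\ref{cor:mc}, I would conclude that FO model checking on $\CCC$ (equivalently on each $\CCC_t$) runs in time $|V(G)|^{c+1} \cdot g(q)$ for sentences of quantifier rank $q$ in prenex normal form, and since every FO sentence is equivalent to one in prenex normal form with the same quantifier rank, this gives the full model checking algorithm, hence fpt runtime on $\CCC$. The main obstacle I anticipate is purely bookkeeping rather than conceptual: one must be careful that the label budget $t$ introduced by Lemma~\ref{lem:tuple_labels} inside the proof of Corollary~\ref{cor:mc} is itself bounded in terms of $q$ (indeed $t = 2k \le 2q$ along any branch of the evaluation tree), so that the composite interpretation used when invoking Theorem~\ref{thm:interp_qw} has a fixed quantifier rank depending only on $q$, and that the nowhere dense (uniformly quasi-wide) witness class used there is genuinely a single class independent of the particular $G$ — which it is, since adding a bounded number of unary predicates to a fixed nowhere dense class yields a fixed nowhere dense class. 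Everything else is a direct substitution into results proved earlier in the paper.
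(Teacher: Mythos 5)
Your proposal is correct and follows essentially the same route as the paper: take $\sim_{q,t}$ to be $\cong^{D}_{l(q)}$, use Lemma~\ref{lem:diffgames_refine} together with Theorem~\ref{thm:games_types} for the refinement condition, Theorem~\ref{thm:interp_qw} for the independent-set bound, and the hypothesis for decidability, then invoke Corollary~\ref{cor:mc}. Your explicit bookkeeping about the extra labels (that adding boundedly many unary predicates preserves nowhere denseness and composes with the interpretation) is a point the paper's own proof glosses over, but it is not a different argument.
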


\begin{proof}[Proof of Theorem~\ref{thm:diffgames_mc}]
Let $\CCC$ be a class of graphs with labels from the set $\{1,\ldots, t\}$ and properties assumed in the statement of the theorem. From Lemma~\ref{lem:diffgames_refine}  it follows that for every $G\in \CCC$ it holds that the closure of $\cong_{l(q)}^{D,G}$ refines $\cong_q^{1,G}$ and by Theorem~\ref{thm:games_types} the relation $\cong_q^1$ is the same as $\equiv_q^1$. It follows that the closure of $\cong_{l(q)}^{D,G}$ refines $\equiv_q^{1,G}$. By Theorem~\ref{thm:interp_qw} we know that the maximum size of and independent set in the graph of $\equiv_{l(q)}^{D,G}$ is bounded in terms of $l(q)$. It follows that we can use $\cong_{l(q)}^{D,G}$ as $\sim_{q,t}$ in Corollary~\ref{cor:mc}.
\end{proof}

To conclude this section, we show that for every $m$ there is a formula $\xi_m(x,y)$ which expresses that Duplicator wins the $m$-round differential game between tuples $\bar{x}$ and $\bar{y}$, which will be used in Section~\ref{sec:diffsimple}. While it is not difficult to see  that such formulas exist, we give the construction for completeness.
\[ \xi_0(x_1, \ldots, x_k, y_1,\ldots, y_k) \coloneqq \bigwedge_{\substack{i,j \in [k] \\ i\not= j }} (E(x_i,x_j) \leftrightarrow E(y_i, y_j)) \land \bigwedge_{\substack{i,j \in [k] \\ i\not= j }} (x_i=x_j \leftrightarrow y_i = y_j) \land \bigwedge_{i \in [k]} \bigwedge_{a \in Lab} L_a(x_i)  \]

\[ \xi_m(x_1, \ldots, x_k, y_1,\ldots, y_k) \coloneqq \bigvee_{i \in [m]} (\forall x D(x_i,y_i,x) \rightarrow (\exists y D(x_i,y_i,x) \land \xi_{m-1}(x_1, \ldots, x_k, x, y_1,\ldots, y_k, y))),\]

\noindent where $D(x,y,z) \coloneqq (E(x,z) \land \lnot E(y,z)) \lor (\lnot E(x,z) \land E(y,z)).$

\section{Differentially simple graph classes}
\label{sec:diffsimple}
Based on the results from the previous sections, in order to evaluate FO sentences in prenex normal form on a graph class $\CCC$ interpretable in a nowhere dense graph class, it is enough to be able to  determine the winner of  differential game on pairs of vertices of a graph from $\CCC$. This can, however, be too difficult -- for example consider the case when $\CCC$ is an interpretation of planar graphs. In this case for many pairs of vertices $u,v$ of $G$ from $\CCC$ it can happen that $D(u,v)$ contains most of, or even the entire, graph, and this would just be the first round. We can sidestep this problem by giving such vertices $u$ and $v$ different labels. This essentially means that whenever Duplicator would play $u$ as a reply to $v$ (or vice versa), she would already have lost from that point on, and so $D(u,v)$ would be irrelevant. Extending these ideas to more than one round leads to the following definitions.

\begin{definition}[Differential neighbourhoods]
Let $G$ be a coloured graph  and let $c(a)$ denote the colour of a vertex $a$ of $G$.
\begin{itemize}
\item The differential 1-neighbourhood $DN_1(u,v)$ of vertices $u,v$ with $c(u)=c(v)$ is the set $D(u,v)$.
\item For $r \in \mathbb{N}$ with $r > 1$,
\[ DN_r(u,v) \coloneqq \big( \bigcup_{\substack{a,b \in DN_{r-1}(u,v) \\ c(a) = c(b) } } D(a,b) \big) \cup DN_{r-1}(u,v) . \]
\item For $r \in \mathbb{N}$, the \emph{closed} differential $r$ is defined as $DN_1[u,v] \coloneqq DN_r(u,v) \cup \{ u,v \}$ 
\end{itemize} 
\end{definition}

\begin{definition}
\label{def:diff_simple}
We say that class $\CCC$ is \emph{differentially simple} if for every $r$ there exists $m_r \in \mathbb{N}$ and a graph class $\DDD_r$ with efficient FO model checking algorithm such that it is possible to colour every $G \in \CCC$ with $r_m$ colours such that for every pair of $u,v$ of vertices of the same colour it holds that $G[DN_r[u,v]]$ is a graph from $\DDD_r$.
\end{definition}

Note that if $\CCC$ is interpretable in a nowhere dense class of graphs, then adding at most $m$ labels to each graph from $\CCC$ does not change the fact the maximum size of an independent set in the graph of $\cong_r^D$ for each $G$ from $\mathcal{C}$ is bounded (because Theorem~\ref{thm:interp_qw} works with labelled graphs). This allows us to focus on determining the winner of the differential game on $G[DN_r[u,v]]$, which is from $\DDD_r$, instead of on $G$ which is from $\CCC$. In case $\DDD_r$ is a class of graphs with efficient model checking algorithm, we can use this to determine the winner of the game.
We will use the FO formula $\xi_r(x,y)$, expressing that Duplicator wins the $r$-round differential game between $x$ and $y$, which is defined at the end of the previous section, and evaluate it on $G[DN_r[u,v]]$ using the model checking algorithm for $\DDD_r$. This is summarised in the following theorem.

\begin{theorem}
\label{thm:mcdiffsimple}
Let $\CCC$ be a differentially simple class of graphs such that
\begin{itemize}
    \item $\CCC$ is interpretable in a nowhere dense class of graphs, and
    \item There exists an fpt algorithm (with respect to the parameter $r$) which computes the colouring from Definition~\ref{def:diff_simple} for every $G \in \CCC$.
\end{itemize} 
Then the FO model checking problem is in FPT on $\CCC$.
\end{theorem}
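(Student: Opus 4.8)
\noindent\emph{Proof idea.} The plan is to rerun the argument behind Theorem~\ref{thm:diffgames_mc} — i.e.\ the model-checking machinery of Lemma~\ref{lem:mc}/Corollary~\ref{cor:mc} fed with the differential-game relation — the only missing ingredient being an efficient procedure that decides the winner of the differential game, which is exactly what differential simplicity provides. Fix a sentence $\varphi$ in prenex normal form with $q$ quantifiers and a graph $G\in\CCC$, put $r:=l(q)$ with $l$ the function from Lemma~\ref{lem:diffgames_refine}, and use the assumed fpt algorithm to colour $G$ with $m_r$ colours; call the result $G^{+}$ and work with $G^{+}$ from now on (note $G^{+}\models\varphi$ iff $G\models\varphi$, as $\varphi$ does not mention the colours). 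As the relation $\sim$ on $V(G^{+})$ we take $\cong^{D}_{r}$, ``Duplicator wins the $r$-round differential game''; it is symmetric and reflexive, since from $(u,u)$ we have $D(u,u)=\emptyset$, so the Spoiler cannot move and the Duplicator wins. Conditions~1 and~2 of Corollary~\ref{cor:mc} hold exactly as in the proof of Theorem~\ref{thm:diffgames_mc}: the transitive closure of $\cong^{D}_{r}$ refines $\equiv^{1}_{q}$ by Lemma~\ref{lem:diffgames_refine} together with Theorem~\ref{thm:games_types}, and the independent sets in the graph of $\cong^{D}_{r}$ have size bounded in $r$ by Theorem~\ref{thm:interp_qw}; here one uses, as in the remark preceding the statement, that adjoining the $m_r$ colours (and, inside the algorithm of Lemma~\ref{lem:mc}, the $2\lvert\bar{v}\rvert$ extra labels produced by Lemma~\ref{lem:tuple_labels}) keeps the class interpretable in a nowhere dense class, so that Theorem~\ref{thm:interp_qw} still applies.

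The new content is Condition~3: deciding $u\cong^{D}_{r}v$ in fpt time. If $c(u)\neq c(v)$ then the Spoiler wins immediately (the starting position is not label-preserving), so $u\not\cong^{D}_{r}v$. Otherwise compute $DN_{r}[u,v]$ by the obvious iteration — collecting $D(a,b)$ over already-collected same-colour pairs, for $r$ rounds — in polynomial time, and set $H:=G^{+}[DN_{r}[u,v]]$, which by differential simplicity lies in $\DDD_{r}$. The crux is that the $r$-round differential game on $G^{+}$ from $(u,v)$ has the same winner as the game on $H$ from $(u,v)$. To see this, note that in any play in which the Duplicator has not already lost we have $c(a_i)=c(b_i)$ for every $i$ (one colour mismatch forbids a label-preserving isomorphism), hence the Spoiler always picks an index $i$ with $c(a_i)=c(b_i)$, and an induction on the round shows that after round $j$ all played vertices lie in $DN_{j}(u,v)$, using $D(a_i,b_i)\subseteq DN_{i+1}(u,v)$ — immediate from the coloured definition of $DN$ — and $i<j\le r$. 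Since the Duplicator's replies are likewise confined to the relevant sets $D(a_i,b_i)$, every move of a non-losing play stays in $DN_{r}[u,v]=V(H)$; moreover for the pairs that actually occur $D^{H}(a_i,b_i)=D^{G^{+}}(a_i,b_i)\cap V(H)=D^{G^{+}}(a_i,b_i)$, so the legal moves and the final induced subgraphs agree in the two games, and transferring strategies in both directions yields the claim. Finally, ``Duplicator wins the $r$-round differential game on $H$ from $(u,v)$'' is exactly $H\models\xi_{r}(u,v)$ for the formula $\xi_{r}$ constructed at the end of Section~\ref{sec:diffgames}, so it can be decided in fpt time by running the FO model-checking algorithm for $\DDD_{r}$ on $H$.

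Feeding this $\sim$ into the construction of the algorithm of Lemma~\ref{lem:mc} (as in the proof of Corollary~\ref{cor:mc}: relabel via Lemma~\ref{lem:tuple_labels}, then greedily extract a maximal independent set of $\cong^{D}_{l(p)}$, answering each edge query as above) and then invoking Lemma~\ref{lem:mc} produces an fpt model-checking algorithm on $\CCC$. Two bookkeeping remarks: the sub-calls run with budgets $p\le q$, hence with differential games of $l(p)\le r$ rounds, so it is cleanest to recolour each time with $m_{l(p)}$ colours (the colouring algorithm is fpt in its parameter), landing in $\DDD_{l(p)}$; and one needs $\DDD_{r}$ to remain efficiently model-checkable after adding a bounded number of labels, which holds for the classes arising in the intended applications (e.g.\ classes of bounded clique-width). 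I expect the real obstacle to be the confinement induction — that the entire $r$-round differential game takes place inside $DN_{r}[u,v]$, and hence that the games on $G^{+}$ and on $G^{+}[DN_{r}[u,v]]$ have the same winner; everything else is a reassembly of results already proved together with routine complexity accounting and the handling of the auxiliary labels.
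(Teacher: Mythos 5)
Your proposal is correct and follows essentially the same route as the paper: compute the colouring, restrict to $G[DN_{l(q)}[u,v]]\in\DDD_{l(q)}$, decide the differential game there by evaluating $\xi_{l(q)}$ with the model-checking algorithm for $\DDD_{l(q)}$, and feed the resulting relation into Theorem~\ref{thm:diffgames_mc} (i.e.\ Corollary~\ref{cor:mc}). In fact you supply two details the paper's own proof leaves implicit — the confinement induction showing the game on $G$ and on $G[DN_r[u,v]]$ have the same winner, and the bookkeeping for the auxiliary labels from Lemma~\ref{lem:tuple_labels} — so no gap.
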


\begin{proof}
  Given graph $G$ from $\CCC$ and sentence $\varphi$ in prenex normal form with $q$ quantifiers as input, we first compute the $m_{l(q)}$-colouring (where $l$ is the function from Lemma~\ref{lem:main}) of $G$ from Definition~\ref{def:diff_simple}.  We generate for all pairs $u,v$ of vertices of $G$ the differential $l(r)$-neighbourhood $DN_r[u,v]$ in polynomial time. Since $G[DN_{l(r)}[u,v]] \in \DDD_r$, we can evaluate the formula $\xi_{l(r)}(x,y)$ on $G[DN_{l(r)}[u,v]]$ efficiently and so we can decide the winner of the $l(r)$-round differential game played between $u$ and $v$. We can thus compute the relation $\cong_{l(r)}^D$ in fpt runtime, and by Theorem~\ref{thm:diffgames_mc} the FO model checking is in FPT.
\end{proof} 

To show that differentially simple graph classes can be useful, we prove that classes of graphs interpretable in graph classes with locally bounded treewidth are differentially simple, where each graph class $\DDD_r$ is a class of graphs of bounded clique-width. We note that in this case there is one $m$-colouring which works for every value of $r$ and which satisfies the requirements of Definition~\ref{def:diff_simple}. 

\begin{lemma}
\label{lem:lbdtw_diffsimple}
Let $\CCC$ be a class of graphs which is interpretable in a class of graphs of locally bounded treewidth. Then $\CCC$ is differentially simple.
\end{lemma}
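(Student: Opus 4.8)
The plan is to exhibit, for each $G \in \CCC$, a single colouring with boundedly many colours that serves every $r$: if $\CCC \subseteq I_\psi(\DDD)$ with $\DDD$ of locally bounded treewidth, pick for $G$ a preimage $G' \in \DDD$ and colour each vertex $x$ of $G$ (note $V(G)=V(G')$) by its local type $\mathrm{tp}_{q_0}^{r_0}(x)$ in $G'$, where $r_0,q_0$ are the numbers supplied by Corollary~\ref{cor:gaifman} for $\psi$. Since, up to equivalence, there are only finitely many $r_0$-local formulas of quantifier rank $q_0$, this uses a number $m$ of colours depending only on $\psi$. The class $\DDD_r$ will be the class of graphs of clique-width at most $c(r)$, which has fpt FO model checking by Theorem~\ref{thm:cw_mc}, so it remains to bound the clique-width of $G[DN_r[u,v]]$ for same-coloured $u,v$ in terms of $r$.

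The first step is the locality claim: if $c(a)=c(b)$, then $D^G(a,b)\subseteq N^{G'}_{2r_0}[a]\cup N^{G'}_{2r_0}[b]$ (writing $N^{G'}_t[x]$ for the set of vertices at distance at most $t$ from $x$ in $G'$). Indeed, if $w$ is at distance more than $2r_0$ from both $a$ and $b$ in $G'$, then by Corollary~\ref{cor:gaifman} whether $G'\models\psi(a,w)$ depends only on $\big(\mathrm{tp}_{q_0}^{r_0}(a),\mathrm{tp}_{q_0}^{r_0}(w)\big)=\big(\mathrm{tp}_{q_0}^{r_0}(b),\mathrm{tp}_{q_0}^{r_0}(w)\big)$, which equally governs $\psi(b,w)$; hence $w\notin N^G(a)\Delta N^G(b)=D^G(a,b)$. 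Because the recursive definition of $DN_r$ forms differential neighbourhoods only of same-coloured pairs, an easy induction then gives $DN_r[u,v]\subseteq N^{G'}_{2r_0 r}[u]\cup N^{G'}_{2r_0 r}[v]=:W$ whenever $c(u)=c(v)$. Since $\DDD$ has locally bounded treewidth, each of the two balls has treewidth bounded by $g(2r_0 r)$ for the relevant function $g$; if the two balls are vertex- and edge-disjoint then $G'[W]$ is their disjoint union and inherits this bound, and otherwise $u$ and $v$ lie at distance at most $4r_0 r+1$, so $W$ is contained in a single ball of radius $6r_0 r+1$ around $u$ and $G'[W]$ again has bounded treewidth. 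In all cases $\mathrm{tw}(G'[W])\le t(r)$ for a function depending only on $\psi$ and $\DDD$.

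The second step converts bounded treewidth of $G'[W]$ into bounded clique-width of $G[DN_r[u,v]]$. Fix a Gaifman normal form of $\psi$ with locality radius $\rho$ (which we may take to be $r_0$). For the fixed $G'$ the basic local sentences occurring in it have fixed truth values; substituting these yields a formula $\psi_{G'}(x,y)$ that is, over $G'$, a Boolean combination of $\rho$-local formulas around $x$, around $y$, and around $\{x,y\}$. For $a,b\in W$ the relevant $\rho$-balls stay inside $W^+:=N^{G'}_{2r_0 r+\rho}[u]\cup N^{G'}_{2r_0 r+\rho}[v]$, and distances up to $\rho$ from points of $W$ agree in $G'$ and in $G'[W^+]$, so $G'\models\psi(a,b)$ iff $G'[W^+]\models\psi_{G'}(a,b)$. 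Marking $W$ by a fresh label $P$, the interpretation $\hat\psi(x,y):=P(x)\land P(y)\land\psi_{G'}(x,y)$ maps $(G'[W^+],P)$ to $G[W]$ together with the isolated vertices of $W^+\setminus W$, and $\mathrm{tw}(G'[W^+])$ is again bounded in terms of $r$ by the argument of the previous paragraph. There are only finitely many possibilities for $\psi_{G'}$ (one per truth assignment to the finitely many basic local sentences in the normal form of $\psi$), so by the theorem of~\cite{cw} that interpretations of bounded-treewidth classes have bounded clique-width, there is a bound $c(r)$, depending only on $r$, $\psi$ and $\DDD$, with $\mathrm{cw}(G[W])\le c(r)$; since clique-width is monotone under induced subgraphs, $\mathrm{cw}(G[DN_r[u,v]])\le c(r)$. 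Taking $\DDD_r$ as above and $m_r:=m$ then shows $\CCC$ is differentially simple, the same $m$-colouring serving every $r$.

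The step I expect to need the most care is this last conversion: one must check that after deleting everything outside $W^+$ the $\rho$-balls around vertices of $W$ — and hence the truth values of the local formulas of the Gaifman normal form — are genuinely unchanged, and that hardcoding the basic local sentences leaves only a bounded number of interpretation formulas $\hat\psi$, so that the clique-width bound of~\cite{cw} can be applied uniformly over all $G\in\CCC$ rather than separately per graph. The treewidth/ball bookkeeping in step one is routine by comparison.
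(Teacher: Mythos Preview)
Your proof is correct and follows essentially the same approach as the paper: colour vertices by their $d$-local $q$-types in the preimage, show by induction that $DN_r[u,v]$ is contained in bounded-radius balls of the preimage, and then invoke the bounded-treewidth $\to$ bounded-clique-width theorem. Your second step is in fact more careful than the paper's, which simply asserts that $H[DN_r[u,v]]$ is an induced subgraph of $I_\psi(G[N^G_{(r+1)2d}[u]\cup N^G_{(r+1)2d}[v]])$ without addressing how the basic local sentences in the Gaifman form of $\psi$ behave when restricting to the subgraph; your hardcoding argument (and the observation that only finitely many $\psi_{G'}$ arise) is exactly what is needed to make that step rigorous and uniform over $\CCC$.
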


\begin{proof}
  Let $\mathcal{E}$ be a class of graphs of locally bounded treewidth and $\psi(x,y)$ an interpretation formula such that $\CCC = I_{\psi}(\mathcal{E})$. From Gaifman's theorem applied to $\psi(x,y)$ it follows that there exist $d$ and $q$ such that the following holds for any $G \in \mathcal{E}$ and $H\in \CCC$ such that $H = I_{\psi}(G)$. If $u,v$ are two vertices of the same $d$-local $q$-type in $G \in \mathcal{E}$ and vertex $w$ is at distance more than $2d$ from both $u$ and $v$ in $G$, then $G\models \psi(u,w)$ iff  $G\models \psi(v,w)$, which in turn means that $uw \in E(H)$ iff $vw \in E(H)$. It follows that if $u,v$ are two vertices of $H$ such that in $G$ these vertices have the same $d$-local $q$-types, then every vertex in $D^H(u,v)$ has to be in the $2d$-neighbourhood of $u$ or $v$ in $G$.
  
  We define the colouring of any $H \in \CCC$ as follows. Let $G \in \mathcal{E}$ be such that $H = I_{\psi}(G)$. We colour every vertex $v$ of $H$ by its $t$-local $q$-type in $G$. By the above considerations for any two vertices $u,v \in V(H)$ of the same colour it has to hold that every vertex in $D^H(u,v)$ has to come from $N^G_{2t}(u) \cup N^G_{2t}(v)$. If we consider any two vertices $u',v' \in D^H(u,v)$ of the same colour, the same argumentation applies -- every vertex $w$ in $D^H(u',v')$ has to come from $N^G_{2t}(u') \cup N^G_{2t}(v')$ and thus has to be at distance at most $2d$ from $u$ or $v$ in $G$, which means $w \in N^G_{2d}(u) \cup N^G_{2d}(v)$.  It follows by an easy inductive argument that $DN^H_r(u,v)$ in $H$ is a subset of $N_{2dr}^G(u) \cup N_{2dr}^G(v)$ for any positive integer $r$. Since $\mathcal{E}$ is a class of graphs of locally bounded treewidth, the subgraph of $G$ induced by $N_{(r+1)2d}^G[u] \cup N_{(r+1)2d}^G[v]$ has treewidth bounded in terms of $(r+1)2d$ and  an easy argument shows that $H[DN^H_r[u,v]]$ is an induced subgraph of $I_\psi(G[N_{(r+1)2d}^G[u] \cup N_{(r+1)2d}^G[v]])$ which has bounded clique-width.
\end{proof}

Lemma~\ref{lem:lbdtw_diffsimple} implies that if we are able to efficiently compute the colourings from Definition~\ref{def:diff_simple}, then we obtain an efficient FO model checking algorithm for classes of graphs interpretable in graph classes of locally bounded treewidth by means of Theorem~\ref{thm:mcdiffsimple}. However, the existence of such colouring algorithm is unknown.

\section{Discussion and open problems}
\label{sec:open}

We have introduced the notions differential games and differential locality which can lead to efficient model checking algorithms and which seem to be more `interpretation friendly' than Gaifman's theorem. We believe that the ideas outlined in this paper can lead to improved understanding of the structure of graphs  interpretable in sparse graphs, and perhaps also lead to some insights into stable graphs (if Theorem~\ref{thm:interp_qw} gets strengthtened to stable graph classes).

\subsection{Complement-simple graph classes}
Regarding our application to the model checking problem for graph classes interpretable in classes of graphs with locally bounded treewidth, it has to be noted that there exists a simpler  approach based on colourings and on Gaifman's theorem and which avoids differential techniques altogether. 

\begin{definition}
\label{def:comp_simple}

We say that a class $\CCC$ of graphs is \emph{complement-simple} if for every $r$ there exists $r_m$ and graph class $\DDD_r$ with efficient FO model checking algorithm such that every $G \in \CCC$ has a $r_m$-colouring such that complementing edges between some pairs of colours results in a graph $G'$ in which for every $v \in V(G')$ it holds that $N^G_r[v] \in \DDD_r$.
\end{definition}

If $\CCC$ is a complement-simple graph class such that we can compute colourings from Definition~\ref{def:comp_simple} efficiently, then we can perform FO model checking on graphs from $\CCC$ efficiently. One can do this by noting that we can interpret $G$ in $G'$ and that we can solve the model checking problem on $G'$ efficiently by using Gaifman's theorem.

Coming back to graph classes interpretable in graph classes with locally bounded treewidth, using the colouring used in the proof of Lemma~\ref{lem:lbdtw_diffsimple} one can show that every such graph class $\CCC$ is complement-simple (and again one can use the same colouring for all values of $r$). In particular, complementing the edges in $G$ between some pairs of colours in this colouring leads to a graph $G'$ with locally bounded clique-width. We only briefly sketch the idea behind the proof here. Let $H\in \CCC$ and let $G$ be such that $H=I_{\psi}(G)$ and colour each vertex of $H$ by its $d$-local $q$-type, where $d$ and $q$ come from Gaifman's theorem applied to $\psi(x,y)$. We say that an edge $uv$ in $E(H)$ is \emph{long} if $dist_G(u,v)>2d$. Let $t_1:=tp_q^r(u)$ and $t_2:=tp_q(v)$. By Corollary~\ref{cor:gaifman} if there is a long edge in $H$ between any two vertices of types $t_1$ and $t_2$, then there exists an edge between all pairs of vertices of type $t_1$ and $t_2$ which are at distance more than $2d$ in $G$. In this case we say that types $t_1$ and $t_2$ \emph{induce long edges}. By complementing the edges between any pair of types (colours) in $H$ which induce long edges we remove all long edges in $H$ and obtain graph $H'$. It is easily shown that $H'$ is interpretable in $G$ (equipped with colours) by an interpretation which acts only locally and thus $H'$ has locally bounded clique-width. 

Similarly to the case of differentially simple graph classes, it is not clear whether one can compute the colourings from Definition~\ref{def:comp_simple} efficiently (fpt with respect to $r$) in the case of interpretations of graphs with locally bounded treewidth. In light of recent result of~\cite{rw_stable}, it is perhaps sensible to study graph classes which can be obtained form graph classes of locally bounded stable clique-width by a bounded number of complementations and then attempt to find an algorithm which `reverses' these complementations in the spirit of~\cite{BE_complements}.

Overall, the relationship between differentially simple and complement-simple graph classes is not clear and probably deserves further study. 

\subsection{Open problems}
We conclude with several open problems and possible directions for future research.
\begin{enumerate}
    \item \label{stable} Is it true that for any stable class $\CCC$ of graphs and any $q$ there exists $p$ such that every independent set in the graph of the relation $\cong_q^D$ has size at most $p$?
    \item Let $\CCC$ be a class of graphs interpretable in graph classes of locally bounded treewidth. Is there a polynomial algorithm which for every $G \in \CCC$ computes a colouring such that for every $r$ and every $u,v \in V(G)$ it holds that $G[DN_r[u,v]]$ has small clique-width (depending on $r$)? If not, is there an fpt algorithm which computes such colouring for every $r$?
    \item What is the relationship between differentially simple and complement-simple graph classes?
    \item Is it possible to use an approach based on differential games to give simpler/different algorithms for the FO model checking problem on graph classes of bounded expansion or nowhere dense graph classes than the algorithms presented in~\cite{dvorakkt10} and~\cite{gks14}? If yes, is it possible to use it to extend these results to interpretations of nowhere dense graph classes?
    \item More generally, if the answer to Question~\ref{stable} is yes, is it possible to use our methods to attack the FO model checking problem on stable graph classes?
    \item Is there a useful normal form for FO formulas (say, similar to Gaifman normal form) based on differential neighbourhoods and the formulas $\xi_r$?
    \item A recent result of~\cite{rw_stable} suggest the following questions. Is it true that a graph class $\CCC$ is interpretable in a class of graphs of locally bounded treewidth if and only if it is complement-simple, where each $\DDD_r$ from Definition~\ref{def:comp_simple} is a stable graph class of bounded clique-width? Is this true for differentially simple graph classes? The proof of Lemma~\ref{lem:lbdtw_diffsimple} suggests that one may consider these questions also with slightly adjusted (and weaker) definitions~\ref{def:comp_simple} and~\ref{def:diff_simple} in which we would require the existence of a single $m$ which works for every $r$.
    \item The approach to FO model checking outlined in Section~\ref{sec:mc} works for any class of graphs and not just on interpretations of nowhere dense graph classes (or possibly stable graph classes). Is it possible to use this approach to FO model checking to obtain efficient algorithms for FO model checking on non-stable graph classes? This would require a different way of computing relation $\sim_{q,t}$ from Corollary~\ref{cor:mc}.
\end{enumerate}

\bibliography{diffgames}
\bibliographystyle{plain}

\end{document}